\let\csname equation*\endcsname\undefined
\let\csname endequation*\endcsname\undefined
\definecolor{darkred}  {rgb}{0.5,0,0}
\definecolor{darkblue} {rgb}{0,0,0.5}
\definecolor{darkgreen}{rgb}{0,0.5,0}
\definecolor{cool_green}{rgb}{0.0, 0.5, 0.0}
\theoremstyle{definition}
\newtheorem{corollary}{Corollary}
\newtheorem{definition}{Definition}
\newtheorem{lemma}{Lemma}
\newtheorem{proposition}{Proposition}
\newtheorem{theorem}{Theorem}
\newtheorem{example}{Example}
\newtheorem*{theorem*}{Theorem}
\newcommand{\ve}{\varepsilon}
\newcommand{\LOSQC}{\mathrm{LOSQC}}
\renewcommand{\BB}{\mathrm{BB84}}
\newcommand{\linspan}{\mathrm{span}}
\newtheorem*{remark}{Remark}
\newcommand{\mbf}{\mathbf}
\newcommand{\mbb}{\mathbb}
\newcommand{\mbbm}{\mathbbm}
\newcommand{\mc}{\mathcal}
\newcommand{\wt}{\widetilde}
\newcommand{\ol}{\overline}
\newcommand{\Trans}{\mathrm{T}}
\newcommand{\Pos}{\mathrm{Pos}}
\newcommand{\id}{\textrm{id}}
\newcommand{\cE}{\mathcal{E}}
\newcommand{\cI}{\mathcal{I}}
\newcommand{\cK}{\mathcal{K}}
\newcommand{\cM}{\mathcal{M}}
\newcommand{\cS}{\mathcal{S}}
\newcommand{\cX}{\mathcal{X}}
\newcommand{\cZ}{\mathcal{Z}}
\newcommand{\TD}{D_{\text{tr}}}
\begin{document}
\title{Orthogonality Broadcasting and Quantum Position Verification}

\author{Ian George$^1$, Rene Allerstorfer$^2$, Philip Verduyn Lunel$^3$, and Eric Chitambar$^4$}
\address{$^1$Department of Electrical and Computer Engineering, Centre for Quantum Technologies, National University of Singapore, Singapore}
\address{$^2$QuSoft, CWI Amsterdam, The Netherlands}
\address{$^3$Sorbonne Universit\'{e}, CNRS, LIP6, France}
\address{$^4$Department of Electrical and Computer Engineering,
University of Illinois at Urbana-Champaign}
\ead{qit.george@gmail.com} 

\begin{abstract}
The no-cloning theorem leads to information-theoretic security in various quantum cryptographic protocols. However, this security typically derives from a possibly weaker property that classical information encoded in certain quantum states cannot be broadcast.  To formally capture this property, we introduce the study of ``orthogonality broadcasting."  When attempting to broadcast the orthogonality of two different qubit bases, we establish that the power of classical and quantum communication is equivalent. However, quantum communication is shown to be strictly more powerful for broadcasting orthogonality in higher dimensions. We then relate orthogonality broadcasting to quantum position verification and provide a new method for establishing error bounds in the no pre-shared entanglement model that can address protocols previous methods could not. Our key technical contribution is an uncertainty relation that uses the geometric relation of the states that undergo broadcasting rather than the non-commutative aspect of the final measurements.
\end{abstract}

\maketitle 

\section{Introduction}

The famous no-cloning principle in quantum mechanics prohibits the copying of non-orthogonal quantum states \cite{Wootters-1982a}.  More precisely, if $\ket{\psi}$ and $\ket{\varphi}$ are two non-orthogonal states of some quantum system, then there does not exist any physically realizable mapping $U$ such that $U\ket{\psi}=\ket{\psi}^{\otimes 2}$ and $U\ket{\varphi}=\ket{\varphi}^{\otimes 2}$.  This fundamental fact provides a foundation for designing cryptographic protocols \cite{Broadbent-2020a}, with specific examples being quantum key distribution \cite{Bennett-2014a}, secret sharing \cite{Cleve-1999a}, and position verification \cite{Kent-2011a}.  

Quantum broadcasting extends cloning to a communication setting and relaxes the goal to %be just
duplication on the level of reduced density matrices.  That is, a set of states $\{\rho_i\}_i$ on system $S$ is broadcast to Alice ($A$) and Bob ($B$) if there exists a completely-positive trace-preserving (CPTP) map $\mc{E}^{S\to AB}$ such that $\mc{E}(\rho_i)=\sigma_i^{AB}$ with $\tr_A(\sigma_i^{AB})=\rho_i$ and $\tr_B(\sigma^{AB})=\rho_i$ for all $i$.  %Note that in this task there is no restriction on the global states $\sigma_i^{AB}$ other than them having the correct marginals.  However, 
Similar to the no-cloning theorem, it has been shown that quantum mechanics \cite{Barnum-1996a, Lindblad-1999a, Kalev-2008a} and even more general theories of nature \cite{Barnum-2007a} prohibit universal broadcasting.  Only if the input states $\{\rho_i\}_i$ are pairwise commuting can such a broadcasting map $\mc{E}^{S\to AB}$ be constructed.

\begin{figure}[b]
\centering
\includegraphics[width=0.5\textwidth]{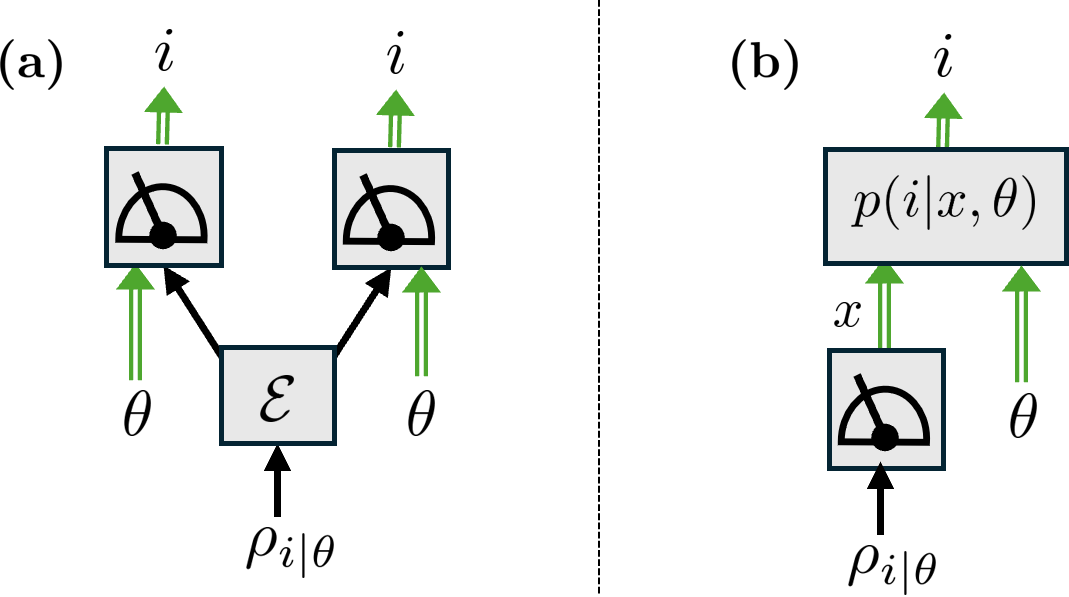}
\caption{\footnotesize (a) Orthogonality broadcasting a set of states $\mc{S}_{\Theta}=\{\rho_{i|\theta}\}_{i,\theta}$ with $\rho_{i|\theta}\perp\rho_{j|\theta}$ for all $\theta\in\Theta$ and $i\not=j$.  A map $\mc{E}$ distributes the orthogonality of $\rho_{i|\theta}$ so that both receivers can recover the index $i$ given value $\theta$.  (b) In a second task, a quantum measurement must be performed on $\rho_{i|\theta}$ \textit{before} receiving $\theta$, which is used only later to recover the index $i$ in classical post-processing.  Orthogonality broadcasting reduces to (b) when one of the broadcast systems is classical. In all figures, green, double-line arrows represent classical information, and black, single-line arrows represent quantum information.
}
\label{Fig:orthogonality_broadcasting}
\end{figure}

In this work we propose a weaker form of broadcasting called \textit{orthogonality broadcasting}.  Let $\mc{S}_{\Theta}=\{\rho_{i|\theta}\}_{i\in\mc{I}_\theta,\theta\in\Theta}$ be $\vert \Theta \vert$ sets of pairwise orthogonal states $\{\rho_{i}\}_{i \in \cI_{\theta}}$, i.e. for all $\theta \in \Theta$ and $i \neq j \in \cI_{\theta}$, $\rho_{i|\theta}\perp \rho_{j|\theta}$.  We say that a CPTP map $\mc{E}^{S\to AB}$ broadcasts the orthogonality of $\mc{S}_{\Theta}$ if 
\[\tr_A(\sigma^{AB}_{i|\theta})\perp \tr_A(\sigma^{AB}_{j|\theta})\;\;\text{and}\;\;\tr_B(\sigma^{AB}_{i|\theta})\perp \tr_B(\sigma^{AB}_{j|\theta})
\]
for all  $\theta, i\not= j$, where $\sigma_{i|\theta}=\mc{E}(\rho_{i|\theta})$ (see Fig.~\ref{Fig:orthogonality_broadcasting} (a)).  Orthogonality broadcasting can be understood as a basic quantum communication task with post-information.  Suppose that a quantum system is prepared in some unknown state chosen from an ensemble like $\mc{S}$.  For example, this could be one of the BB84 states $\{\ket{0},\ket{1},\ket{+},\ket{-}\}$.  Once the the value $\theta$ is revealed as ``post-information,'' the system state can be determined by measuring in the correct orthonormal basis; e.g in the BB84 case, this basis is either $\{\ket{0},\ket{1}\}$ or $\{\ket{+},\ket{-}\}$.  Orthogonality broadcasting enables both Alice and Bob to learn the state preparation through local measurement and post-information.  This goal is less demanding than state broadcasting since all we require is that the correct orthogonality be preserved by the reduced density matrices of the distributed state $\sigma_{i|\theta}^{AB}$.  Nevertheless, through the map $\mc{E}^{S\to AB}$, orthogonality broadcasting still requires copying some type of information that is encoded in a family of potentially non-orthogonal states.  It can thus be viewed as another communication primitive, in the same spirit as cloning and state broadcasting.

A special type of orthogonality broadcasting arises when one of the output systems is classical.  In this case, the broadcasting map takes the form $\mc{E}^{S\to AX}$, with $X$ being a classical register, and we refer to this as \textit{classical orthogonality broadcasting}.  It is important to note that when describing $\theta$ as post-information in the context of broadcasting, it means that $\theta$ arrives after the application of the broadcasting map.  However, there is another notion of post-information in which $\theta$ arrives after the quantum measurement  \cite{Ballester-2008a, Carmeli-2022a}.  However, as we observe in Proposition \ref{Prop:classical-orthogonal-broadcasting} below, under a natural measure of error, these two notions of post-information are operationally equivalent when restricting to classical orthogonality broadcasting, meaning that the optimal success probability of identifying $i$ in one task is the same as the other.

\subsection*{Uncloneable Cryptography and Quantum Position Verification}
\begin{figure}[H]
\centering
\begin{tikzpicture}
	\draw[->,black,thick] (-0.5,-3) -- (-0.5,0.5) node[above]{$t$};
	\draw[->,black,thick] (-0.5,-3) -- (3.5,-3) node[below]{$x$};

    \draw[darkgreen, double]  (0,0) -- (0.5,-0.5) node[midway, above, sloped]{$k$};
    \draw[black]  (0.5,-0.5) -- (2,-2);
    \draw[black]  (0.5,-0.5) -- (0.5,-2.5);
    \draw[black]  (0.5,-2.5) -- (0,-3) node[midway, above, sloped]{\small $\ket{a_k}$};
    \filldraw [black] (0,-3) circle (2pt) node[below]{$\mathsf{V_A}$};

    \draw[darkgreen, double]  (3,0) -- (2,-1) node[midway, above, sloped]{$k$};
    \draw[black]  (2,-1) -- (0.5,-2.5);
    \draw[black]  (2,-1) -- (2,-2);
    \draw[black]  (2.15,-2.15) -- (3,-3) node[midway, above, sloped]{\small $\ket{b_k}$};
    \filldraw [black] (3,-3) circle (2pt) node[below]{$\mathsf{V_B}$};      

    \filldraw[darkred] (0.35,-0.65) rectangle ++(0.3,0.3);
    \filldraw[darkred] (1.85,-1.15) rectangle ++(0.3,0.3);
    \filldraw[darkred] (0.35,-2.65) rectangle ++(0.3,0.3);
    \filldraw[darkred] (1.85,-2.15) rectangle ++(0.3,0.3);

    \filldraw[darkblue] (1.5,-1.5) circle (3pt);
    \draw[black] (1.5,-1.75) node {$\mbf{x}$};

    \filldraw [black] (0,0) circle (2pt);
    \filldraw [black] (3,0) circle (2pt);
    
    \filldraw [darkred] (0.5,-3) circle (2pt) node[below]{$\mathsf{A}$};
    \filldraw [darkred] (2,-3) circle (2pt) node[below]{$\mathsf{B}$};

    \draw[black] (-0.75,-3) node[] (2pt) {$t_{0}$};
    \draw[black] (-0.75,0) node[] (2pt) {$t_{1}$};
\end{tikzpicture}
\caption{\footnotesize Spacetime diagram of a QPV protocol using GOP set $\mc{S}=\{\ket{a_k}\ket{b_k}\}_k$. With probability $p(k)$, the verifiers $\mathsf{V_A}$, $\mathsf{V_B}$ simultaneously send $\ket{a_k}$ and $\ket{b_k}$ respectively at time $t_0$. An honest prover at spacetime point $\mbf{x}$ could measure jointly on the states and identify $k$.  In contrast, dishonest provers, $\mathsf{A}$ and $\mathsf{B}$, would need to clone the orthogonality of $\ket{a_{k}}$ and $\ket{b_{k}}$ at the two lower boxes in order to return the correct value $k$ at time $t_1$. The adversaries' strategy can be limited by the same tools as studying orthogonality broadcasting (Section \ref{sec:QPV}). The communication setting of the dishonest provers is an example of ``local operations and simultaneous quantum communication" (LOSQC).}
\label{fig:QPV}
\end{figure}
Beyond the foundational importance of understanding the no-cloning theorem in quantum information theory, our primary motivation for studying both classical and non-classical orthogonality broadcasting is quantum cryptography. Most quantum cryptographic protocols derive their security from the no-cloning theorem on some level because the protocol `hides' some classical information in non-commuting quantum states that an adversary cannot perfectly distinguish nor clone. This basic idea has given rise to the study of `uncloneable cryptography' (See \cite{Sattath-2023a} for an introduction). 

A major method for establishing security in uncloneable cryptography has been monogamy-of-entanglement (MoE) games introduced in \cite{Tomamichel-2013a} and used subsequently, e.g.~\cite{Broadbent-2020a,Broadbent-2023a,Ananth-2023a, Goyal-2024a,Poremba-2024a,Escola-2024lossy}. However, MoE games do not directly capture the limitations of cloning quantum states, but rather test for a property of the entangled state that arises when trying to clone non-commuting states. Indeed, concurrent work \cite{Poremba-2024a} defines a specific case of orthogonality broadcasting as a `$1 \to 2$ cloning game' and shows it to be equivalent to a \textit{restricted} variant of the MoE game. This in total means that orthogonality broadcasting can only be captured by a (restricted) MoE game under special conditions. As MoE games and orthogonality broadcasting are generally distinct, they require independent studies. For clarity, \ref{app:ob-cloning-games-and-MoE} includes a detailed discussion on the relationship between orthogonality broadcasting and MoE games.

A more concrete motivating example comes from establishing the security of quantum position verification (QPV) in the no pre-shared entanglement (no-PE) model. The limitations on orthogonality broadcasting underpins the security of many QPV protocols under the no-PE model. In these QPV protocols, at time $t_{0}$, spatially separated verifiers draw a state from a set of globally orthogonal product (GOP) quantum states, $\{\ket{a_{k}}\ket{b_{k}}\}_{k}$, and send it to claimed location of the honest prover (See Fig.~\ref{fig:QPV}). The verifiers `accept' if at time $t_{1}$ they both receive the correct index $k$ of the state sent. The honest prover can always achieve this by applying the projective measurement that discriminates the set $\{\ket{a_{k}}\ket{b_{k}}\}_{k}$ at the honest position. However, \textit{dishonest} provers will not be located at the appropriate spacetime point, but spatially separated around this point. Thus they have to attempt to clone the \textit{orthogonality} of $\ket{a_{k}}$ and $\ket{b_{k}}$.\footnote{We stress orthogonality as they only need to be able to extract the index $k$ rather than hold copies of the states $\ket{a_{k}}$,$\ket{b_{k}}$.} The simplest example of such a QPV protocol is `BB84 QPV' where the set of $\{\ket{a_{k}}\ket{b_{k}}\}$ is given by
\begin{align}
	\{\ket{0}^{A}\ket{0}^{B} \, , \, \ket{0}^{A}\ket{1}^{B} \, , \, \ket{1}^{A}\ket{+}^{B} \, , \, \ket{1}^{A}\ket{-}^{B} \} \ . 
\end{align}
In this case, the adversaries' optimal attack is to approximate orthogonality broadcasting as much as possible. This is because the party who receives system $B$ acts as the broadcasting channel $\cE$ and the party who receives the fully classical state, system $A$, copies it and forwards it, acting as the value $\theta \in \Theta$ arriving after broadcasting (recall Fig.~\ref{Fig:orthogonality_broadcasting}(a)). Note this exact identification holds so long as one of the registers is always classical and the adversaries hold no entanglement, i.e. one considers the no-PE model.

In principle, this above correspondence alone would motivate the study of orthogonality broadcasting. However, in the case of BB84 QPV, it turns out the reduction to an MoE game is tight \cite{Tomamichel-2013a}, which is likely why the limitations of cloning and MoE games have been treated as inextricable while distinct. Nonetheless, it is easy to break this exact connection to MoE games: if the set $\{\ket{a_{k}}^{A}\ket{b_{k}}^{B}\}_{k}$ is not of the form $\{\ket{\theta}^{A}\, U_{\theta}\ket{x}^{B}\}_{x,\theta}$ where $\{U_{\theta}\}_{\theta}$ is some set of unitaries acting on the $B$ space, the reduction to an MoE game breaks down (see \ref{app:ob-cloning-games-and-MoE} for more details). The set of states not satisfying this aforementioned structure can happen if both parties receive a quantum input or simply by not having the $\ket{b_{k}}$ be generated by $x$ and a unitary. Both of these situations are natural to consider in designing more powerful, practical QPV protocols as well as studying the structure of no-cloning.

\subsection*{Summary of Results and Outline}
At a high-level, this work contributes to three domains: the foundations of quantum mechanics, the security of quantum position verification, and, at a technical level, the development of quantitative uncertainty relations. We summarize the contribution to these respective fields below. Any omitted proofs from the main text are collected in the appendices. \\

\noindent \textbf{Contribution 1: Orthogonality Broadcasting and Foundations of Quantum Mechanics} In Section \ref{sec:orthogonality-broadcasting}, we introduce orthogonality broadcasting and analyze both the perfect and approximate setting. For perfect orthogonality broadcasting, we show a separation between the power of using classical and quantum communication (Theorem \ref{Thm:classical-broadcasting-separation}) and further demonstrate that perfect orthogonality broadcasting sometimes requires the use of entangled states (Theorem \ref{Thm:no-sep-comm}). We however show that when one system is a qubit system, there is generally not an advantage to quantum communication (Proposition \ref{Prop:classical-broadcasting-equal-2}). These results also imply a similar property in the communication setting of a QPV protocol without entangled inputs (Theorem \ref{thm:C2-Cd-LOSCC}). This implies a separation between adversaries having access to classical and quantum communication when attacking a QPV protocol without entangled inputs in the no-PE model (Corollary \ref{cor:LOSQC-LOSCC-separation}). This resolves an open question from \cite{Allerstorfer-2022a}. As previously highlighted, orthogonality broadcasting is a relaxation of state broadcasting which is itself a relaxation of cloning. Thus, these results refine our understanding of the structure of quantum mechanics through an information-theoretic lens. \\
	
\noindent \textbf{Contribution 2: The Security of QPV in the no-PE model} In Section \ref{sec:QPV}, we consider QPV protocols where the task of the honest party is to distinguish globally orthogonal product (GOP) states $\{\ket{a_k}\ket{b_k}\}_k$ (see Fig.~\ref{fig:QPV}). We identify the adversaries' communication structure in attacking such QPV protocols in the no pre-shared entanglement (no-PE) model as optimizing strategies that make up `local operations and simultaneous quantum communication' (LOSQC). We show how LOSQC relates to orthogonality broadcasting, either through an exact correspondence (as described previously in the case of BB84 QPV) or via relaxations for more involved GOP states. This allows us to upper bound the success probability of state discrimination on GOP ensembles which, by the exact correspondence between LOSQC and QPV attacks in the no-PE model, implies security bounds on these QPV protocols. 

Our method culminates in three theorems. The first (Theorem \ref{thm:discrimination-cond}) provides bounds for any set of product states that contains four product states that generalize the structure of the BB84 states. This may be applied to many well-known GOP ensembles and recovers the tight bound for BB84 QPV (Corollary \ref{cor:LOSQC-BB84}). The second (Theorem \ref{thm:Overlapping-BB84}) establishes bounds when the set of globally orthogonal product states to discriminate are in qutrit space such that the ``error per state'' outperforms that of BB84 QPV. Theorem \ref{thm:Overlapping-BB84} implies a significant practical advantage in using only slightly higher-dimensional QPV protocols. The third (Theorem \ref{thm:qq-qc-separation}) demonstrates that the adversarial success probability can strictly decrease when requiring \textit{both} adversaries to broadcast locally non-commuting states. While intuitive, to the best of our knowledge, this is the first method that rigorously establishes the added strength of QPV protocols in which both verifiers send quantum states. In particular, in \ref{app:ob-cloning-games-and-MoE}, we show the ensembles considered in Theorems \ref{thm:Overlapping-BB84} and \ref{thm:qq-qc-separation} cannot be analyzed using MoE games with pre-existing methods.  \\

\noindent \textbf{Contribution 3: A New Type of Uncertainty Relation} Contributions 1 and 2 stem from a more direct analysis of no-cloning than previous methods. In order to achieve this more direct analysis, one of the main technical contributions of this work is a new uncertainty relation that only depends on the relation between the given states to be cloned (Lemma \ref{lem:main-text-UR} and Theorem \ref{thm:generalized-uncertainty-relation}), and thus can be used to bound the error in orthogonality broadcasting of more general ensembles than previous methods. In other words, our uncertainty relation differs from entropic uncertainty relations studied in quantum information theory \cite{Deutsch-1983a, Renes-2009a, Berta-2010a, Coles-2017a} as well as the operator inequality used to analyze Monogamy-of-Entanglement games \cite{Tomamichel-2013a}, which all rely on incompatible measurements. To the best of our knowledge, this is the first such uncertainty relation of its type and we hope it motivates more general and direct methods for bounding strategies based on cloning and thus more direct applications to cryptographic protocols whose security depends on the no-cloning theorem.

\section{Orthogonality broadcasting}\label{sec:orthogonality-broadcasting}

Let us analyze the task of orthogonality broadcasting in more detail.  To prepare us for its application in QPV, suppose that Alice receives an ensemble of states $\mc{S}_\Theta = \{\rho_{i|\theta}\}_{i,\theta}$ that are pairwise orthogonal for each fixed $\theta\in\Theta$. Her goal is to broadcast the orthogonality of this ensemble to herself and Bob. We label the input system as $A$ and the output systems as $A'B'$.  Let us proceed by distinguishing between the cases of perfect and approximate broadcasting.

\subsection{Perfect broadcasting}

It is often convenient to assume that Alice's broadcasting map takes the form of an isometry $U:A\to A'B'$.  This can always be done without loss of generality since a CPTP map can be built by combining an isometry with a discarding of subystems, and the latter can always be done by Alice or Bob after the broadcasting.  For isometry $U$, Alice then receives the output of the quantum channel $\mc{N}(\cdot)=\tr_{B'}U(\cdot)U^\dagger$ while Bob receives the output of its complementary channel $\mc{N}^c(\cdot)=\tr_{A} U(\cdot)U^\dagger$.  From this it follows that $\mc{S}_\Theta$ admits orthogonality broadcasting if and only if there exists a channel $\mc{N}$ such that
\begin{equation}
\; \Tr[\mc{N}\left(\rho_{i|\theta}\right)\mc{N}\left(\rho_{j|\theta}\right)]=\Tr[\mc{N}^c\left(\rho_{i|\theta}\right)\mc{N}^c\left(\rho_{j|\theta}\right)]=0\label{Eq:orth-broadcast-channel}
\end{equation}
for all $\theta_\Theta$ and $i\not=j$.

In the special case of classical orthogonality broadcasting, when Alice can only communicate classical information to Bob, the isometry $U$ gets replaced by a channel $\mc{E}:A\to A'X$ with a quantum-classical (qc) output.  We can express this as $\mc{E}(\cdot)=\sum_x \mc{A}_x(\cdot)\otimes\op{x}{x}$ with the $\{\mc{A}_x\}_x$ being a family of CP maps whose sum is CPTP (i.e. an instrument).  The condition for orthogonality on Bob's side is $\Tr[\mc{A}_x(\rho_{i|\theta})]\Tr[\mc{A}_x(\rho_{j|\theta})]=0$ for all $x$. Letting $\Pi_x=\mc{A}_x^\dagger(\mbb{I})$ denote elements of a POVM, where $\mc{A}_x^\dagger$ is the adjoint map, we can thus express the condition for orthogonality on Bob's side as the existence of a POVM $\{\Pi_x\}_x$ satisfying
\begin{equation}
\label{Eq:classical-broadcast-condition}
\Tr[\Pi_x\rho_{i|\theta}]\Tr[\Pi_x\rho_{j|\theta}]=0
\end{equation}
for all $x$, $\theta$, and $i\not=j$.  In other words, for every fixed $\theta$, there is at most one value $i$ such that $\Tr[\Pi_x\rho_{i|\theta}]\not=0$.   This means that the POVM $\{\Pi_x\}_x$ is able to perfectly identify the value $i$ given $x$ and $\theta$. In the next section, we will generalize this finding to not only perfect identification but also minimum error state discrimination.  For now, we consider the comparative powers of classical and non-classical orthogonality broadcasting in the case of zero error. 

We begin with a simple example that shows that perfect orthogonality broadcasting is possible using only classical communication even when not all states in $\cS_{\Theta}$ are simultaneously diagonalizable. This is in contrast to cloning and state broadcasting described in the introduction, which both require the ensemble to consist of pairwise commuting states.  Moreover, it shows that even for qutrits such examples exist, which highlights the nuance of orthogonality broadcasting in any dimension greater than two.
\begin{proposition}\label{prop:qubit-qutrit-example}
    Let
    \begin{align}\label{eq:minimal-ensemble}
        \cS_{\Theta} = \begin{Bmatrix} \rho_{0|0} \coloneq \ket{0} & \rho_{1|0} \coloneq \ket{1} \\
        \rho_{0|1} \coloneq \ket{\psi_{+}} & \rho_{1|1} \coloneq \ket{\psi_{-}}
        \end{Bmatrix} \ ,
\end{align}
where $\ket{\psi_{\pm}} := \frac{1}{2}\ket{0} + \frac{1}{2}\ket{1} \pm \frac{1}{\sqrt{2}}\ket{2} = \frac{1}{\sqrt{2}}(\ket{+} \pm \ket{2})$. This set admits perfect classical orthogonality broadcasting.
\end{proposition}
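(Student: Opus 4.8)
The plan is to reduce the statement to the construction of a single POVM and then exhibit one. Recall from the paragraph preceding Eq.~(\ref{Eq:classical-broadcast-condition}) that preserving orthogonality on the classical register $X$ amounts to the existence of a POVM $\{\Pi_x\}_x$ with $\Tr[\Pi_x\rho_{i|\theta}]\Tr[\Pi_x\rho_{j|\theta}]=0$ for all $x,\theta$ and $i\neq j$. I would first observe that such a POVM already suffices for the full task: let Alice measure $\{\Pi_x\}_x$ and copy the outcome into both output registers via $\cA_x(\rho)=\Tr[\Pi_x\rho]\op{e_x}{e_x}$ and $\cE(\rho)=\sum_x\cA_x(\rho)\otimes\op{x}{x}$, keeping the $A'$ factor and sending $X$ to Bob. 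Both reduced states are then diagonal with identical support in $x$, so the above condition forces orthogonality to be preserved on Alice's and Bob's sides simultaneously. Hence it suffices to produce one POVM satisfying Eq.~(\ref{Eq:classical-broadcast-condition}), after the trivial check that $\cS_\Theta$ is a valid ensemble: $\braket{0}{1}=0$ and $\braket{\psi_+}{\psi_-}=\tfrac14+\tfrac14-\tfrac12=0$.

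To build the POVM, I would use that for $\Pi_x\geq 0$ a vanishing diagonal entry $\bra{\phi}\Pi_x\ket{\phi}=0$ forces $\Pi_x\ket{\phi}=0$; thus Eq.~(\ref{Eq:classical-broadcast-condition}) demands that each element annihilate one state of $\{\ket{0},\ket{1}\}$ and one of $\{\ket{\psi_+},\ket{\psi_-}\}$. In the qutrit space the orthogonal complement of such a pair is one-dimensional, which isolates exactly four candidate directions,
\[
\ket{u_1}\propto\sqrt{2}\ket{1}-\ket{2},\qquad \ket{u_2}\propto\sqrt{2}\ket{1}+\ket{2},
\]
\[
\ket{u_3}\propto\sqrt{2}\ket{0}-\ket{2},\qquad \ket{u_4}\propto\sqrt{2}\ket{0}+\ket{2},
\]
where e.g. $\ket{u_1}\perp\linspan\{\ket{0},\ket{\psi_+}\}$ and $\ket{u_4}\perp\linspan\{\ket{1},\ket{\psi_-}\}$. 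Taking $\Pi_x=\tfrac14\op{u_x}{u_x}$, one reads off directly that each $\Pi_x$ annihilates precisely one state of each pair (and fires on the other), which is Eq.~(\ref{Eq:classical-broadcast-condition}).

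The one genuinely substantive step—and the place the construction could a priori fail—is completeness $\sum_x\Pi_x=\mbb{I}$. I expect this to be the main obstacle, i.e. the nontrivial content is that four complement directions happen to resolve the identity. It does work out by the $\pm$ symmetry: in $\op{u_1}{u_1}+\op{u_2}{u_2}$ the off-diagonal $\op{1}{2},\op{2}{1}$ terms cancel to leave $4\op{1}{1}+2\op{2}{2}$, and symmetrically $\op{u_3}{u_3}+\op{u_4}{u_4}=4\op{0}{0}+2\op{2}{2}$, so the four terms sum to $4\,\mbb{I}$ and the factor $\tfrac14$ makes $\{\Pi_x\}$ a legitimate POVM. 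Plugging this POVM into the map of the first paragraph then gives perfect classical orthogonality broadcasting of $\cS_\Theta$, as required; note that the two bases do not commute, so this is genuinely beyond what state broadcasting permits, the extra dimension $\ket{2}$ being what makes the four annihilating directions available.
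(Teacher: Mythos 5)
Your proof is correct and follows essentially the same route as the paper: the four rank-one elements $\tfrac14\op{u_x}{u_x}$ are exactly the paper's $\Pi_x$ (e.g.\ $\tfrac14\op{u_4}{u_4}$ equals the paper's $\Pi_0$), and the reduction to a single POVM via a fully classical broadcast map is the same device the paper invokes through Proposition \ref{Prop:classical-orthogonal-broadcasting}. The only difference is presentational — you derive the elements as the unique directions orthogonal to each cross-pair and verify completeness by the $\pm$ cancellation, whereas the paper writes the matrices down and checks eigenvalues and the outcome table directly.
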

\begin{proof}
We show that perfect discrimination is possible by a broadcasting map with both parties receiving classical information, i.e. $\mc{E}(\cdot)=\sum_{x=0}^3\tr[\Pi_x(\cdot)]\op{x}{x}\otimes\op{x}{x}$, a general fact that holds for all classical orthogonal broadcasting (see Proposition \ref{Prop:classical-orthogonal-broadcasting}).  Consider the set of matrices
\begin{align*}
    \Pi_{0} \coloneq \begin{bmatrix}
        \frac{1}{2} & 0 & \frac{1}{2\sqrt{2}} \\
        0 & 0 & 0 \\
        \frac{1}{2\sqrt{2}} & 0 & \frac{1}{4}
    \end{bmatrix}
    \quad 
     & \Pi_{1} \coloneq \begin{bmatrix}
        \frac{1}{2} & 0 & -\frac{1}{2\sqrt{2}} \\
        0 & 0 & 0 \\
        -\frac{1}{2\sqrt{2}} & 0 & \frac{1}{4}
    \end{bmatrix} \\
    \Pi_{2} \coloneq \begin{bmatrix}
        0 & 0 & 0 \\
        0 & \frac{1}{2} & \frac{1}{2\sqrt{2}} \\ 
        0 & \frac{1}{2\sqrt{2}} & \frac{1}{4} 
    \end{bmatrix}
    \quad & 
    \Pi_{3} := \begin{bmatrix}
        0 & 0 & 0 \\
        0 & \frac{1}{2} & -\frac{1}{2\sqrt{2}} \\ 
        0 & -\frac{1}{2\sqrt{2}} & \frac{1}{4} 
    \end{bmatrix} \ . 
\end{align*}
A direct calculation will verify that each of these matrices have eigenvalues $\{3/4,0,0\}$ and by inspection they sum to identity. Thus, $\{\Pi_{i}\}_{i \in [3]}$ is a POVM. By direct calculation using Born's rule, one may verify that if Alice measures the input state with the given POVM, her possible (non-zero probability) outcomes are captured by the following table:
\begin{table}[H]
    \centering
    \begin{tabular}{c|c}
        State $\rho_{i|\theta}$ & Possible Outcomes \\ \hline
        $\rho_{0|0}$ & $\{0,1\}$ \\
        $\rho_{1|0}$ & $\{2,3\}$ \\
        $\rho_{0|1}$ & $\{0,2\}$ \\
        $\rho_{1|1}$ & $\{1,3\}$
    \end{tabular} . 
\end{table}

It follows that her possible outcomes partition upon receiving $\theta \in \{0,1\}$. Thus, if Alice measures the state with the given POVM and broadcasts the answer, then when Alice and Bob receive $\theta$, they can both determine the state correctly.
\end{proof}

It may be surprising that the communication of classical information suffices in the above example. Proposition \ref{Prop:classical-broadcasting-equal-2} below implies that in fact classical and quantum communication are equally powerful for perfect orthogonality broadcasting for all ensembles with $|\Theta|=2$. However, the following theorem shows a separation already exists as soon as $|\Theta|>2$.
\begin{theorem}
\label{Thm:classical-broadcasting-separation}
      Let
      \begin{align}
        \mc{S}_{\Theta} = 
        \begin{Bmatrix} 
            \rho_{0\vert0} \coloneq \ket{+_{01}} & \rho_{1\vert0} \coloneq \ket{-_{01}} \\
            \rho_{0\vert1} \coloneq \ket{+_{02}} & 
            \rho_{1\vert1} \coloneq \ket{-_{02}} \\
            \rho_{0\vert2} \coloneq \ket{\wt{+}_{12}} & \rho_{1\vert2} \coloneq \ket{\wt{-}_{12}}
        \end{Bmatrix} 
      \end{align} 
      be $|\Theta|=3$ pairs of orthogonal states, where $\ket{\pm_{mn}}=\tfrac{1}{\sqrt{2}}(\ket{m}\pm\ket{n})$ and $\ket{\wt{\pm}_{mn}}=\tfrac{1}{\sqrt{2}}(\ket{m}\pm i\ket{n})$.  Then it is possible to broadcast the orthogonality of $\mc{S}_{\Theta}$, yet classical orthogonality broadcasting of $\mc{S}_{\Theta}$ is impossible.
\end{theorem}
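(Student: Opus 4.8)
The statement has two halves, which I treat separately: achievability of orthogonality broadcasting with quantum communication, and impossibility with only classical communication.

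\smallskip
\noindent\emph{Achievability (quantum broadcasting is possible).} The plan is to exhibit an explicit isometry $U:A\to A'B'$ and verify \eqref{Eq:orth-broadcast-channel} directly for its induced channel $\mc{N}(\cdot)=\tr_{B'}U(\cdot)U^\dagger$ and complement $\mc{N}^c(\cdot)=\tr_{A'}U(\cdot)U^\dagger$; that is, $\tr[\mc{N}(\rho_{i|\theta})\mc{N}(\rho_{j|\theta})]=\tr[\mc{N}^c(\rho_{i|\theta})\mc{N}^c(\rho_{j|\theta})]=0$ for all three $\theta$ and $i\neq j$. The design constraint is that \emph{both} marginals must simultaneously separate each orthogonal pair, and this rules out the naive dilations. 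If $U$ is built so that $\mc{N}$ keeps a processed copy of the state while appending a state-independent flag (e.g. a convex ``which-branch'' mixture whose Kraus operators have orthogonal ranges), then the complement $\mc{N}^c$ records only the classical flag and becomes state-independent, so $B'$ learns nothing and \eqref{Eq:orth-broadcast-channel} fails on the $B'$ side. Hence $U$ must coherently correlate (entangle) the two outputs, consistent with the general phenomenon that orthogonality broadcasting can require entangled output states. I would therefore search for an entangling $U$ on $A'B'=\mathbb{C}^3\otimes\mathbb{C}^3$ whose two marginals each yield orthogonal supports within every pair, and verify \eqref{Eq:orth-broadcast-channel} by direct computation.

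\smallskip
\noindent\emph{Impossibility (classical broadcasting).} Here I would work directly from \eqref{Eq:classical-broadcast-condition}: classical orthogonality broadcasting exists iff there is a single POVM $\{\Pi_x\}$ with $\tr[\Pi_x\rho_{i|\theta}]\tr[\Pi_x\rho_{j|\theta}]=0$ for all $x,\theta,i\neq j$. Since each $\rho_{i|\theta}=\op{\psi_{i|\theta}}{\psi_{i|\theta}}$ is pure and $\Pi_x\succeq 0$, the overlap $\tr[\Pi_x\rho_{i|\theta}]=\bra{\psi_{i|\theta}}\Pi_x\ket{\psi_{i|\theta}}$ vanishes iff $\ket{\psi_{i|\theta}}\in\ker\Pi_x$. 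Thus validity forces every $\Pi_x$ to annihilate at least one state from each of the three pairs. Now the rank argument: the six states are pairwise non-parallel (each pair spans a distinct coordinate plane), so any subspace of dimension $\le 1$ contains at most one of them. If $\mathrm{rank}\,\Pi_x\ge 2$ then $\dim\ker\Pi_x\le 1$, so $\Pi_x$ can annihilate a state from at most one pair and fails for the other two. Hence every nonzero $\Pi_x$ must be rank one, $\Pi_x\propto\op{v_x}{v_x}$ with $\ker\Pi_x=\{v_x\}^{\perp}$, and validity reduces to: $v_x$ is orthogonal to one state in each pair. Writing $v=a\ket{0}+b\ket{1}+c\ket{2}$, orthogonality to a member of pair $0$ forces $a=\pm b$, to a member of pair $1$ forces $a=\pm c$, and to a member of pair $2$ forces $b=\pm i c$ (the factor $i$ coming from $\ket{\wt{+}_{12}},\ket{\wt{-}_{12}}$). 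The first two conditions make $b$ and $c$ real scalar multiples of $a$, which is incompatible with $b=\pm ic$ unless $a=b=c=0$. So no nonzero $v$ satisfies all three constraints, there is no valid nonzero POVM element, and no POVM can sum to $\mathbb{I}$; classical orthogonality broadcasting is therefore impossible.

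\smallskip
\noindent\emph{Expected obstacle.} I expect the impossibility half to go through cleanly along the above lines, its entire content being that viable POVM elements are forced to be rank one and that the two ``real'' orthogonality constraints (from the two real bases) are then incompatible with the single ``imaginary'' constraint (from the complex basis). The main obstacle is the achievability half: producing an explicit entangling isometry whose two complementary marginals both preserve all three orthogonalities. This is delicate precisely because the obvious Stinespring dilations collapse the complementary channel to something state-independent; verifying \eqref{Eq:orth-broadcast-channel} for a given candidate $U$ is routine, but identifying a correct $U$ (and confirming that entanglement is genuinely necessary) is the step requiring care.
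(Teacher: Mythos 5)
Your impossibility argument is correct and is essentially the paper's own proof in slightly more case-split form: the paper simply observes that each $\Pi_x$ would have to annihilate one state from each of the three pairs, and that any such triple is linearly independent in $\mbb{C}^3$, forcing $\ker\Pi_x=\mbb{C}^3$ and hence $\Pi_x=0$. Your rank-based version (rank $\geq 2$ leaves too small a kernel; rank $1$ forces $v_x$ to satisfy $a=\mp b$, $a=\mp c$, $b=\pm ic$ simultaneously, which has only the trivial solution) reaches the same conclusion and is sound, just a little longer than the linear-independence one-liner.

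The achievability half, however, is a genuine gap: you describe what an isometry would have to do but never produce one, and you flag this yourself as the step requiring care. Moreover, your guiding intuition points in the wrong direction for this ensemble. The paper's isometry maps into $\mbb{C}^2\otimes\mbb{C}^2$ (not $\mbb{C}^3\otimes\mbb{C}^3$) via $U\ket{0}=\tfrac{1}{\sqrt{2}}(\ket{00}+\ket{11})$, $U\ket{1}=\tfrac{1}{\sqrt{2}}(\ket{00}-\ket{11})$, $U\ket{2}=\tfrac{1}{\sqrt{2}}(\ket{01}+\ket{10})$, which sends the six states of $\mc{S}_\Theta$ to the \emph{product} states $\ket{0}\ket{0}$, $\ket{1}\ket{1}$, $\ket{+_{01}}\ket{+_{01}}$, $\ket{-_{01}}\ket{-_{01}}$, $\ket{\wt{+}_{01}}\ket{\wt{+}_{01}}$, $\ket{\wt{-}_{01}}\ket{\wt{-}_{01}}$ --- i.e.\ the three qutrit pairs land on doubled copies of the three mutually unbiased qubit bases, and both marginals trivially preserve every orthogonality. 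So entangled \emph{outputs} are not needed here (that phenomenon is the content of the separate Theorem~\ref{Thm:no-sep-comm}); what is needed is only that the broadcast states be non-classical, which is exactly the distinction Theorem~\ref{Thm:classical-broadcasting-separation} is drawing. Without an explicit $U$ and the verification that both reduced states separate each pair, the first half of the theorem remains unproved in your write-up.
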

\begin{proof}
    We first construct an orthogonality broadcasting map for $\mc{S}_\Theta$. For $A=\mbb{C}^3$, define the isometry $U:A\to A'B'$ by 
\begin{align*}
    U\ket{0} = \frac{1}{\sqrt{2}}\left(\ket{00} + \ket{11} \right) \quad    U\ket{1} = \frac{1}{\sqrt{2}}\left(\ket{00} - \ket{11} \right) \quad
    U\ket{2} = \frac{1}{\sqrt{2}}\left(\ket{01} + \ket{10} \right) \ .
\end{align*}
Then 
\begin{align}
U\ket{+_{01}}&=\ket{0}^{A'}\ket{0}^{B'}, &U\ket{-{01}} &= \ket{1}^{A'}\ket{1}^{B},\notag\\
U\ket{+_{02}}&=\ket{+_{01}}^{A'}\ket{+_{01}}^{B'}, &U\ket{-_{02}}&=\ket{-_{01}}^{A'}\ket{-_{01}}^{B'}\notag\\
U\ket{\wt{+}_{12}}&=\ket{\wt{+}_{01}}^{A'}\ket{\wt{+}_{01}}^{B'}, &U\ket{\wt{-}_{12}}&=\ket{\wt{-}_{01}}^{A'}\ket{\wt{-}_{01}}^{B'}.\notag
\end{align}
Thus, the necessary orthogonality is preserved for both Alice and Bob for all elements of $\mc{S}_{\Theta}$.

We now show that there is no classical orthogonal broadcasting map for $\mc{S}_\Theta$.  According to Eq. \eqref{Eq:classical-broadcast-condition}, this would require a POVM $\{\Pi_x\}_x$ such that
\begin{align}
    0&=\bra{+_{01}}\Pi_x\ket{+_{01}}\bra{-_{01}}\Pi_x\ket{-_{01}}\notag\\
    0&=\bra{+_{02}}\Pi_x\ket{+_{02}}\bra{-_{02}}\Pi_x\ket{-_{02}}\notag\\    0&=\bra{\wt{+}_{12}}\Pi_x\ket{\wt{+}_{12}}\bra{\wt{-}_{12}}\Pi_x\ket{\wt{-}_{12}}\notag
\end{align}
for all $x$.  These three equalities can be satisfied only if three states from $\mc{S}_\Theta$ lie in the kernel  of $\Pi_x$.  However, any subset of three states from $\mc{S}_\Theta$ are linearly independent, which means the only solution to these equations is $\Pi_x=0$.
\end{proof}

\subsection{Separable Communication in Orthogonality Broadcasting}
With the separation of classical and quantum communication in orthogonality broadcasting established, an interesting question is whether orthogonality broadcasting actually needs the use of entanglement.  In other words, does there exist ensembles $\mc{S}=\{\rho_{i|\theta}\}_{i,\theta}$ whose orthogonality can be broadcast by map $\mc{E}:A\to A'B'$ only if the broadcast states $\sigma_{i|\theta}=\mc{E}(\rho_{i|\theta})$ are entangled?  
This question is fundamental for two reasons. First, the requirement for entangled states in the broadcasting map would be a sharp departure from cloning and state broadcasting, which, when feasible, can always be accomplished by generating product and separable output states, respectively.  Second, a more general question in quantum communication is to understand which distributed tasks are possible using the transmission of separable rather than entangled states.  For example, it is possible to distribute entanglement through the exchange of separable states \cite{Cubitt-2003a}, indicating the curious utility of quantum correlations beyond entanglement \cite{Chuan-2012a, Streltsov-2012a}.  The following theorem shows that separable carriers are insufficent in general for broadcasting orthogonality. 
\begin{theorem}
\label{Thm:no-sep-comm}
Let 
\begin{align}
    \mc{S}_{\Theta}=
    \begin{Bmatrix} 
        \rho_{0 \vert 0} \coloneq \ket{1} &
        \rho_{1 \vert 0} \coloneq \ket{2} \\
        \rho_{0 \vert 1} \coloneq \ket{+_{23}} & 
        \rho_{1 \vert 1} \coloneq \ket{-_{23}} \\
        \rho_{0 \vert 2} \coloneq \ket{+_{24}} & 
        \rho_{1 \vert 2} \coloneq \ket{-_{24}} \\
        \rho_{0 \vert 3} \coloneq \ket{\widetilde{+}_{34}} & \rho_{1 \vert 3} \coloneq \ket{\widetilde{-}_{34}}
    \end{Bmatrix}
\end{align}
be $|\Theta|=4$ pairs of orthogonal states and we remind the reader $\ket{\pm_{mn}}=\tfrac{1}{\sqrt{2}}(\ket{m}\pm\ket{n})$ and $\ket{\wt{\pm}_{mn}}=\tfrac{1}{\sqrt{2}}(\ket{m}\pm i\ket{n})$. Then $P_{\textbf{bc}}(\mc{S}_{\Theta})=1$. 
Furthermore, any broadcasting map must transform at least one of the states in $\mc{S}_{\Theta}$ into an entangled state.
\end{theorem}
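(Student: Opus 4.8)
The plan is to prove the two claims separately, reusing the triangle structure of Theorem \ref{Thm:classical-broadcasting-separation}.

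\textbf{Broadcastability.} First I would exhibit an explicit isometry, noting that the three pairs for $\theta\in\{1,2,3\}$ are exactly the triangle ensemble of Theorem \ref{Thm:classical-broadcasting-separation} under the relabelling $\ket{0},\ket{1},\ket{2}\mapsto\ket{2},\ket{3},\ket{4}$. So I reuse its Bell-type encoding on a qubit--qubit subspace of $A'B'=\mathbb{C}^3\otimes\mathbb{C}^3$, setting $U\ket{2}=\tfrac1{\sqrt2}(\ket{00}+\ket{11})$, $U\ket{3}=\tfrac1{\sqrt2}(\ket{00}-\ket{11})$, $U\ket{4}=\tfrac1{\sqrt2}(\ket{01}+\ket{10})$, and I absorb the extra state on a fresh level by $U\ket{1}=\ket{2}^{A'}\ket{2}^{B'}$. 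These four images are orthonormal, so $U$ is an isometry, and a direct check gives $U\ket{\pm_{23}}=\ket{00},\ket{11}$, $U\ket{\pm_{24}}=\ket{++},\ket{--}$, and $U\ket{\wt{\pm}_{34}}=\ket{\wt{+}}\ket{\wt{+}},\ket{\wt{-}}\ket{\wt{-}}$ as in the earlier proof, while $U\ket{1}=\ket{22}$ has marginal $\op{2}{2}$ orthogonal to the maximally mixed qubit marginal of $U\ket{2}$. All eight marginal-orthogonality conditions hold, so $P_{\textbf{bc}}(\mc S_\Theta)=1$; note that the only ensemble state mapped to an entangled output is $\ket{2}=\rho_{1|0}$.

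\textbf{Necessity of entanglement.} For the second claim I argue by contradiction: suppose some broadcasting map outputs only separable $\sigma_{i|\theta}$. The strategy is to show this forces a classical broadcasting map and then rule the latter out directly. The classical impossibility is clean: by \eqref{Eq:classical-broadcast-condition} a classical broadcaster requires a POVM $\{\Pi_x\}$ with, for each $x$ and $\theta$, at least one state of the pair in $\ker\Pi_x$. Choosing one state from each of $\theta\in\{1,2,3\}$ yields three triangle states which, by the linear-independence property used in Theorem \ref{Thm:classical-broadcasting-separation} (any three of the six triangle states are independent), span $\linspan\{\ket{2},\ket{3},\ket{4}\}$; the $\theta=0$ choice then either adjoins $\ket{1}$, forcing $\ker\Pi_x=\mathbb{C}^4$ and $\Pi_x=0$, or adjoins $\ket{2}\in\linspan\{\ket{2},\ket{3},\ket{4}\}$, forcing $\Pi_x\propto\op{1}{1}$. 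Hence every $\Pi_x$ is $0$ or a multiple of $\op{1}{1}$ and $\sum_x\Pi_x\neq\mathbb{I}$, so classical broadcasting is impossible.

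\textbf{The bridge and the main obstacle.} The step that remains---and that I expect to be the crux---is showing that separable outputs force a classical broadcaster. This is immediate for an entanglement-breaking map $\mc E(\cdot)=\sum_k\tr(F_k\,\cdot)\,\omega_k$: if some $F_k$ overlapped both $\rho_{0|\theta}$ and $\rho_{1|\theta}$, then $\tr_{B'}\omega_k$ would sit in the support of both $\mc N(\rho_{0|\theta})$ and $\mc N(\rho_{1|\theta})$, violating Alice's orthogonality, so $\{F_k\}$ already satisfies \eqref{Eq:classical-broadcast-condition}. The same product-of-sum identity settles the pure-isometry case, since orthogonality makes $U\ket{2}=\tfrac1{\sqrt2}(\ket{\alpha_+}\ket{\beta_+}+\ket{\alpha_-}\ket{\beta_-})$ have Schmidt rank two. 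The genuine obstacle is a non-entanglement-breaking channel whose outputs happen to be separable only on the eight inputs---mixedness evades the pure argument, exactly as a measure-and-prepare map already broadcasts the two-pair sub-ensemble. To close this I would pass to the marginals and the basis images $R_j=\mc E(\op{j}{j})$: writing $\sigma_{i|\theta}=\tfrac12\big(R_m+R_n+(-1)^iK_{mn}\big)$ with $K_{mn}=\mc E(\ket{m}\bra{n}+\ket{n}\bra{m})$, the marginal orthogonality on each triangle edge forces $\tr_{B'}(R_m+R_n)=\lvert\tr_{B'}K_{mn}\rvert$, which together with separability of all eight outputs and the $\theta=0$ relation $\tr_{B'}R_1\perp\tr_{B'}R_2$ should reproduce the above obstruction. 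Reducing arbitrary separable outputs to this entanglement-breaking normal form is the hard part.
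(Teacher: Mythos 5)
Your broadcastability construction and your refutation of classical broadcasting are both correct (the isometry is the paper's own up to relabelling, and the kernel/linear-independence argument for $\{\Pi_x\}$ is sound). But the proof has a genuine gap, and it is exactly where you place it: the ``bridge'' from \emph{all outputs separable} to \emph{a classical broadcaster exists} is never established for a general CPTP map. You handle entanglement-breaking channels and channels that are isometries $A\to A'B'$ with no environment, but the general case of a channel whose outputs on these eight inputs are mixed separable states is the entire content of the theorem, and your closing sketch (equating $\tr_{B'}(R_m+R_n)$ with $\lvert\tr_{B'}K_{mn}\rvert$ and asserting it ``should reproduce the obstruction'') is not a proof. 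Worse, the bridge itself is a suspect strategy: for the triangle sub-ensemble of Theorem \ref{Thm:classical-broadcasting-separation}, a broadcaster with \emph{all product outputs} exists while classical broadcasting is impossible, so ``separable outputs $\Rightarrow$ classical broadcaster'' cannot be a structural fact about separability alone --- it would have to lean on the $\theta=0$ pair in an essential way that you have not identified, and it asks for a much stronger normal form than the contradiction actually requires.

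The paper closes this differently and more directly. It assumes only that the single state $\ket{2}$ is mapped to a separable output, and uses the fact that \emph{every} purification of a separable state has the form $\sum_i\ket{x_i}^{A'}\ket{y_i}^{B'}\ket{i}^{E}$ with $\{\ket{i}^E\}$ orthonormal; applying this to the isometric extension $U:B\to A'B'E$ fixes $U\ket{2}=\sum_i\ket{x_i}\ket{y_i}\ket{i}^E$. Linearity plus the marginal-orthogonality requirements for the pairs $\ket{\pm_{23}}$ and $\ket{\pm_{24}}$ then force $U\ket{3}=\sum_i(-1)^{r_i}\ket{x_i}\ket{y_i}\ket{i}^E$ and $U\ket{4}=\sum_i(-1)^{s_i}\ket{x_i}\ket{y_i}\ket{i}^E$, after which $U\ket{\wt{+}_{34}}$ and $U\ket{\wt{-}_{34}}$ have identical reduced states --- a contradiction with no classical-broadcasting detour and no need to assume anything about the other seven outputs. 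If you want to salvage your route, the purification normal form for separable states is the missing tool: it is what lets you treat a mixed separable output with the same algebra you used in your pure-isometry special case, by working blockwise in the environment index $i$.
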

\noindent Theorems \ref{Thm:classical-broadcasting-separation} and \ref{Thm:no-sep-comm} collectively establish different types of orthogonality broadcasting.  The ensemble $\mc{S}_{\Theta}$ in Theorem \ref{Thm:classical-broadcasting-separation} requires broadcasting a map that transforms the elements of $\mc{S}_{\Theta}$ into non-classical but possibly separable states.  Even stronger, the ensemble $\mc{S}_{\Theta}$ in Theorem \ref{Thm:no-sep-comm} requires a map that transforms at least one element of $\mc{S}_{\Theta}$ into an entangled state.

\medskip

\subsection{Approximate orthogonality broadcasting}
Even if perfect orthogonality broadcasting is not possible for $\mc{S}_\Theta$, one can consider approximate orthogonality broadcasting.  Here, it is natural to quantify the quality of approximation as the largest worst case error probability among Alice and Bob for correctly guessing the value $i\in\mc{I}_\theta$ given $\theta\in\Theta$.  For a given ensemble $\mc{S}_\Theta=\{\rho_{i|\theta}\}_{i,\theta}$ and joint distribution $p$ over $\Theta \times \cI$ where $\cI \coloneq \cup_{\theta} \cI_{\theta}$, this gives the measure 
\begin{align}
    \label{Eq:broadcasting-defn}
P_{\text{bc}}(\mc{S}_\Theta):=\max_{\mc{E}}\min_{P\in\{A,B\}}\sum_{\theta\in\Theta} p(\theta) P_{\text{guess}}(\{\sigma^P_{i|\theta}\}_{i}),
\end{align}
where the maximization is taken over all broadcasting maps $\mc{E}:\rho_{i|\theta}\mapsto \sigma_{i|\theta}^{AB}$, and 
\[P_{\text{guess}}(\{\sigma_{i|\theta}\}_{i}):=\max_{\{\Pi_i\}_{i}}\sum_{i\in\mc{I}_\theta}p(i\vert \theta)\Tr(\Pi_i\sigma_{i|\theta})\] 
is the maximum guessing probability for a uniform ensemble of states $\{\sigma_{i|\theta}\}_{i}$ using a positive operator-valued measure (POVM) $\{\Pi_i\}_i$.
%For a given ensemble $\mc{S}_\Theta=\{\rho_{i|\theta}\}_{i,\theta}$, approximate orthogonality broadcasting can be quantified as 
%\begin{equation}
%\label{Eq:broadcasting-defn}
%P_{\text{bc}}(\mc{S}_\Theta):=\max P_g(\{\sigma_{i|\theta}^{AB}\}),   
%\end{equation}
%where the maximization is taken over all broadcasting maps $\mc{E}:\rho_{i|\theta}\mapsto \sigma_{i|\theta}^{AB}$.  
Note that this approximation is well-defined for all ensembles $\mc{S}_{\Theta}=\{\rho_{i|\theta}\}_{i,\theta}$ even if pairwise orthogonality does not hold for each $\theta$.  Then orthogonality broadcasting of $\mc{S}_{\Theta}$ is possible iff $P_{\text{bc}}(\mc{S}_\Theta)=1$.  We let $P_{\text{c-bc}}(\mc{S}_\Theta)$ quantify approximate \textit{classical} orthogonality broadcasting, which is defined in the same way except with the maximization in Eq. \eqref{Eq:broadcasting-defn} restricted to broadcasting maps having a quantum-classical (qc) output. This notion of approximate orthogonality broadcasting proposed here is analogous to the well-studied tasks of approximate cloning and state broadcasting \cite{Buzek-1996a, Scarani-2005a}.

Note that the choice of local POVMs in the definition of $P_{\text{guess}}(\{\sigma_{i|\theta}^{AB}\})$ can depend on the value $\theta$.  The value $\theta$ is called post-information since it comes after the broadcasting map $\mc{E}$.  An alternative notion of post-information state discrimination has been considered in which the post-information comes after the choice of POVM \cite{Ballester-2008a, Carmeli-2022a} (see Fig.~\ref{Fig:orthogonality_broadcasting}).  The optimal guessing probability for ensemble $\mc{S}_\Theta$ in this scenario is then given by
\begin{align}\label{eq:post-information-measure}
    P_{\text{p-i}}(\mc{S}_\Theta):=\max \sum_{\theta,i}\sum_{x} q(i|x,\theta)p(\theta,i)\Tr(\rho_{i|\theta}\Pi_{x}),
\end{align}
where the maximization is taken over all POVMs $\{\Pi_x\}_x$ and all classical post-processing channels $q(i|x,\theta)$.  Despite this conceptual difference, it is not difficult to show that they are operationally equivalent.

\begin{proposition}\label{Prop:classical-orthogonal-broadcasting}
     For all $\mc{S}_{\Theta}$ and any choice of prior distribution on $\Theta \times \cI$, $P_{\text{c-bc}}(\mc{S}_\Theta)=P_{\text{p-i}}(\mc{S}_\Theta)$. In particular, this is because $P_{\text{c-bc}}(\cS_\Theta)$ is always achieved using a fully classical broadcasting channel $\cE^{A \to XY}$.
\end{proposition}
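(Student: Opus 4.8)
The plan is to prove the two inequalities $P_{\text{p-i}}(\mc{S}_\Theta)\le P_{\text{c-bc}}(\mc{S}_\Theta)$ and $P_{\text{c-bc}}(\mc{S}_\Theta)\le P_{\text{p-i}}(\mc{S}_\Theta)$ separately, and to read off the ``in particular'' claim directly from the construction used for the first inequality. The conceptual heart of both directions is the observation already recorded below Eq.~\eqref{Eq:classical-broadcast-condition}: in a qc broadcasting channel $\mc{E}(\cdot)=\sum_x\mc{A}_x(\cdot)\otimes\op{x}{x}$, the classical output register is completely described by the POVM $\{\Pi_x\}_x$ with $\Pi_x=\mc{A}_x^\dagger(\mbb{I})$, so the party holding it is placed in exactly the scenario of Eq.~\eqref{eq:post-information-measure}.

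For the lower bound I would exhibit a fully classical channel that attains $P_{\text{p-i}}$. Let $\{\Pi_x\}_x$ and the decision rule $q(i|x,\theta)$ be optimal for $P_{\text{p-i}}(\mc{S}_\Theta)$, and define the cc channel $\mc{E}(\cdot)=\sum_x\Tr[\Pi_x(\cdot)]\,\op{x}{x}\otimes\op{x}{x}$ that measures with $\{\Pi_x\}_x$ and copies the outcome into both registers $X$ and $Y$. After receiving $\theta$, each of the two parties applies $q(\cdot|x,\theta)$, so both achieve guessing probability $\sum_{\theta,i}p(\theta,i)\sum_x q(i|x,\theta)\Tr[\Pi_x\rho_{i|\theta}]=P_{\text{p-i}}(\mc{S}_\Theta)$. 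The minimum over the two parties is then $P_{\text{p-i}}(\mc{S}_\Theta)$, and since a cc channel is a special case of a qc channel this gives $P_{\text{c-bc}}(\mc{S}_\Theta)\ge P_{\text{p-i}}(\mc{S}_\Theta)$; the very same channel proves the ``in particular'' statement.

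For the upper bound I would fix an arbitrary qc broadcasting map and bound the objective of Eq.~\eqref{Eq:broadcasting-defn} by the guessing probability of the party who holds the \emph{classical} register, using that the minimum is no larger than either party's term. That party's states $\sigma^{B}_{i|\theta}=\sum_x\Tr[\Pi_x\rho_{i|\theta}]\op{x}{x}$ are simultaneously diagonal, so optimal minimum-error discrimination collapses to a classical, possibly $\theta$-dependent decision rule, giving $\sum_\theta p(\theta)P_{\text{guess}}(\{\sigma^B_{i|\theta}\}_i)=\max_q\sum_{\theta,i}p(\theta,i)\sum_x q(i|x,\theta)\Tr[\Pi_x\rho_{i|\theta}]$. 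This is an instance of the quantity maximized in Eq.~\eqref{eq:post-information-measure} with the fixed POVM $\{\Pi_x\}_x$, hence at most $P_{\text{p-i}}(\mc{S}_\Theta)$; taking the maximum over all qc maps yields $P_{\text{c-bc}}(\mc{S}_\Theta)\le P_{\text{p-i}}(\mc{S}_\Theta)$.

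The main obstacle is the step in the upper bound asserting that the classical party's optimal guessing probability is \emph{exactly} a classical post-processing of the POVM outcomes and not something strictly larger: one must argue that minimum-error discrimination of the commuting ensemble $\{\sigma^B_{i|\theta}\}_i$ is achieved by a diagonal POVM, i.e.\ a stochastic map $q(i|x,\theta)$. Everything else is bookkeeping with the identity $p(\theta)p(i|\theta)=p(\theta,i)$ and the fact that, because the decoding may depend on $\theta$, the $\theta$-dependent optimization can be carried out independently for each value of $\theta$.
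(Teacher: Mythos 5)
Your proposal is correct and follows essentially the same route as the paper's proof: both directions rest on the observation that the classical register of a qc broadcasting map is fully described by the POVM $\{\Pi_x\}_x$, that discrimination of the resulting commuting ensemble reduces to a $\theta$-dependent stochastic post-processing $q(i|x,\theta)$, and that conversely any $P_{\text{p-i}}$ strategy yields a measure-and-copy cc channel. The only (harmless) difference is ordering: the paper first establishes the ``in particular'' claim and uses it to get $P_{\text{c-bc}}\le P_{\text{p-i}}$, whereas you bound the minimum directly by the classical party's term and read off the ``in particular'' claim at the end from your explicit cc channel.
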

\begin{proof}
   We begin with the second claim. By \eqref{Eq:broadcasting-defn}, $P_{\text{c-bc}}(\cS_{\Theta})$ is always achieved by a fully classical broadcasting map $\cE^{A \to XY}$.  Indeed if the worse performing player receives classical information, then the minimum success probability for both parties will not change if both players follow the strategy of the classical player; and, if the worse performing party receives quantum information, then they can improve the minimum success probability by both following the classical player's strategy. 
    
    To establish the first claim, note the above argument shows the optimal fully classical broadcast map applies some optimal POVM $\{\Pi_{x}\}$ to $\rho_{i \vert \theta}$, and broadcasts the outcome to both parties. It follows for each $\theta$ both parties hold $\{\sigma_{i\vert \theta}^{X}\}_{i} = \{\sum_{x}\Tr[\Pi_{x}\rho_{i \vert \theta}]\dyad{x}\}_{i}$ distributed according to $p(i\vert \theta)$ and then apply an optimal POVM $\{\tau^{\theta}_{i}\}_{i}$ to guess the value of $i \in \cI_\Theta$. As $\{\sigma^{X}_{i\vert \theta}\}_{i}$ are classical distributions, for each $\theta \in \Theta$ the optimal POVM $\{\tau^{\theta}_{i}\}_{i}$ may be reduced to a conditional distribution $\{w_{\theta}(i\vert x)\}_{i,x}$. Thus, defining $q(i \vert x,\theta) \coloneq w_{\theta}(i \vert x)$ along with $\{\Pi_{x}\}_{x}$ defines a feasible strategy for \eqref{eq:post-information-measure}, so $P_{\text{c-bc}}(\cS_{\Theta}) \leq P_{\text{p-i}}(\cS_{\Theta})$. On the other hand, the reverse inequality follows from the fact that any POVM and post-processing strategy used to optimize $P_{\text{p-i}}(\mc{S}_\Theta)$ can be converted into a fully classical broadcasting map in which the POVM is first performed on $\rho_{i|\theta}$, the classical outcome is then broadcast to both Alice and Bob, and finally they both perform the post-processing locally.
\end{proof}

We can extend the relationship between classical broadcasting and post-information to general quantum broadcasting in the special case that $|\Theta|=2$.  The following proposition is a counterpart to Theorem \ref{Thm:classical-broadcasting-separation}.
\begin{proposition}
\label{Prop:classical-broadcasting-equal-2}
    For any prior distribution over $\Theta \times \cI$, $P_{\text{c-bc}}(\mc{S}_\Theta)=P_{\text{bc}}(\mc{S}_\Theta)$ for all $\mc{S}_{\Theta}$ with $|\Theta|\leq 2$ .
\end{proposition}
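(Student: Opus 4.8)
The plan is to prove the two inequalities that pin down the equality. The direction $P_{\text{c-bc}}(\mc{S}_\Theta)\le P_{\text{bc}}(\mc{S}_\Theta)$ is immediate, since classical broadcasting maps form a subset of all broadcasting maps in the maximization \eqref{Eq:broadcasting-defn}. By Proposition \ref{Prop:classical-orthogonal-broadcasting} we also have $P_{\text{c-bc}}(\mc{S}_\Theta)=P_{\text{p-i}}(\mc{S}_\Theta)$, so it remains to prove the reverse bound $P_{\text{bc}}(\mc{S}_\Theta)\le P_{\text{p-i}}(\mc{S}_\Theta)$ whenever $|\Theta|\le 2$. The case $|\Theta|=1$ is trivial, as a single POVM already realizes the optimal strategy, so I assume $\Theta=\{\theta_0,\theta_1\}$.

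Fix an arbitrary broadcasting map $\mc{E}^{A\to A'B'}$ and let $A_\theta:=P_{\text{guess}}(\{\sigma^{A'}_{i|\theta}\}_i)$ and $B_\theta:=P_{\text{guess}}(\{\sigma^{B'}_{i|\theta}\}_i)$ denote the optimal local guessing values, attained by POVMs $\{\Pi^{A,\theta}_i\}_i$ on $A'$ and $\{\Pi^{B,\theta}_i\}_i$ on $B'$, so that $G_A:=\sum_\theta p(\theta)A_\theta$ and $G_B:=\sum_\theta p(\theta)B_\theta$ are precisely the two quantities being minimized in \eqref{Eq:broadcasting-defn}. The central idea is to convert the $\theta$-adaptive local strategies into non-adaptive single-POVM strategies by pairing each party's measurement for one value of $\theta$ with the other party's measurement for the other value. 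Concretely, I would consider the two product POVMs on $A'B'$
\[
\{\Pi^{A,\theta_0}_a\otimes\Pi^{B,\theta_1}_b\}_{a,b}\quad\text{and}\quad\{\Pi^{A,\theta_1}_a\otimes\Pi^{B,\theta_0}_b\}_{a,b}.
\]
Because $A'$ and $B'$ are distinct tensor factors, each of these is a genuine (jointly measurable) POVM. Pulling either back through the adjoint channel $\mc{E}^\dagger$ yields a valid POVM on the input system $A$, which, together with a $\theta$-dependent deterministic relabeling of outcomes, is exactly a feasible strategy for $P_{\text{p-i}}$. For the first POVM I post-process by reporting the $A'$-outcome when $\theta=\theta_0$ and the $B'$-outcome when $\theta=\theta_1$; summing over the unused factor restores each marginal measurement, so this strategy achieves $p(\theta_0)A_{\theta_0}+p(\theta_1)B_{\theta_1}$. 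The second POVM, with the complementary post-processing, achieves $p(\theta_0)B_{\theta_0}+p(\theta_1)A_{\theta_1}$.

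Since both constructed strategies are feasible for $P_{\text{p-i}}$, each is bounded by $P_{\text{p-i}}(\mc{S}_\Theta)$, and adding them gives $G_A+G_B\le 2P_{\text{p-i}}(\mc{S}_\Theta)$. Hence $\min\{G_A,G_B\}\le\tfrac12(G_A+G_B)\le P_{\text{p-i}}(\mc{S}_\Theta)$, and maximizing over $\mc{E}$ yields $P_{\text{bc}}(\mc{S}_\Theta)\le P_{\text{p-i}}(\mc{S}_\Theta)$, completing the argument. The step I expect to carry the real weight is the pairing construction: it is precisely the restriction $|\Theta|=2$ that lets every (party, $\theta$) guessing term appear exactly once across the two product measurements, so that their sum reproduces $G_A+G_B$. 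For $|\Theta|\ge 3$ a single product POVM can faithfully emulate at most one value of $\theta$ on each side and necessarily mismeasures the remaining values, which is exactly the room exploited by the quantum-over-classical separation of Theorem \ref{Thm:classical-broadcasting-separation}.
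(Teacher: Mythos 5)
Your proof is correct and rests on the same key idea as the paper's: cross-pairing Alice's optimal POVM for one value of $\theta$ with Bob's for the other, and exploiting that the two cross-paired scores sum to $G_A+G_B$ so that at least one dominates $\min\{G_A,G_B\}$. The only cosmetic differences are that the paper selects the better of the two pairings and builds the fully classical broadcasting map explicitly (measure both POVMs after $\mc{E}$, broadcast both outcomes, relabel on receipt of $\theta$), whereas you average the two pairings and route through $P_{\text{p-i}}$ via Proposition \ref{Prop:classical-orthogonal-broadcasting}.
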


\begin{proof}
    It suffices to show $P_{\text{c-bc}}(\mc{S}_\Theta)\geq P_{\text{bc}}(\mc{S}_\Theta)$ for $\Theta=\{0,1\}$.  For any broadcasting map $\mc{E}:\rho_{i|\theta}\mapsto \sigma_{i|\theta}^{AB}$, there must be a $\theta_0\in\{0,1\}$ such that
    \begin{align}
    &\min_{P\in\{A,B\}}[p(0)P_{\text{guess}}(\{\sigma_{i|0}^P\}_i)+p(1)P_{\text{guess}}(\{\sigma_{i|1}^P\}_i)]\notag\\
    & \hspace{3cm} \leq \; p(\theta_0)P_{\text{guess}}(\{\sigma_{i|\theta_0}^A\}_i)  +p(\theta_0^c) P_{\text{guess}}(\{\sigma_{i|\theta_0^c}^B\}_i),
    \end{align}
    where $\theta_0^c=\theta_0\oplus 1$.  Let $\{\Pi_{i|\theta_0}^A\}_{i}$ and $\{\tau_{i|\theta_0^c}^B\}_i$ be optimal POVMs attaining the maximum guessing probability in $P_{\text{guess}}(\{\sigma_{i|\theta_0}^A\}_i)$ and $P_{\text{guess}}(\{\sigma_{i|\theta_0^c}^B\}_i)$, respectively.  Using these measurements we construct the fully classical broadcasting map
    \[\rho_{i|\theta}\mapsto\! \sum_{j,k} \Tr[\Pi_{j|\theta_0}^A\!\otimes\!\tau^B_{k|\theta_0^c}\mc{E}(\rho_{i|\theta})]\op{j,k}{j,k}^A\otimes\op{j,k}^B\!\!\!.\]
Alice and Bob both hold the classical variables $(j,k)$ with probability $p(j,k)=\Tr[\Pi_{j|\theta_0}^A\!\otimes\!\tau^B_{k|\theta_0^c}\mc{E}(\rho_{i|\theta})]$.  When $\theta$ is announced as post-information, they both submit $j$ as their guess if $\theta=\theta_0$ or they both submit $k$ as their guess if $\theta=\theta_1$.  Hence, they both have the same overall probability of correctly guessing, which is given by 
\begin{align}
    &p(\theta_{0})\sum_{j}\Tr[\Pi_{j|\theta_0}^A\!\otimes\! \mbbm{1}^B \mc{E}(\rho_{j|\theta})]\! + p(\theta^{c}_{0}) \sum_{k} \Tr[\mbbm{1}^A \otimes \tau^B_{k|\theta_0^c}\mc{E}(\rho_{k|\theta})]\notag\\
    =&p(\theta_{0}) P_{\text{guess}}(\{\sigma_{i|\theta_0}^A\}_i)+ p(\theta_{0}^{c}) P_{\text{guess}}(\{\sigma_{i|\theta_0^c}^B\}_i)\notag\\
    \geq& \min_{P\in\{A,B\}}[p(0)P_{\text{guess}}(\{\sigma_{i|0}^P\}_i) +p(1)P_{\text{guess}}(\{\sigma_{i|1}^P\}_i)] \ . 
\end{align}
Since we originally chose $\mc{E}$ as an arbitrary broadcasting map, it follows that $P_{\text{c-bc}}(\mc{S}_\Theta)\geq P_{\text{bc}}(\mc{S}_\Theta)$.
\end{proof}

Note that $P_{\text{p-i}}(\mc{S}_\Theta)$ in \eqref{eq:post-information-measure} is always optimized by a deterministic distribution as for any choice of POVM $\{\Pi_{x}\}$, for each $(\theta,x) \in \Theta \times \cX$ there is an $i'$ that maximizes $p(\theta,i)\Tr[\rho_{i\vert \theta} \Pi_{x}]$ over $i$ and $P_{\text{p-i}}$ is a maximization. As an extremal POVM has at most $d^{2}$ outcomes, the optimal value is determined using one of $N = (\max_{\theta \in \Theta} |\cI_{\Theta}|)^{d^{2}|\Theta|}$ deterministic distributions to search over. As maximizing the POVM is an SDP, there are $N$ SDPs that need to be evaluated to determine $P_{\text{p-i}}(\cS_{\Theta})$. Combining Propositions \ref{Prop:classical-orthogonal-broadcasting} and \ref{Prop:classical-broadcasting-equal-2}, for $\vert \Theta \vert \leq 2$, this gives a method for determining $P_{\text{bc}}(\cS_\Theta)$. In contrast, computing $P_{\text{bc}}(\mc{S}_\Theta)$ appears to be a challenging bilinear optimization problem in general.  Given this complexity, we next present an uncertainty relation that can be used to compute upper bounds on $P_{\text{bc}}(\mc{S}_\Theta)$ for certain ensembles.  The intuition behind the uncertainty relation is as follows.  Consider a pure state ensemble $\{\ket{a_\mu}\}_\mu$ with $\ket{\alpha_\mu}^{AB'}=U\ket{a_\mu}$.  If $\ket{\alpha_0}$ and $\ket{\alpha_1}$ are two entangled states that Alice can distinguish with high probability by measuring subsystem $A$, then the reduced density matrices $\alpha_0^A$ and $\alpha_1^A$ must be nearly orthogonal.  Consequently, tracing out Alice in any superposition of $\ket{\alpha_0}$ and $\ket{\alpha_1}$ will effectively destroy the relative phase between these states, thereby making it difficult for Bob to distinguish superpositions of $\ket{\alpha_0}$ and $\ket{\alpha_1}$.  We can quantify this tradeoff in terms of the fidelity $F(\rho,\sigma)=\Vert \sqrt{\rho}\sqrt{\sigma}\Vert_1$ and trace distance $\TD(\rho,\sigma)=\tfrac{1}{2}\Vert\rho-\sigma\Vert_1$ between hermitian operators. %\todo{[[Can you please check that we're using the same defns of fidelity and trace distance.]]}.
%Then there are two cases. The first case is Bob can't distinguish superposition states well locally, which means his register is effectively decoupled from Alice's, and Alice can also distinguish superposition states of $\ket{\gamma_{0}}^{AB},\ket{\gamma_{1}}^{AB}$ locally. The second case is that Bob can distinguish $\gamma_{0}^{B},\gamma_{1}^{B}$ well. In this latter case, tracing out Alice's system for a superposition state should dephase Bob’s reduced density matrix, thus making it difficult to distinguish from other superposition states. Indeed, our uncertainty relation captures this tradeoff by bounding Bob's distinguishability of superposition states in terms of the local distinguishability of $\ket{\gamma_{0}}$,$\ket{\gamma_{1}}$ for each party.

\begin{lemma}[Uncertainty Relation]\label{lem:main-text-UR}
    For any vectors $\ket{\alpha_{0}}^{AB'}$ and $\ket{\alpha_{1}}^{AB'}$, if $\ket{\alpha_{\theta}}^{AB'} = \cos(\theta/2)\ket{\alpha_{0}}^{AB'} + e^{i\phi}\sin(\theta/2)\ket{\alpha_{1}}^{AB'}$ and $\ket{\alpha_{\omega}}^{AB'} = \cos(\omega/2)\ket{\alpha_{0}}^{AB'} + e^{i\phi'}\sin(\omega/2)\ket{\alpha_{1}}^{AB'}$, then
    \begin{equation}
    \; \TD(\alpha_{\theta}^{B'},\alpha_{\omega}^{B'}) 
        \leq  |z_1|F(\alpha_0^A,\alpha_1^A)+|z_2| \TD(\alpha_{0}^{B'}, \alpha_{1}^{B'})  \label{eq:TD-F-uncertainty-relation},
    \end{equation}
    where $z_1 = \frac{1}{2}(\sin(\theta)e^{-i\phi}-\sin(\omega)e^{-i\phi'})$ and $z_2 = \frac{1}{2}(\cos(\theta)-\cos(\omega))$.
\end{lemma}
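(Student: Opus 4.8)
The plan is to reduce the whole statement to a single inequality that bounds the residual coherence seen by Bob in terms of the fidelity of Alice's reduced states. First I would expand the two rank-one operators $\ket{\alpha_{\theta}}\bra{\alpha_{\theta}}$ and $\ket{\alpha_{\omega}}\bra{\alpha_{\omega}}$ in the (unnormalized) basis $\{\ket{\alpha_{0}},\ket{\alpha_{1}}\}$ and subtract. Using $2\cos(\theta/2)\sin(\theta/2)=\sin\theta$ and $\cos^{2}(\theta/2)-\cos^{2}(\omega/2)=\tfrac12(\cos\theta-\cos\omega)$, the diagonal terms in $\ket{\alpha_{0}}\bra{\alpha_{0}},\ket{\alpha_{1}}\bra{\alpha_{1}}$ collapse into the coefficient $z_{2}$ and the cross terms into the Hermitian combination built from $z_{1}$. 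Writing $M:=\tr_{A}(\ket{\alpha_{0}}\bra{\alpha_{1}})$ and tracing out $A$ then gives
\[
\alpha_{\theta}^{B'}-\alpha_{\omega}^{B'} = z_{2}\,(\alpha_{0}^{B'}-\alpha_{1}^{B'}) + z_{1} M + \overline{z_{1}}\,M^{\dagger},
\]
with exactly the $z_{1},z_{2}$ appearing in the statement.

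Second, I would apply the triangle inequality for the trace norm to this Hermitian operator:
\[
2\,\TD(\alpha_{\theta}^{B'},\alpha_{\omega}^{B'}) = \Vert \alpha_{\theta}^{B'}-\alpha_{\omega}^{B'}\Vert_{1} \le |z_{2}|\,\Vert \alpha_{0}^{B'}-\alpha_{1}^{B'}\Vert_{1} + |z_{1}|\,\Vert M\Vert_{1} + |z_{1}|\,\Vert M^{\dagger}\Vert_{1}.
\]
Since $\Vert \alpha_{0}^{B'}-\alpha_{1}^{B'}\Vert_{1} = 2\,\TD(\alpha_{0}^{B'},\alpha_{1}^{B'})$ and $\Vert M^{\dagger}\Vert_{1}=\Vert M\Vert_{1}$, dividing by two reduces the entire lemma to establishing the single bound $\Vert M\Vert_{1} \le F(\alpha_{0}^{A},\alpha_{1}^{A})$.

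The crux — the step I expect to carry the real content — is this fidelity bound on the coherence term. Writing $\ket{\alpha_{0}}=\sum_{i}\ket{i}^{A}\ket{b_{i}^{0}}^{B'}$ and $\ket{\alpha_{1}}=\sum_{i}\ket{i}^{A}\ket{b_{i}^{1}}^{B'}$ in an orthonormal basis of $A$, one computes $M=\sum_{i}\ket{b_{i}^{0}}\bra{b_{i}^{1}}$ and $\bra{\alpha_{0}}(\mathbb{I}_{A}\otimes U)\ket{\alpha_{1}}=\tr(U M^{\dagger})$ for any unitary $U$ on $B'$. By the variational characterization of the trace norm, $\Vert M\Vert_{1}=\Vert M^{\dagger}\Vert_{1}=\max_{U}|\tr(U M^{\dagger})|=\max_{U}|\bra{\alpha_{0}}(\mathbb{I}_{A}\otimes U)\ket{\alpha_{1}}|$. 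Since every $(\mathbb{I}_{A}\otimes U)\ket{\alpha_{1}}$ is a purification of $\alpha_{1}^{A}$ while $\ket{\alpha_{0}}$ purifies $\alpha_{0}^{A}$, Uhlmann's theorem bounds each overlap by $F(\alpha_{0}^{A},\alpha_{1}^{A})$, so maximizing over $U$ gives $\Vert M\Vert_{1}\le F(\alpha_{0}^{A},\alpha_{1}^{A})$ (in fact with equality, though only the inequality is needed, which conveniently sidesteps any dimension bookkeeping for the purifying system). Substituting this into the triangle-inequality bound above yields precisely \eqref{eq:TD-F-uncertainty-relation}. The only delicate point is tracking the half-angle coefficients and the factor $\tfrac12$ relating $\TD$ to $\Vert\cdot\Vert_{1}$, all of which I expect to fold cleanly into $z_{1}$ and $z_{2}$.
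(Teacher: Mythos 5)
Your proposal is correct and follows essentially the same route as the paper's proof: the same expansion of $\alpha_{\theta}-\alpha_{\omega}$ into the $z_{2}$ diagonal part and the $z_{1}$ cross term, the same triangle-inequality split after tracing out $A$, and the same appeal to Uhlmann's theorem for the key fidelity bound. The only (minor, and arguably cleaner) difference is in the last step: you bound the cross term via the exact identity $\Vert \tr_{A}(\ket{\alpha_0}\bra{\alpha_1})\Vert_{1}=\max_{U}|\bra{\alpha_0}(\mathbb{I}\otimes U)\ket{\alpha_1}|=F(\alpha_{0}^{A},\alpha_{1}^{A})$, whereas the paper keeps the Hermitian combination $z_{1}M+\overline{z_{1}}M^{\dagger}$ intact and exhibits an explicit unitary built from its eigenprojectors to reach the same fidelity bound.
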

We remark that we in fact prove a more general uncertainty relation for two vectors that both decompose into linear combinations of the same set of vectors (see the Supplemental Material), but the above suffices for the main results of this work. Moreover, we highlight that this lemma may be seen as our major technical contribution. As noted earlier, unlike other uncertainty relations that rely upon the incompatibility of the measurement directly (see \cite{Coles-2017a} for a review), ours follows from the geometric relation between the initial quantum states.

To appreciate the utility of Proposition \ref{lem:main-text-UR}, we immediately obtain a no-go result for broadcasting the orthogonality of any two distinct pairs of basis vectors $\{\ket{0},\ket{1}\}$ and $\{\ket{\hat{n}},\ket{-\hat{n}}\}$, where $\ket{\hat{n}}=\cos(\theta/2)\ket{0}+\sin(\theta/2)e^{i\phi}\ket{1}$ and $\ket{-\hat{n}}=\sin(\theta/2)\ket{0}-\cos(\theta/2)e^{i\phi}\ket{1}$ with $\theta\in(0,\pi)$. 
\begin{corollary}
\label{Cor:orthogonality-broadcasting-no-go}
    Orthogonality broadcasting is not possible for the set of states $\mc{S}_\Theta=\{\ket{0},\ket{1},\ket{\hat{n}},\ket{-\hat{n}}\}$ if $\ket{\hat{n}} \not \in \{\ket{0},\ket{1}\}$.
\end{corollary}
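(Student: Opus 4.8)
The plan is to argue by contradiction with the uncertainty relation of Lemma~\ref{lem:main-text-UR}. Assume $\mc{S}_\Theta$ admits perfect orthogonality broadcasting. As explained above, we may take the broadcasting map to be an isometry $U$ with Alice's output system $A$ and Bob's output system $B'$, matching the convention of Lemma~\ref{lem:main-text-UR}. Set $\ket{\alpha_0}:=U\ket{0}$ and $\ket{\alpha_1}:=U\ket{1}$. By linearity,
\[
U\ket{\hat{n}}=\cos(\theta/2)\ket{\alpha_0}+e^{i\phi}\sin(\theta/2)\ket{\alpha_1},\qquad
U\ket{-\hat{n}}=\sin(\theta/2)\ket{\alpha_0}-e^{i\phi}\cos(\theta/2)\ket{\alpha_1},
\]
so these are exactly the vectors $\ket{\alpha_\theta}$ and $\ket{\alpha_\omega}$ of the lemma: the first has angle $\theta$ and phase $\phi$, and matching the coefficients of $U\ket{-\hat n}$ against the stated form forces $\omega=\pi-\theta$ and $\phi'=\phi+\pi$.

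Next I would read off the two facts the relation requires from the broadcasting hypothesis. Here $\Theta=\{0,1\}$, with basis $\{\ket{0},\ket{1}\}$ at $\theta=0$ and $\{\ket{\hat n},\ket{-\hat n}\}$ at $\theta=1$. Broadcasting the orthogonality of the first pair \emph{to Alice} means the reduced states $\alpha_0^A,\alpha_1^A$ have orthogonal support, i.e. $\Tr[\alpha_0^A\alpha_1^A]=0$, hence $F(\alpha_0^A,\alpha_1^A)=0$. Broadcasting the orthogonality of the second pair \emph{to Bob} means Bob's reduced states of $U\ket{\hat n}$ and $U\ket{-\hat n}$ have orthogonal support, i.e. $\TD(\alpha_\theta^{B'},\alpha_\omega^{B'})=1$.

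Finally I would substitute into~\eqref{eq:TD-F-uncertainty-relation}. A direct computation with $\omega=\pi-\theta$, $\phi'=\phi+\pi$ gives $|z_1|=\sin\theta$ and $|z_2|=|\cos\theta|$. Using $F(\alpha_0^A,\alpha_1^A)=0$ and $\TD(\alpha_0^{B'},\alpha_1^{B'})\le 1$, the relation becomes
\[
1=\TD(\alpha_\theta^{B'},\alpha_\omega^{B'})\le \sin\theta\,F(\alpha_0^A,\alpha_1^A)+|\cos\theta|\,\TD(\alpha_0^{B'},\alpha_1^{B'})\le|\cos\theta|.
\]
This forces $|\cos\theta|=1$, i.e. $\theta\in\{0,\pi\}$, contradicting $\ket{\hat n}\notin\{\ket 0,\ket 1\}$ (equivalently $\theta\in(0,\pi)$).

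Essentially all of the content sits in Lemma~\ref{lem:main-text-UR}, so the only real work is the bookkeeping that identifies $(\omega,\phi')$ for $\ket{-\hat n}$ and evaluates $z_1,z_2$. The one point that needs care---and the reason the argument closes---is the asymmetric assignment of the orthogonality conditions: Alice's computational-basis orthogonality is what kills the fidelity term (since $F(\alpha_0^A,\alpha_1^A)=0$), while Bob's $\hat n$-basis orthogonality is what pins the left-hand trace distance to $1$. It is precisely this mismatch that leaves the strictly-less-than-one factor $|\cos\theta|$ on the right and produces the contradiction.
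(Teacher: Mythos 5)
Your proposal is correct and follows essentially the same route as the paper: apply Lemma~\ref{lem:main-text-UR} to the isometric images of the four states, use Alice's orthogonality of $\alpha_0^A,\alpha_1^A$ to kill the fidelity term, and conclude $\TD(\alpha_\theta^{B'},\alpha_\omega^{B'})\le|\cos\theta|<1$, contradicting Bob's required orthogonality. The only cosmetic difference is your parametrization $(\omega,\phi')=(\pi-\theta,\phi+\pi)$ versus the paper's $\omega=\theta-\pi$, which yields the identical values $|z_1|=|\sin\theta|$ and $|z_2|=|\cos\theta|$.
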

\begin{proof} Without loss of generality, we assume that $\{\ket{0},\ket{1}\}$ remains locally orthogonal for Alice.  The isometry maps $\{\ket{0},\ket{1},\ket{\hat{n}},\ket{-\hat{n}}\}$ into $\{\ket{\alpha_0}^{AB'},\ket{\alpha_1}^{AB'},\ket{\alpha_\theta}^{AB'},\ket{\alpha_{\theta-\pi}}^{AB'}\}$ of Proposition \ref{lem:main-text-UR}, where we take $\omega=\theta-\pi$ and $\phi=-\phi$.  Equation \eqref{eq:TD-F-uncertainty-relation} then reads  
\begin{equation}
\TD(\alpha_{\theta}^{B'},\alpha_{\theta-\pi}^{B'}) 
        \leq  |\sin\theta| F(\alpha_0^A,\alpha_1^A)+|\cos\theta| \TD(\alpha_{0}^{B'}, \alpha_{1}^{B'}).\notag
\end{equation}
Orthogonality of $\alpha_0^A$ and $\alpha_1^A$ implies that $F(\alpha_0^A,\alpha_1^A)=0$, and so $\TD(\alpha_{\theta}^{B'},\alpha_{\theta-\pi}^{B'}) 
        \leq  |\cos\theta| \TD(\alpha_{0}^{B'}, \alpha_{1}^{B'})\leq |\cos\theta|<1$, assuming $\theta \not \in \{\frac{k \pi}{2}\}_{k \in \mbb{N}}$, which is equivalent to $\ket{\hat{n}} \not \in \{\ket{0},\ket{1}\}$.  
This proves the corollary since $\alpha_{\theta}^{B'}$ and $\alpha_{\theta-\pi}^{B'}$ are orthogonal if and only if $\TD(\alpha_{\theta}^{B'},\alpha_{\theta-\pi}^{B'})=1$.
\end{proof}

Before moving forward, we remark that we may relax Lemma \ref{lem:main-text-UR} in terms of guessing probability, which follows from the Fuchs-van de Graaf inequality and the equiprobable case of the Holevo-Helstrom theorem, i.e.~$D_{\tr}(\rho,\sigma) = 2p_{g}(\rho,\sigma)-1$ where $p_{g}(\rho,\sigma)$ denotes the optimal guessing probability when the two states are equiprobable \cite{WatrousBook}.
\begin{corollary}\label{cor:guessing-probability-UR}
    Under the same conditions as Lemma \ref{lem:main-text-UR},
    \begin{equation}
    \begin{aligned}
        p_{g}(\alpha_{\theta}^{B'},\alpha_{\omega}^{B'}) \leq& \frac{1}{2}\Big[\vert z_{1} \vert \sqrt{1-(2p_{g}(\alpha_{0}^{A},\alpha_{1}^{A})-1)^{2}} + \vert z_{2} \vert(2p_{g}(\alpha_{0}^{B'},\alpha_{1}^{B'})-1) + 1\Big] \ .
    \end{aligned}
    \end{equation}
\end{corollary}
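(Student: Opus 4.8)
The plan is to obtain the corollary as a direct re-expression of the inequality proved in Lemma~\ref{lem:main-text-UR}, replacing every trace distance by a guessing probability and replacing the single fidelity term by an upper bound phrased in terms of a guessing probability. Both conversions are exactly the tools flagged in the remark immediately preceding the statement: the equiprobable Holevo--Helstrom identity $\TD(\rho,\sigma)=2p_{g}(\rho,\sigma)-1$, and the Fuchs--van de Graaf inequalities relating fidelity and trace distance. Throughout I will treat $\alpha_{0}^{A},\alpha_{1}^{A},\alpha_{0}^{B'},\alpha_{1}^{B'},\alpha_{\theta}^{B'},\alpha_{\omega}^{B'}$ as the (normalized) reduced density operators of the given vectors, consistent with the equiprobable two-state setting in which $p_{g}$ is defined.

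Concretely, I would begin from the conclusion of Lemma~\ref{lem:main-text-UR},
\[
\TD(\alpha_{\theta}^{B'},\alpha_{\omega}^{B'}) \leq |z_{1}|\,F(\alpha_{0}^{A},\alpha_{1}^{A}) + |z_{2}|\,\TD(\alpha_{0}^{B'},\alpha_{1}^{B'}),
\]
and dispatch the two trace distances on the $B'$ subsystem immediately via Holevo--Helstrom: substitute $\TD(\alpha_{\theta}^{B'},\alpha_{\omega}^{B'})=2p_{g}(\alpha_{\theta}^{B'},\alpha_{\omega}^{B'})-1$ on the left and $\TD(\alpha_{0}^{B'},\alpha_{1}^{B'})=2p_{g}(\alpha_{0}^{B'},\alpha_{1}^{B'})-1$ on the right. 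For the fidelity term I would invoke the upper Fuchs--van de Graaf bound, which in the paper's convention $F(\rho,\sigma)=\Vert\sqrt{\rho}\sqrt{\sigma}\Vert_{1}$ takes the form $F(\rho,\sigma)\leq\sqrt{1-\TD(\rho,\sigma)^{2}}$; applying it on the $A$ subsystem and combining once more with Holevo--Helstrom yields $F(\alpha_{0}^{A},\alpha_{1}^{A})\leq\sqrt{1-(2p_{g}(\alpha_{0}^{A},\alpha_{1}^{A})-1)^{2}}$.

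Substituting these three expressions into the Lemma inequality and solving the resulting linear inequality for $p_{g}(\alpha_{\theta}^{B'},\alpha_{\omega}^{B'})$ — that is, adding $1$ to both sides and dividing by $2$ — produces exactly the claimed bound. The one point genuinely requiring care is the \emph{direction} of the Fuchs--van de Graaf inequality: since $F(\alpha_{0}^{A},\alpha_{1}^{A})$ appears on the right-hand side with the nonnegative coefficient $|z_{1}|$, I must \emph{upper} bound it in order to preserve the inequality, which is why the $\sqrt{1-\TD^{2}}$ branch is the relevant one and why the stated fidelity convention matters. Because the coefficients $|z_{1}|,|z_{2}|$ are both nonnegative, the final rearrangement introduces no sign reversals, so beyond correctly bookkeeping these conventions there is no substantive obstacle.
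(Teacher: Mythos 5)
Your proposal is correct and matches the paper's own (very brief) justification exactly: the paper derives the corollary from Lemma~\ref{lem:main-text-UR} by the same two substitutions, namely the equiprobable Holevo--Helstrom identity $\TD(\rho,\sigma)=2p_{g}(\rho,\sigma)-1$ for both $B'$ trace distances and the Fuchs--van de Graaf bound $F\leq\sqrt{1-\TD^{2}}$ for the fidelity term, followed by the same rearrangement. Your attention to the direction of the Fuchs--van de Graaf inequality and to normalization is appropriate but raises no issue beyond what the paper implicitly assumes.
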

In the case $\mathcal{S}_{\Theta}$ decomposes into states
\begin{align}
    \rho_{0\vert 0} &= \ket{a_{0}}^{S} \quad \rho_{1\vert 0} = \ket{a_{1}}^{S} \\
    \rho_{0 \vert 1} &= \cos(\theta/2)\ket{a_{0}}^{S} + e^{i\phi}\sin(\theta/2)\ket{a_{1}}^{S} \\
    \rho_{1 \vert 1} &= \cos(\omega/2)\ket{a_{0}}^{S} + e^{i\phi}\sin(\omega/2)\ket{a_{1}}^{S} \ , 
\end{align}
Corollary \ref{cor:guessing-probability-UR} may be used to provide bounds on $P_{\text{bc}}(\cS_{\Theta})$ as the constraint will hold for \textit{any} choice of broadcasting channel $\cE$. We however use this idea in further generality in a subsequent section, Section \ref{sec:SDP-Bounds}, so we omit an example here. We remark, in the special case given above, that Propositions \ref{Prop:classical-orthogonal-broadcasting} and \ref{Prop:classical-broadcasting-equal-2} already imply there exists a direct method to determine the optimal value, although it is more computationally intensive.

\section{LOSQC/LOSCC state discrimination and QPV in the no Pre-Shared Entanglement Model}\label{sec:QPV} 
We now shift to the application of our results to QPV. The class of protocols we consider asks an honest prover to distinguish the state drawn from a GOP ensemble $\{\ket{a_k}\ket{b_k}\}_k$ by performing a joint measurement, which is depicted in Fig.~\ref{subfig:honest-QPV}. For clarity, we begin by identifying the communication structure of attacking such a QPV protocol, which is depicted in Fig.~\ref{subfig:dishonest-QPV}.  From the adversarial perspective, there are two dishonest provers, which we can call Alice and Bob, who each receive a portion of the global state that intersects their worldline on the spacetime diagram, split their local state in two to share part with the other party, and then, once each holds their respective shares, they may each guess $k$ to forward to their respective verifier.\footnote{We remark that without loss of generality there are only two dishonest verifiers as there are only two messages to intercept and respond to in this one spatial-dimensional setting.} Formally, Alice intercepts $\ket{a_k}^{A}$ and performs the isometry $U:\ket{a_k}^{A} \mapsto \ket{\alpha_{k}}^{A_{0}B_{1}}$ while Bob intercepts $\ket{b_k}^{B}$ and applies $V:\ket{b_k}^{B} \mapsto \ket{\beta_{k}}^{B_{0}A_{1}}$. This is without loss of generality as any communication strategy allowed by quantum mechanics under the given timing constraints can be represented by this form. After the quantum communication, Alice holds systems $A_{0}A_{1}$ and Bob holds $B_{0}B_{1}$, and they both must perform local measurements to learn the value $k$. This is the most general strategy allowed by quantum mechanics in the no pre-shared entanglement (no-PE) model under the time constraints of special relativity. We call such the set of such communication strategies `local operations and simultaneous quantum communication' (LOSQC) and the optimal success probability over all such strategies can be expressed as 
\begin{align}\label{eq:optimal-LOSQC-strategy}
    \hspace{-2mm} \Pr_{\LOSQC}[\cS] \coloneq \max  \sum_{k} p(k) \Tr[\Pi_{k} \otimes \tau_{k} (U \otimes V)(\rho_{k})(U \otimes V)^{\dagger}] \ ,
\end{align}
where the $\{\rho_{k}\}_{k}$ denote the possible states, and as we just explained, without loss of generality the maximization is over isometries $U^{A \to A_0 B_1}$, $V^{B \to B_0 A_1}$ and POVMs $\{\Pi^{A_0 A_1}_{k}\}$, $\{\tau^{B_0 B_1}_{k}\}$. In other words, \eqref{eq:optimal-LOSQC-strategy} gives the exact optimal success probability of attacking a state discrimination QPV protocol on inputs $\{\rho_{k}\}_{k}$ in the no-PE model. If we restricted the adversaries to only using classical communication between them, we would consider `local operations and simultaneous classical communication' (LOSCC). In this case, rather than general isometries, Alice and Bob perform local instruments $\{\mc{A}_x\}_x$ and $\{\mc{B}_y\}_y$, respectively, with classical information $x$ going to Bob and $y$ going to Alice.

\begin{remark}
Before moving forward, we note that LOSQC is the relevant attack model even if there is pre-shared entanglement. For example, consider Alice and Bob will intercept some state from $\{\wt{\sigma}_{k}^{AB}\}_{k \in \cK}$ and pre-share some resource $\zeta^{A'B'}$. Then defining $\rho_{k} \coloneq \wt{\sigma}_{k} \otimes \zeta$ for all $k$ reduces studying attacking the QPV with resource state $\tau$ to the form of \eqref{eq:optimal-LOSQC-strategy} where the isometries now are $U^{AA' \to A_{0}B_{1}}, V^{BB' \to B_{0}A_{1}}$.
\end{remark}

\begin{figure}[H]
    \centering
    \begin{subfigure}[t]{0.5\textwidth}
        \centering
        \includegraphics[width=\columnwidth]{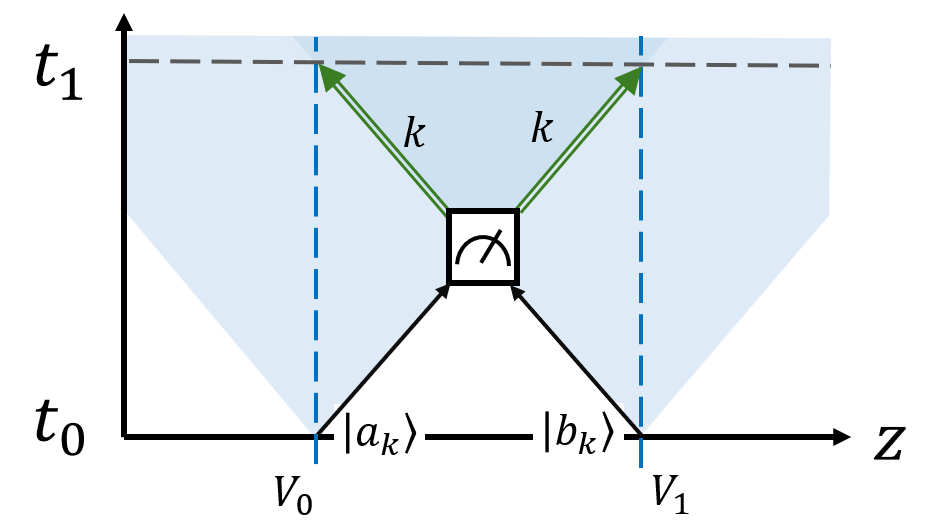}
        \caption{}\label{subfig:honest-QPV}
    \end{subfigure}%
    ~ 
    \begin{subfigure}[t]{0.5\textwidth}
        \centering
        \includegraphics[width=\columnwidth]{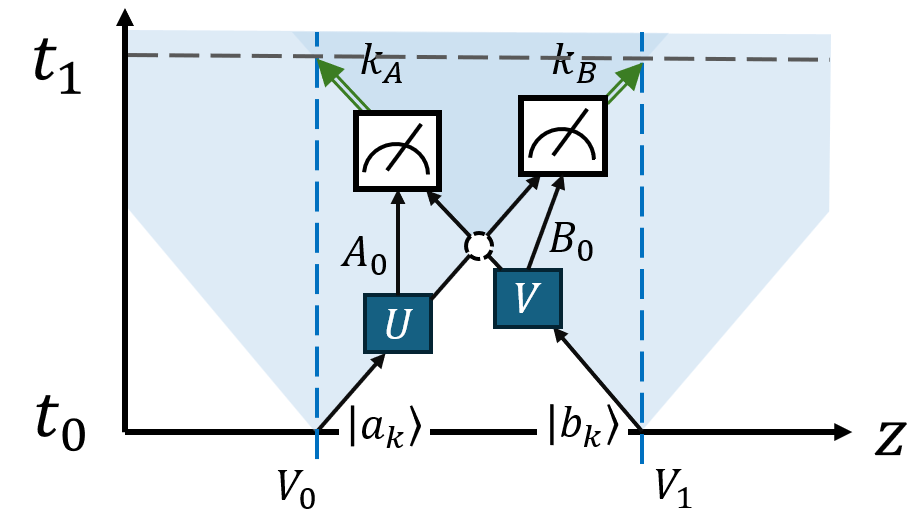}
        \caption{}\label{subfig:dishonest-QPV}
    \end{subfigure}
    \caption{Spacetime diagram of the QPV protocols considered in Section \ref{sec:QPV}. Green double-lines denote classical communication, black lines depict quantum communication. Blue dashed lines denote the worldlines of the verifiers $V_{0},V_{1}$. The dashed grey line denotes the required time to respond to the verifiers. (a) \textit{Honest prover case.} Verifiers $V_{0},V_{1}$ send $\ket{a_{k}},\ket{b_{k}}$ respectively at time $t_{0}$. The honest prover uses any measurement that distinguishes the set of GOP states $\{\ket{a_{k}}\ket{b_{k}}\}_{k}$, which determines $k$ with certainty. The prover then forwards the answer to each verifier to be received at $t_{1}$. (b) \textit{Dishonest prover case.} The dishonest provers intercept the states $\ket{a_{k}},\ket{b_{k}}$ respectively. As they both must answer, they split the inputs using isometries $U^{A \to A_{0}B_{1}}, V^{B \to A_{1}B_{0}}$ and forward the corresponding pieces to eachother. They then perform measurements to guess the value of $k$ and forward the verifiers their guesses. This is the structure of local operations and simultaneous communication (LOSQC). Note the dishonest setting requires neither prover's wordline be where the honest prover should be (the dotted white circle).}\label{fig:QPV-section}
\end{figure}

\subsection{Zero-Error LOSQC and LOSCC State Discrimination} 
Before directly addressing QPV, we study the problem of zero-error state discrimination in LOSQC and LOSCC. For GOPs with one system being a qubit, we find that LOSQC is no more powerful than LOSCC, and we obtain a reduction to classical orthogonality broadcasting. 
\begin{theorem}
\label{thm:C2-Cd-LOSCC}
    A GOP $\{\ket{a_k}^A\ket{b_k}^B\}_k\subset\mbb{C}^2\otimes\mbb{C}^d$ is perfectly distinguishable by LOSCC if and only if it is perfectly distinguishable by LOSQC.  Moreover, to be distinguishable by either, the ensemble must have the form
    \begin{align}\label{eq:C2-Cd-LOSCC-form}
\ket{0}^A\ket{s_{i|0}}^B,\quad \ket{1}^A\ket{s_{i|1}}^B,\quad
\ket{\phi_i}^A\ket{t_i}^B,
    \end{align}
    where the $\ket{\phi_i}$ are arbitrary states, the $\{\ket{s_{i|\theta}}\}_{i,\theta}$ is an ensemble whose orthogonality $\ip{s_{i|\theta}}{s_{j|\theta}}=\delta_{ij}$ can be classically broadcast, and each $\ket{t_i}$ is orthogonal to every other state on Bob's side.
\end{theorem}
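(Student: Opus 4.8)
The plan is to prove the displayed form is both necessary and sufficient for LOSQC distinguishability, after which the biconditional with LOSCC comes for free. Concretely, I would establish the chain of implications: (i) if the GOP is LOSQC distinguishable, then it has the displayed form with $\{\ket{s_{i|\theta}}\}$ classically broadcastable; (ii) if it has that form, then it is LOSCC distinguishable; and (iii) LOSCC distinguishability trivially implies LOSQC distinguishability, since classical communication is a special case of quantum communication. Composing (i)$\to$(ii)$\to$(iii) closes the loop, so all three properties coincide and each forces the form.

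The heart of the argument is (i). Fix an optimal LOSQC strategy with isometries $U^{A\to A_0B_1}$, $V^{B\to B_0A_1}$, and write $\cN(\cdot)=\tr_{B_1}U(\cdot)U^\dagger$, $\cN^c(\cdot)=\tr_{A_0}U(\cdot)U^\dagger$ for Alice's channel and its complement, and $\cM(\cdot)=\tr_{A_1}V(\cdot)V^\dagger$, $\cM^c$ for Bob's. Because the global state factors as $\ket{\alpha_k}^{A_0B_1}\otimes\ket{\beta_k}^{B_0A_1}$, Alice's reduced state is $\cN(a_k)\otimes\cM^c(b_k)$ and Bob's is $\cM(b_k)\otimes\cN^c(a_k)$, and perfect distinguishing means each family is pairwise orthogonal. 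I would call a pair $(k,k')$ \emph{essential} if $\ket{a_k}\perp\ket{a_{k'}}$ but $\ip{b_k}{b_{k'}}\neq0$, i.e.\ the pair is separated only on Alice's side. For such a pair, monotonicity of fidelity forbids any channel from mapping the non-orthogonal $\ket{b_k},\ket{b_{k'}}$ to orthogonal states, so $\cM(b_k)\not\perp\cM(b_{k'})$ and $\cM^c(b_k)\not\perp\cM^c(b_{k'})$; hence the orthogonality of both reduced families is forced onto the $a$-factors, giving $\cN(a_k)\perp\cN(a_{k'})$ and $\cN^c(a_k)\perp\cN^c(a_{k'})$. In other words, $U$ must broadcast the orthogonality of the qubit basis $\{\ket{a_k},\ket{a_{k'}}\}$.

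Since $A=\mbb{C}^2$, orthogonality on the $A$-side pairs the distinct rays $\{\ket{a_k}\}$ into at most one orthogonal partner each. If two essential pairs used two \emph{different} orthonormal bases $\{v,v^\perp\}$ and $\{w,w^\perp\}$, then the single isometry $U$ would simultaneously broadcast the orthogonality of $\{v,v^\perp,w,w^\perp\}$, which Corollary~\ref{Cor:orthogonality-broadcasting-no-go} forbids whenever $w\notin\{v,v^\perp\}$. Thus all essential pairs live in one orthonormal basis, which I relabel $\{\ket0,\ket1\}$. Any state whose $A$-ray is not $\ket0$ or $\ket1$ then lies in no essential pair, so by the GOP condition its $B$-part is orthogonal to every other state; these are the $\ket{\phi_i}\ket{t_i}$. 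The remaining states carry $A$-ray $\ket0$ or $\ket1$, and within each ray the shared $A$-part forces their $B$-parts to be mutually orthogonal, yielding the orthonormal families $\{\ket{s_{i|\theta}}\}_i$ for $\theta\in\{0,1\}$. Finally, restricting the optimal strategy to the $s$-states and repeating the reduced-state bookkeeping gives $\cM(s_{i|\theta})\perp\cM(s_{j|\theta})$ and $\cM^c(s_{i|\theta})\perp\cM^c(s_{j|\theta})$ for $i\neq j$, i.e.\ $V$ broadcasts the orthogonality of $\{\ket{s_{i|\theta}}\}_{i,\theta}$; since this ensemble has $|\Theta|=2$, Proposition~\ref{Prop:classical-broadcasting-equal-2} upgrades quantum to \emph{classical} broadcastability, completing the form.

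For the sufficiency direction (ii), I would exhibit a one-round LOSCC protocol. Bob first measures the projector onto $\mathrm{span}\{\ket{t_i}\}$; because each $\ket{t_i}$ is orthogonal to everything, a positive outcome identifies $i$ outright and he forwards $k$ to Alice. Otherwise the state is some $\ket\theta^A\ket{s_{i|\theta}}^B$: Alice measures her qubit in $\{\ket0,\ket1\}$ to learn $\theta$ and sends it, while Bob applies the classical-broadcasting POVM $\{\Pi_x\}$ of $\{\ket{s_{i|\theta}}\}$ and sends his outcome $x$; by Eq.~\eqref{Eq:classical-broadcast-condition} the pair $(\theta,x)$ determines $i$, so after the simultaneous exchange both parties recover $k$. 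I expect the main obstacle to be the structural step (i): correctly collapsing the two-sided, split-input distinguishing condition into a statement about a \emph{single} isometry broadcasting a qubit basis (which is what makes the no-go corollary applicable), and then ensuring the extracted broadcastability is classical rather than merely quantum via the $|\Theta|=2$ equivalence.
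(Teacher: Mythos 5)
Your proposal is correct and follows essentially the same route as the paper: you pin down the structure of any LOSQC-distinguishable ensemble by showing that two distinct ``essential'' qubit bases would force Alice's isometry to broadcast the orthogonality of two incompatible bases, contradicting Corollary~\ref{Cor:orthogonality-broadcasting-no-go}, and you then upgrade Bob's quantum orthogonality broadcasting of the resulting $|\Theta|=2$ ensemble to classical broadcasting via Proposition~\ref{Prop:classical-broadcasting-equal-2} --- precisely the two ingredients in the paper's Lemmas~\ref{lem:dim-reduc-part-1} and~\ref{lem:C2-Cd-LOSQC-to-LOSCC-reduction}. The only differences are organizational: the paper reaches the same structural form by iteratively projecting Bob's space into orthogonal subspaces before invoking the uncertainty relation, and it leaves the sufficiency protocol implicit where you spell it out.
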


While it is straightforward to see one may efficiently determine if an input set of product states has the form in \eqref{eq:C2-Cd-LOSCC-form} via Gram matrices, it is unclear if there exists an efficient method for determining if a set of the form in \eqref{eq:C2-Cd-LOSCC-form} is in fact perfectly discriminable under LOSCC. In particular, by Propositions \ref{Prop:classical-orthogonal-broadcasting} and \ref{Prop:classical-broadcasting-equal-2}, deciding whether a $2\otimes d$ GOP ensemble can be perfectly distinguished by LOSQC/LOSCC reduces to the post-information discrimination problem depicted in Fig.~\ref{Fig:orthogonality_broadcasting}(b). However, the method for determining $P_{\text{bc}}(\cS_{\Theta})$ provided below the Proof of Proposition \ref{Prop:classical-broadcasting-equal-2} is not efficient in the sense that it scales exponentially in the dimension of the states. Nonetheless, for the simplest case of two qubits, one yet again recovers Corollary \ref{Cor:orthogonality-broadcasting-no-go}, so in this case it is easy to determine. On the other hand, Proposition \ref{prop:qubit-qutrit-example} shows that in $\mbb{C}^{2} \otimes \mbb{C}^{3}$ the characterization of all GOP sets perfectly distinguishable under LOSQC or LOSCC is already non-trivial.

The equivalence between LOSCC and LOSQC stated in Theorem \ref{thm:C2-Cd-LOSCC} is an extension of Proposition \ref{Prop:classical-broadcasting-equal-2} to the QPV setting. We can obtain a separation between LOSCC and LOSQC from the ensemble of Theorem \ref{Thm:classical-broadcasting-separation} as a direct corollary to Theorem \ref{thm:C2-Cd-LOSCC}.
\begin{corollary}\label{cor:LOSQC-LOSCC-separation}
    The following states are distinguishable by LOSQC but not LOSCC:
    \begin{align}
\ket{\psi_1}\!&=\ket{0}\ket{+_{12}},&\ket{\psi_2}\!&=\ket{1}\ket{+_{13}},&\ket{\psi_3}\!&=\ket{2}\ket{\wt{+}_{23}}\notag\\
\ket{\psi^\perp_1}\!&=\ket{0}\ket{-_{12}},&\ket{\psi^\perp_2}\!&=\ket{1}\ket{-_{13}},&\ket{\psi^\perp_3}\!&=\ket{2}\ket{\wt{-}_{23}}.\notag
    \end{align}
\end{corollary}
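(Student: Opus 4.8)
The plan is to read the six states as the ensemble $\mc{S}_\Theta$ of Theorem \ref{Thm:classical-broadcasting-separation} (after relabeling the $B$-register computational basis by $m\mapsto m+1$) sitting in the second tensor factor, decorated by a fully classical first register that records the value $\theta\in\{0,1,2\}$. Concretely, writing each $\ket{\psi_i},\ket{\psi_i^\perp}$ as $\ket{\theta}^A\ket{b_{i|\theta}}^B$, the $A$-register only ever holds one of the mutually orthogonal states $\ket{0},\ket{1},\ket{2}$, while for each fixed $\theta$ the pair $\{\ket{b_{0|\theta}},\ket{b_{1|\theta}}\}$ is exactly one pair of $\mc{S}_\Theta$. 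Because one register is classical, this is precisely the situation in which distinguishing the GOP under LOSQC (resp.\ LOSCC) coincides with quantum (resp.\ classical) orthogonality broadcasting of $\mc{S}_\Theta$, which is the reduction behind Theorem \ref{thm:C2-Cd-LOSCC}; the only difference here is that the classical register is three-dimensional, which is immaterial since its classicality, not its dimension, drives the correspondence.

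For LOSQC distinguishability I would exhibit the explicit attack suggested by this correspondence. Alice, holding $A=\ket{\theta}$, coherently copies it with the isometry $\ket{\theta}\mapsto\ket{\theta}^{A_0}\ket{\theta}^{B_1}$, keeps $A_0$, and forwards $B_1$ to Bob. Bob, holding $B=\ket{b_{i|\theta}}$, applies the broadcasting isometry $U$ constructed in the proof of Theorem \ref{Thm:classical-broadcasting-separation} (relabeled), which maps $\ket{b_{i|\theta}}$ to a bipartite state $\sigma_{i|\theta}^{A'B'}$ whose two marginals stay orthogonal in $i$ for each fixed $\theta$; he keeps $B'=:B_0$ and forwards $A'=:A_1$ to Alice. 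These two transmissions are simultaneous, so the strategy is admissible LOSQC in the sense of \eqref{eq:optimal-LOSQC-strategy}. After the exchange, Alice reads $\theta$ off $A_0$ and, knowing $\theta$, measures $A_1=A'$ in the basis discriminating $\{\tr_{B'}\sigma_{i|\theta}\}_i$ to recover $i$; symmetrically Bob reads $\theta$ off $B_1$ and measures $B_0=B'$ to recover $i$. Both parties therefore output $k=(\theta,i)$ with certainty.

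The impossibility under LOSCC is the crux, and I would prove it by showing that Alice can never recover $i$. Since the global state $\ket{\theta}^A\ket{b_{i|\theta}}^B$ is a product across the $A\vert B$ cut and the $A$-register is independent of the index $i$, every quantum system Alice ever holds is uncorrelated with $i$; in LOSCC she receives from Bob only the classical symbol $y$. Moreover, because the single round of communication is simultaneous, Bob must fix his instrument on $B$ before learning $\theta$, so the effective POVM $\{\Pi_y\}$ he applies is $\theta$-independent. For Alice to name $i$ correctly, the pair $(\theta,y)$ must determine $i$, and since $\theta$ is freely available to her this forces, for every $y$, $\theta$, and $i\neq j$, the condition $\bra{b_{i|\theta}}\Pi_y\ket{b_{i|\theta}}\,\bra{b_{j|\theta}}\Pi_y\ket{b_{j|\theta}}=0$, which is exactly the classical orthogonality broadcasting condition \eqref{Eq:classical-broadcast-condition} for $\mc{S}_\Theta$. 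Theorem \ref{Thm:classical-broadcasting-separation} shows no such POVM exists, since any three states of $\mc{S}_\Theta$ are linearly independent, so Alice cannot succeed and perfect LOSCC discrimination is impossible.

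The main obstacle is making this LOSCC reduction airtight against the most general one-round strategy, in which each party may apply an arbitrary instrument, retain a post-measurement quantum state, exchange classical data in both directions simultaneously, and then post-process locally. I would argue that none of this helps Alice, because (i) the product structure guarantees her retained states depend only on $\theta$, never on $i$, and (ii) simultaneity prevents Bob's measurement statistics from depending on $\theta$. Consequently Alice's guess of $i$ is a function of $(\theta,y)$ alone, and perfect success collapses to the non-existence of a classical broadcasting POVM for $\mc{S}_\Theta$. Care is needed to phrase (i)--(ii) so that they also cover shared classical randomness carried by the messages and any adaptivity permitted within a single simultaneous round.
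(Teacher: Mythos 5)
Your proposal is correct and follows essentially the same route as the paper: the paper treats this as an immediate consequence of the correspondence between LOSQC/LOSCC attacks on classical--quantum GOP ensembles and quantum/classical orthogonality broadcasting (since Alice's register only ever carries $\theta$, she just copies it, and all the work falls on Bob), with Theorem~\ref{Thm:classical-broadcasting-separation} supplying both the explicit broadcasting isometry for the LOSQC attack and the linear-independence obstruction ruling out any classical-broadcasting POVM. Your write-up merely unpacks that correspondence explicitly, and your closing caveats (convexity disposes of shared randomness for zero-error discrimination; a single simultaneous round admits no adaptivity) are handled the same way the paper's reduction implicitly does.
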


The ensemble in the above corollary provides the first example of a product state input QPV protocol that, in the no-PE model, is secure against attackers restricted to LOSCC but completely broken by attackers restricted to LOSQC operations. This resolves an open question from \cite{Allerstorfer-2022a}, where a separation between LOSQC and LOSCC attackers was shown when the inputs were entangled.

\subsection{Analytic Error Bounds for QPV in the no-PE Model} While the problem of perfect state discrimination is important for understanding the fundamental limitations of LOSCC and the comparative power of LOSQC, in practical QPV and related tasks one typically needs to replace perfect discrimination with bounded error. Specifically, as already shown, to establish security guarantees against adversaries with quantum communication, we need error bounds on how well a GOP ensemble can be discriminated by LOSQC. Establishing such bounds is difficult because the success probability of the optimal attack given by \eqref{eq:optimal-LOSQC-strategy} is highly non-linear.  However, when the inputs are product $\cS = \{\ket{a_{k}}^{A}\ket{b_{k}}^{B}\}$ we may use the tools and insights we developed for orthogonality broadcasting to obtain upper bounds on \eqref{eq:optimal-LOSQC-strategy}. A simple way to see this is to note that because the optimal strategy is limited by the party that guesses incorrectly most frequently, we have
\begin{equation}\label{eq:LOSQC-to-broadcasting-measure}
\begin{aligned}
    \Pr_{\LOSQC}[\cS] \leq \max_{U,V} \min_{P \in \{A,B\}} \max_{ \{\gamma_{k}\}} \sum_{k}p(k)\Tr[\gamma_{k} \sigma_{k}^{P_0 P_1}] \ ,
\end{aligned}
\end{equation}
where $\sigma_{k}^{P_0 P_1} \coloneq \Tr_{\ol{P}_0 \ol{P}_{1}}[(U \otimes V)(\rho_{k})(U \otimes V)^{\dagger}]$, $\ol{P}$ denotes the other party than $P$, and the second maximization is over POVMs. Note that the RHS may be viewed as a direct generalization of $P_{\text{bc}}(\cS_{\theta})$ in \eqref{Eq:broadcasting-defn}. Indeed, in the case that one input is always classical, so that without loss of generality it is simply copied, the optimal LOSQC (resp.~LOSCC) strategy is determined by the optimal orthogonality (resp.~classical orthogonality) broadcasting strategy. This observation alone allows us to obtain results from our earlier results.
\begin{corollary}\label{cor:LOSQC-claims-from-ortho-broadcast}
    Let $\cS = \{\ket{a_{k}}^{A}\ket{b_{k}}^{B}\}$ where Bob's input is classical, i.e. $\ket{b_{k}} \in  \{\ket{z}\}_{z \in \cZ}$ for all $k$ where $\{\ket{z}^{Z}\}$ is an orthonormal basis of $B$. Then there exists $\cS_{\Theta}$ induced by $\cS$, such that
    \begin{align}
        \Pr_{\text{LOSCC}}[\cS] = P_{\text{c-bc}}[\cS_{\Theta}] = P_{\text{p-i}}[\cS_\Theta ] \ .
    \end{align}
    This in particular implies in this scenario the optimal LOSCC strategy never requires a quantum memory. Moreover, if $|\cZ| = 2$, then the optimal LOSQC strategy is an LOSCC strategy.
\end{corollary}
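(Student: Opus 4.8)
The plan is to exhibit the induced ensemble explicitly and then prove the two claimed equalities by a pair of matching reductions between $\mathrm{LOSCC}$ attacks on $\cS$ and classical orthogonality broadcasting of $\cS_{\Theta}$; the final ``$|\cZ|=2$'' assertion then follows by chaining with the relaxation \eqref{eq:LOSQC-to-broadcasting-measure} and Proposition \ref{Prop:classical-broadcasting-equal-2}. First I would define $\cS_{\Theta}$: set $\Theta = \{z(k)\}_{k}\subseteq\cZ$, where $\ket{b_k}=\ket{z(k)}$, and for each $\theta\in\Theta$ collect the $A$-vectors $\rho_{i|\theta}\coloneq\ket{a_{k}}$ over all $k$ with $z(k)=\theta$. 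Global orthogonality of $\cS$ forces $\langle a_k|a_{k'}\rangle\langle b_k|b_{k'}\rangle=0$, and since $\langle b_k|b_{k'}\rangle=\langle z(k)|z(k')\rangle=1$ whenever $z(k)=z(k')$, the vectors sharing a common $\theta$ are pairwise orthogonal, so $\cS_{\Theta}$ is a legitimate broadcasting ensemble. I would transport the prior by $p(\theta,i)\coloneq p(k(\theta,i))$, where $k(\theta,i)$ is the bijective relabelling of indices, so that guessing the pair $(\theta,i)$ is equivalent to guessing $k$. The equality $P_{\text{c-bc}}[\cS_{\Theta}]=P_{\text{p-i}}[\cS_{\Theta}]$ is then immediate from Proposition \ref{Prop:classical-orthogonal-broadcasting}, so the whole task reduces to proving $\Pr_{\text{LOSCC}}[\cS]=P_{\text{p-i}}[\cS_{\Theta}]$.

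For achievability ($\geq$), I would take an optimal pair $(\{\Pi_x\}_x,\,q(i|x,\theta))$ for $P_{\text{p-i}}[\cS_{\Theta}]$ in \eqref{eq:post-information-measure} and convert it into a single simultaneous round of $\mathrm{LOSCC}$: Bob measures his computational-basis input to learn $\theta=z(k)$ while Alice measures $\ket{a_k}$ with $\{\Pi_x\}_x$ to obtain $x$; they simultaneously exchange $\theta$ and $x$, after which both parties hold the identical classical record $(x,\theta)$ and both output the same guess $k(i,\theta)$ drawn from $q(\cdot|x,\theta)$. Because the two guesses coincide, the joint success probability equals the single-party success probability, which is exactly $P_{\text{p-i}}[\cS_{\Theta}]$. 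Moreover Alice measures her input immediately and discards the post-measurement state, so this optimal attack uses no quantum memory, which already settles the ``no quantum memory'' assertion.

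For the converse ($\leq$), I would start from an arbitrary $\mathrm{LOSCC}$ attack and use that the joint success cannot exceed Bob's marginal success, $\Pr_{\text{LOSCC}}[\cS]\le\sum_k p(k)\Tr[\mbbm{1}\otimes\tau_k\,(\cdots)]$. The key structural observation is that, in $\mathrm{LOSCC}$ with Bob's input classical, every piece of data Bob uses to guess is classical: he receives only a classical message $x$ from Alice, and because his own input is a fixed basis state $\ket{z}$, anything he generates locally is a function of $\theta$ alone. Writing $\Pi_x=\mc{A}_x^{\dagger}(\mbb{I})$ for the POVM induced by Alice's instrument, the probability Alice sends $x$ on input $\rho_{i|\theta}$ is $\Tr[\Pi_x\rho_{i|\theta}]$, and Bob's guessing rule collapses to a conditional distribution $q_B(i|x,\theta)$ applied to this classical broadcast. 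This is precisely a feasible point of \eqref{eq:post-information-measure}, so $\Pr_{\text{LOSCC}}[\cS]\le P_{\text{p-i}}[\cS_{\Theta}]$, giving the equality.

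Finally, for $|\cZ|=2$ I would chain $\Pr_{\text{LOSCC}}[\cS]\le\Pr_{\LOSQC}[\cS]\le P_{\text{bc}}[\cS_{\Theta}]=P_{\text{c-bc}}[\cS_{\Theta}]=\Pr_{\text{LOSCC}}[\cS]$. The first inequality is trivial; the second is the relaxation \eqref{eq:LOSQC-to-broadcasting-measure} specialized to classical $B$, where for \emph{any} $V$ each party's reduced state factors into a $\theta$-dependent register and its share of the broadcast of $\ket{a_k}$, so the right-hand side is dominated by $P_{\text{bc}}[\cS_{\Theta}]$; the middle equality is Proposition \ref{Prop:classical-broadcasting-equal-2} since $|\Theta|\le|\cZ|=2$; and the last equality is what was just proved. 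As all quantities coincide, the optimal $\mathrm{LOSQC}$ value is attained by the fully classical, hence $\mathrm{LOSCC}$, strategy from the achievability step. I expect the main obstacle to be the converse, specifically the rigorous justification that Bob's data collapses to the classical pair $(x,\theta)$ so that his marginal guessing is genuinely of the form \eqref{eq:post-information-measure}; once that structural reduction is secured, the remainder is bookkeeping with the prior and the index bijection.
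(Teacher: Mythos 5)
Your route is essentially the paper's: identify $\cS$ with an induced ensemble $\cS_{\Theta}$ (grouping the $\ket{a_k}$ by the classical value $z(k)$, with global orthogonality guaranteeing pairwise orthogonality within each $\theta$-class), invoke Proposition \ref{Prop:classical-orthogonal-broadcasting} for $P_{\text{c-bc}}=P_{\text{p-i}}$, match $\Pr_{\text{LOSCC}}$ to $P_{\text{p-i}}$ by a two-sided reduction, and chain through Proposition \ref{Prop:classical-broadcasting-equal-2} and the relaxation \eqref{eq:LOSQC-to-broadcasting-measure} for the $|\cZ|=2$ claim. You actually spell out the converse direction (that Bob's marginal in any LOSCC attack collapses to a feasible point of \eqref{eq:post-information-measure}) more carefully than the paper does, and that part is sound.

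There is, however, one small but genuine gap in your achievability step. You assert that after both parties hold the identical record $(x,\theta)$ they ``both output the same guess \ldots drawn from $q(\cdot|x,\theta)$,'' and conclude the joint success equals the single-party success. If the optimal post-processing $q$ were a nondegenerate distribution and the two parties sampled from it independently (you have not set up any shared randomness), the probability that \emph{both} output the correct $k$ would be $q(k|x,\theta)^2$ rather than $q(k|x,\theta)$, and the argument would not close. The fix is exactly the observation the paper makes below Proposition \ref{Prop:classical-broadcasting-equal-2}: the objective in \eqref{eq:post-information-measure} is linear in $q$, so $P_{\text{p-i}}[\cS_\Theta]$ is always attained by a \emph{deterministic} conditional distribution; with both parties applying that same deterministic function of $(x,\theta)$, their guesses coincide with certainty and the equality $\Pr_{\text{LOSCC}}[\cS]=P_{\text{c-bc}}[\cS_\Theta]$ (and the no-quantum-memory claim) follows. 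Once you insert that one line, the proof is complete and matches the paper's.
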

\begin{proof}
    By assumption, $\cS$ may be taken to be $\cS_{\Theta}$. Thus, $\Pr_{\text{LOSCC}}[\cS] = \Pr_{\text{LOSCC}}[\cS_{\Theta}] \leq P_{\text{c-bc}}[\cS_{\theta}] = P_{\text{p-i}}[\cS_{\Theta}]$, where the last equality is Proposition \ref{Prop:classical-orthogonal-broadcasting}. The concern would be that the inequality is strict. However, as explained below the proof of Proposition \ref{Prop:classical-broadcasting-equal-2}, $\Pr_{\text{p-i}}[\cS_{\Theta}]$ is always achieved by a deterministic conditional distribution. Thus, by Bob performing the optimal POVM for $P_{\text{p-i}}[\cS_{\Theta}]$ (See \eqref{eq:post-information-measure}) and having Alice and Bob use the same deterministic conditional distribution, they always guess the same $i$. Thus, $\Pr_{\text{LOSCC}}[\cS_{\Theta}] = P_{\text{c-bc}}[\cS_{\theta}]$. Moreover, this strategy uses a fully classical broadcast channel, so it requires no quantum memory. This establishes the first point. 
    
    For the second, $\Pr_{\LOSQC}[\cS_{\Theta}] \leq P_{\text{bc}}[\cS_{\Theta}] = P_{\text{c-bc}}[\cS_{\Theta}] = \Pr_{\text{LOSCC}}[\cS_{\Theta}] \leq \Pr_{\text{LOSQC}}[\cS_{\Theta}]$, where the first equality is Proposition \ref{Prop:classical-broadcasting-equal-2} and the second equality is what we just proved.
\end{proof}
We remark that the second point of Corollary \ref{cor:LOSQC-claims-from-ortho-broadcast} generalizes the observation that the optimal strategy for BB84 QPV \cite{Buhrman-2014a} is an LOSCC strategy as first proven in \cite{Tomamichel-2013a}.

While Corollary \ref{cor:LOSQC-claims-from-ortho-broadcast} is structurally important, as mentioned, the most important problem for QPV is upper bounds on the achievable state discrimination under LOSQC. We can obtain such bounds as an application of the uncertainty relation in Lemma \ref{lem:main-text-UR}. 
\begin{theorem}
\label{thm:discrimination-cond}
Consider any ensemble containing four states of the form
\begin{equation}
\begin{aligned}\label{min_ensemble}
    \ket{\psi_0}^{AB}&=\ket{a_0}^A\ket{b_0}^B\\ \ket{\psi_1}^{AB}&=\ket{a_1}^A\ket{b_1}^B \\    \ket{\psi_2}^{AB}&=\ket{a_2}^A(\cos\tfrac{\theta}{2}\ket{b_0}+\sin\tfrac{\theta}{2} e^{i\phi} \ket{b_1})^B \\    \ket{\psi_3}^{AB}&=\ket{a_3}^A(\cos\tfrac{\omega}{2}\ket{b_0}+\sin\tfrac{\omega}{2} e^{i\phi'} \ket{b_1})^B,
\end{aligned}
\end{equation}
with $\bra{a_{0}}\ket{a_{1}} > 0$. Suppose Alice and Bob can identify each state with at least probability $1-\ve$ using some LOSQC protocol; i.e. given state $\ket{\psi_k}^{AB}$ they both guess $k$ with probability at least $1-\ve$.  Then
\begin{align}
\label{Eq:thm-ineq}
    1 \leq \frac{2|z_1|\sqrt{\ve(1-\ve)}}{|\langle a_{0} | a_{1} \rangle|^{2}}&+ |z_2|\sqrt{1-|\langle b_{0} | b_{1} \rangle|^{2}} +\sqrt{1-|\langle a_{2} | a_{3} \rangle|^{2}}+ 2\ve,
\end{align}
where $z_1$ and $z_2$ are defined in Lemma \ref{lem:main-text-UR}.
\end{theorem}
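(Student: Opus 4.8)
The plan is to reduce the LOSQC discrimination problem to a single application of the uncertainty relation (Lemma~\ref{lem:main-text-UR}) on Bob's side, after peeling off the contributions of the $A$-registers using standard monotonicity and multiplicativity facts for fidelity and trace distance.

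First I would fix the LOSQC structure. Since every state $\ket{\psi_k}=\ket{a_k}^A\ket{c_k}^B$ is a product across the $A\vert B$ cut (where $\ket{c_0}=\ket{b_0}$, $\ket{c_1}=\ket{b_1}$, and $\ket{c_2},\ket{c_3}$ are the stated superpositions), and the adversaries act by $U^{A\to A_0B_1}\otimes V^{B\to B_0A_1}$, the post-communication state factors as $\ket{\alpha_k}^{A_0B_1}\otimes\ket{\beta_k}^{B_0A_1}$ with $\ket{\alpha_k}=U\ket{a_k}$ and $\ket{\beta_k}=V\ket{c_k}$. Writing $\ket{\beta_0}=V\ket{b_0}$ and $\ket{\beta_1}=V\ket{b_1}$, linearity of $V$ gives $\ket{\beta_2}=\cos(\theta/2)\ket{\beta_0}+\sin(\theta/2)e^{i\phi}\ket{\beta_1}$ and likewise for $\ket{\beta_3}$ — exactly the form required by Lemma~\ref{lem:main-text-UR} with its ``$A$'' register taken to be $A_1$ and its ``$B'$'' register taken to be $B_0$. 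Bob's reduced state for input $k$ is $\alpha_k^{B_1}\otimes\beta_k^{B_0}$ and Alice's is $\alpha_k^{A_0}\otimes\beta_k^{A_1}$.

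Next I would convert the hypotheses into trace-distance statements. Because each party guesses $k$ correctly with probability at least $1-\ve$, restricting to any two inputs and applying Holevo--Helstrom yields $\TD\ge 1-2\ve$ for the corresponding reduced pairs. I apply this to Bob distinguishing $\ket{\psi_2},\ket{\psi_3}$ and use subadditivity of trace distance under tensor products to split off the $B_1$ factor:
\[
1-2\ve\le \TD(\alpha_2^{B_1},\alpha_3^{B_1})+\TD(\beta_2^{B_0},\beta_3^{B_0}).
\]
The uncertainty relation, applied to $\ket{\beta_0},\ket{\beta_1}$ with the register assignment above, bounds the second term by $\vert z_1\vert F(\beta_0^{A_1},\beta_1^{A_1})+\vert z_2\vert\TD(\beta_0^{B_0},\beta_1^{B_0})$. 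Combining these gives $1-2\ve\le \vert z_1\vert F(\beta_0^{A_1},\beta_1^{A_1})+\vert z_2\vert\TD(\beta_0^{B_0},\beta_1^{B_0})+\TD(\alpha_2^{B_1},\alpha_3^{B_1})$, which already matches the shape of the claimed inequality.

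It remains to upper bound the three operator quantities by the stated overlaps, and here lies the only delicate point. The two trace-distance terms are easy: monotonicity of trace distance under the partial traces (together with $U,V$ being isometries, so $\langle\beta_0\vert\beta_1\rangle=\langle b_0\vert b_1\rangle$ and $\langle\alpha_2\vert\alpha_3\rangle=\langle a_2\vert a_3\rangle$) gives $\TD(\beta_0^{B_0},\beta_1^{B_0})\le\sqrt{1-\vert\langle b_0\vert b_1\rangle\vert^2}$ and $\TD(\alpha_2^{B_1},\alpha_3^{B_1})\le\sqrt{1-\vert\langle a_2\vert a_3\rangle\vert^2}$. The main obstacle is isolating $F(\beta_0^{A_1},\beta_1^{A_1})$ from Alice's \emph{joint} discrimination of $\ket{\psi_0},\ket{\psi_1}$, since Alice also holds the register $A_0$. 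I would handle this by combining three facts: Holevo--Helstrom plus Fuchs--van de Graaf turn Alice's success $\ge 1-\ve$ into $F(\alpha_0^{A_0}\otimes\beta_0^{A_1},\alpha_1^{A_0}\otimes\beta_1^{A_1})\le 2\sqrt{\ve(1-\ve)}$; multiplicativity of fidelity across the tensor factorization rewrites this as $F(\alpha_0^{A_0},\alpha_1^{A_0})\,F(\beta_0^{A_1},\beta_1^{A_1})$; and monotonicity of fidelity under the partial trace $\Tr_{B_1}$ gives $F(\alpha_0^{A_0},\alpha_1^{A_0})\ge\vert\langle a_0\vert a_1\rangle\vert$. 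Hence $F(\beta_0^{A_1},\beta_1^{A_1})\le 2\sqrt{\ve(1-\ve)}/\vert\langle a_0\vert a_1\rangle\vert \le 2\sqrt{\ve(1-\ve)}/\vert\langle a_0\vert a_1\rangle\vert^2$, where the positivity assumption $\langle a_0\vert a_1\rangle>0$ ensures a nonzero denominator and the last step uses $\vert\langle a_0\vert a_1\rangle\vert\le 1$ to reach the stated form. Substituting the three bounds into the displayed inequality and moving the $-2\ve$ to the right produces \eqref{Eq:thm-ineq}.
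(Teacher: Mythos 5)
Your proposal is correct and follows essentially the same route as the paper's proof: the same two uses of the discrimination hypothesis (Alice on $\ket{\psi_0},\ket{\psi_1}$ to force $F(\beta_0^{A_1},\beta_1^{A_1})\le 2\sqrt{\ve(1-\ve)}/\vert\langle a_0\vert a_1\rangle\vert^{2}$, Bob on $\ket{\psi_2},\ket{\psi_3}$ to force the product trace distance above $1-2\ve$), the same tensor-product splitting of the trace norm (the paper's Lemma~\ref{Lem:prod-norm}, which you invoke as a standard fact), and the same single application of Lemma~\ref{lem:main-text-UR} on the registers carrying $V\ket{b_0},V\ket{b_1}$. The only cosmetic differences are that you package the POVM manipulations as Holevo--Helstrom and factor the fidelity of the product state via multiplicativity rather than applying Fuchs--van de Graaf to the product directly, which yields the identical bound.
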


From Theorem \ref{thm:discrimination-cond}, we can recover the optimal LOSQC error bounds for BB84 QPV determined in \cite{Buhrman-2014a,Tomamichel-2013a}. This shows that Theorem 4, and by extension all steps in the proof including our uncertainty relation, are tight.
\begin{corollary}\label{cor:LOSQC-BB84}
    Given the set of globally orthogonal product states 
    $$ \{\ket{0}^{A}\ket{0}^{B}, \ket{0}^{A}\ket{1}^{B}, \ket{1}^{A}\ket{\hat{n}}^{B}, \ket{1}^{A}\ket{-\hat{n}}^{B}\} \ , $$ where $\ket{\hat{n}} = \cos(\theta/2)\ket{0} + \sin(\theta/2)e^{i\phi} \ket{1}$. The minimal $\ve$ such that both parties can guess the state with probability at least $\ve$ under LOSQC is 
    \begin{align}\label{eq:min-epsilon}
         \ve^{\star}(\theta) \geq \frac{1-\cos(\theta)+(1+\sqrt{2})\sin(\theta)^{2}}{2(1+\sin(\theta)^{2})} \ . 
    \end{align}
    In particular, this recovers the tight bounds for BB84 QPV where $\theta = \pi/2$.
\end{corollary}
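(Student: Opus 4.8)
The plan is to instantiate Theorem~\ref{thm:discrimination-cond} on the four BB84 states and then reduce the resulting four-term bound to a single quadratic in $\ve$. First I would line up the ensemble with the template \eqref{min_ensemble}: set $\ket{a_0}=\ket{a_1}=\ket{0}$, $\ket{a_2}=\ket{a_3}=\ket{1}$, and $\ket{b_0}=\ket{0}$, $\ket{b_1}=\ket{1}$, so that $\ket{\psi_2}$ carries the $B$-state $\ket{\hat n}=\cos\tfrac\theta2\ket0+\sin\tfrac\theta2 e^{i\phi}\ket1$ with exactly the Lemma's angle--phase pair $(\theta,\phi)$. The only nontrivial matching step is to write the orthogonal partner $\ket{-\hat n}=\sin\tfrac\theta2\ket0-\cos\tfrac\theta2 e^{i\phi}\ket1$ in the canonical form $\cos\tfrac\omega2\ket{b_0}+\sin\tfrac\omega2 e^{i\phi'}\ket{b_1}$ required by \eqref{min_ensemble}; comparing coefficients (with $\cos\tfrac\omega2\ge 0$) forces $\omega=\pi-\theta$ and $\phi'=\phi+\pi$.

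Next I would evaluate the constants of Lemma~\ref{lem:main-text-UR} at these parameters. Using $\sin\omega=\sin\theta$, $\cos\omega=-\cos\theta$ and $e^{-i\phi'}=-e^{-i\phi}$ gives $z_1=\sin\theta\,e^{-i\phi}$ and $z_2=\cos\theta$, so $|z_1|=|\sin\theta|$ and $|z_2|=|\cos\theta|$. The three overlaps in \eqref{Eq:thm-ineq} are immediate: $\ip{a_0}{a_1}=1$ (which also verifies the hypothesis $\ip{a_0}{a_1}>0$ and makes $|\ip{a_0}{a_1}|^2=1$), $\ip{b_0}{b_1}=0$, and $\ip{a_2}{a_3}=1$. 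Substituting these into \eqref{Eq:thm-ineq} annihilates the $\sqrt{1-|\ip{a_2}{a_3}|^2}$ term, sets $\sqrt{1-|\ip{b_0}{b_1}|^2}=1$, and collapses everything to the scalar inequality
\[
1 \;\le\; 2|\sin\theta|\sqrt{\ve(1-\ve)} + |\cos\theta| + 2\ve .
\]

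The remaining task is to invert this for $\ve$. I would isolate the root term as $2|\sin\theta|\sqrt{\ve(1-\ve)}\ge 1-|\cos\theta|-2\ve$ and square on the region where the right-hand side is nonnegative, producing $4(1+\sin^2\theta)\,\ve^2-4(1-\cos\theta)(2+\cos\theta)\,\ve+(1-\cos\theta)^2\le 0$. The computation that makes this clean is that the discriminant factors: substituting $\sin^2\theta=1-\cos^2\theta$ gives $(2+\cos\theta)^2-(1+\sin^2\theta)=2(1+\cos\theta)^2$, so its square root is $\sqrt2\,(1+\cos\theta)$ and the two roots simplify to $\tfrac{1-\cos\theta+(1\pm\sqrt2)\sin^2\theta}{2(1+\sin^2\theta)}$. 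Selecting the binding root and rearranging yields \eqref{eq:min-epsilon}; I would then specialize to $\theta=\pi/2$, where $\cos\theta=0$ and $\sin^2\theta=1$ give $\tfrac{2+\sqrt2}{4}=\cos^2(\pi/8)$, recovering the known tight BB84 value.

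The main obstacle is the final root-selection: squaring introduces a spurious solution (one root violates the sign condition $1-|\cos\theta|-2\ve\ge 0$ under which squaring is valid), so I expect the delicate point to be arguing which root is the genuinely binding constraint and confirming its simplified form matches \eqref{eq:min-epsilon} with the correct sign on the $\sqrt2\sin^2\theta$ term. By contrast, the state identification, the evaluation of $z_1,z_2$, and the three overlaps are routine once the representation $\omega=\pi-\theta$, $\phi'=\phi+\pi$ for $\ket{-\hat n}$ is fixed.
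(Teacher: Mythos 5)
Your proposal follows the paper's own route exactly: instantiate Theorem \ref{thm:discrimination-cond} with $\ket{a_0}=\ket{a_1}=\ket{0}$, $\ket{a_2}=\ket{a_3}=\ket{1}$, $\ket{b_0}=\ket{0}$, $\ket{b_1}=\ket{1}$, compute $|z_1|=|\sin\theta|$ and $|z_2|=|\cos\theta|$ (your representation $\omega=\pi-\theta$, $\phi'=\phi+\pi$ is equivalent to the paper's $\omega=\theta-\pi$), and reduce \eqref{Eq:thm-ineq} to the scalar inequality $1\le 2|\sin\theta|\sqrt{\ve(1-\ve)}+|\cos\theta|+2\ve$. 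Your algebra for the resulting quadratic and the factored discriminant is correct and considerably more explicit than the paper's one-line ``solving this as an equality condition.''

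The one genuine gap is exactly the root selection you flag at the end, and it resolves the opposite way from what your proposal (and the printed statement) asserts. For $\ve\le\tfrac{1}{2}(1-|\cos\theta|)$ the squared inequality is equivalent to the original one and confines $\ve$ to lie between the two roots $\ve_{\pm}=\tfrac{1-\cos\theta+(1\pm\sqrt{2})\sin^2\theta}{2(1+\sin^2\theta)}$, while for $\ve>\tfrac{1}{2}(1-|\cos\theta|)$ the original inequality holds automatically. Since $\ve_-\le\tfrac{1}{2}(1-|\cos\theta|)\le\ve_+$, the set of errors consistent with Theorem \ref{thm:discrimination-cond} is exactly $[\ve_-,1]$, so the bound one actually obtains is $\ve^{\star}(\theta)\ge\ve_-$, i.e.\ the coefficient should be $(1-\sqrt{2})$, not $(1+\sqrt{2})$. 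At $\theta=\pi/2$ one has $\ve_-=\tfrac{2-\sqrt{2}}{4}=\sin^2(\pi/8)$, the tight BB84 \emph{error}, whereas $\ve_+=\tfrac{2+\sqrt{2}}{4}=\cos^2(\pi/8)$ is the optimal \emph{success} probability; the claim $\ve^{\star}\ge\ve_+$ is falsified by the Breidbart strategy itself. To be fair, the paper's own proof contains the same slip (it writes $1-\ve^{\star}=1-\tfrac{1}{4}(2+\sqrt{2})=\tfrac{1}{2}+\tfrac{1}{2\sqrt{2}}$, which is arithmetically inconsistent, and the comparison with the achievable error $\sin^2(\theta/4)$ immediately after the corollary only makes sense for the $(1-\sqrt{2})$ root), so your hesitation about ``the correct sign on the $\sqrt{2}\sin^2\theta$ term'' is well founded --- but the proposal as written endorses the spurious root and, at $\theta=\pi/2$, conflates the error with the success probability.
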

\begin{proof}
    Note without loss of generality we may let $\phi = 0 = \phi'$ by rotating the states into the $\{\ket{0},\ket{1},\ket{+},\ket{-}\}$ plane of the Bloch sphere. For the states we consider, $\ket{-\hat{n}} = \cos(\frac{\theta-\pi}{2})\ket{0} + \sin(\frac{\theta-\pi}{2})\ket{1}$. It follows $z_{1} = \sin(\theta)$ and $z_2 = \cos(\theta)$ in Theorem \ref{thm:discrimination-cond}. Using Theorem \ref{thm:discrimination-cond}, one finds the condition 
    $$1 \leq 2\sin(\theta)\sqrt{\ve(1-\ve)} +\cos(\theta) + 2\ve \ . $$  Solving this as an equality condition, one finds \eqref{eq:min-epsilon}. To recover the BB84 case, one uses $\theta = \pi/2$ to obtain $1 -\ve^{\star} = 1 - \frac{1}{4}(2+\sqrt{2}) = \frac{1}{2}+\frac{1}{2\sqrt{2}}$. However, this can be achieved by Bob measuring each state in the `Breidbart basis', so this bound is tight.
\end{proof}
We remark that by generalizing the Breidbart basis measurement to measuring the states equidistant on the Bloch sphere between $\ket{0},\ket{\hat{n}}$, i.e. projecting $\ket{\hat{m}_{\theta}} = \cos(\theta/4)\ket{0} + \sin(\theta/4)\ket{1}$ and its orthogonal vector, one can obtain a minimum error of $1-\cos(\theta/4)^{2}$. This is a slightly larger error than given in \eqref{eq:min-epsilon}, which would suggest \eqref{eq:min-epsilon} is loose except for $\theta \in \{0,\pi/2\}$.

We provide another example of how Theorem \ref{thm:discrimination-cond} can be used to bound the minimum error guessing probability.

\begin{example} An unextendible product basis (UPB). Consider the tripartite UPB known as \textbf{\textsf{Shifts}} \cite{Bennett-1999a}.  Combining two of the parties yields the bipartite ensemble
\begin{align}
\ket{\psi_0}&=\ket{00}\ket{0},&\ket{\psi_1}&=\ket{+-}\ket{1},\notag\\    \ket{\psi_2}&=\ket{-1}\ket{+},&\ket{\psi_3}&=\ket{1+}\ket{-}.
\end{align}
If one of these is chosen with uniform probability and distributed to Alice and Bob, their smallest possible guessing error $\ve$ using LOSQC satisfies $\ve>5.52\times 10^{-4}$.  Clearly this also provides a lower bound on the tripartite error probability for \textbf{\textsf{Shifts}} under LOSQC.
 \end{example}

\subsection{Semidefinite Program Bounds}\label{sec:SDP-Bounds}
We remark a major appeal of Theorem \ref{thm:discrimination-cond} is that it is quick to evaluate. One can however in various cases directly obtain bounds on the probability of distinguishing a set of input states $\cS \coloneq \{\ket{\psi_{k}}^{AB}\}_{k}$ over a given distribution using LOSQC using the uncertainty relation to convert the problem into a SDP over vectors. For example, using the same tools as to establish Theorem \ref{thm:discrimination-cond}, we can obtain the following SDP, which is provably generally non-trivial.
\begin{proposition}\label{prop:SDP-version-of-Thm-4}
    Consider the ensemble $\cS$ of the form given in Theorem \ref{thm:discrimination-cond} and $\psi_0, \psi_1$ (resp.~$\psi_\theta, \psi_\omega$) are each provided with probability $p/2$ (resp.~$(1-p)/2$). Then, $\Pr_{\LOSQC}[\cS]$ is upper bounded by the optimal value of 
    \begin{equation}\label{eq:thm-4-sdp}
        \begin{aligned}
            \max \; \; & \min\{f(p,\cS,\mbf{r}), f(p,\cS,\mbf{s})\}-1 \\
            \text{s.t.} \; \; & 
            r_1 \leq \frac{1}{2}\Big[\vert z_{1} \vert \sqrt{1-(2s_0-1)^{2}} + \vert z_{2} \vert(2r_0-1) + 1\Big] \\
            & s_1 \leq \frac{1}{2}\Big[\vert z_{1} \vert \sqrt{1-(2r_0-1)^{2}} + \vert z_{2} \vert(2s_0-1) + 1\Big] \\
            & 0 \leq \mbf{r},\mbf{s} \leq 1
        \end{aligned}
    \end{equation}
    where $f(p,\cS,\mbf{x}) \coloneq  p \cdot [p_{g}(a_{0},a_{1}) + x_0] + (1-p) [p_{g}(a_3,a_4) + x_1]$. This is a convex optimization problem and equals unity if and only if $\langle a_0 \vert a_1 \rangle = 0 = \langle a_3 \vert a_4 \rangle$ and $\theta = \pm \arccos(1+\cos(\omega))$.
\end{proposition}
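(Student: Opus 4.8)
The plan is to follow the route of Theorem \ref{thm:discrimination-cond}, but to keep the two binary guessing probabilities of each party as \emph{free} variables rather than collapsing them into a single scalar inequality. Starting from the reduction \eqref{eq:LOSQC-to-broadcasting-measure}, I would fix arbitrary splitting isometries $U^{A\to A_0B_1}$ and $V^{B\to B_0A_1}$ and, for each party $P\in\{A,B\}$, bound the four–outcome discrimination by its two intra–pair binary discriminations: using the POVM–subset argument (if $\{\gamma_k\}$ is a POVM then $\Tr[\gamma_0\sigma_0]+\Tr[\gamma_1\sigma_1]\le 2p_g(\sigma_0,\sigma_1)$ since $\gamma_1\le I-\gamma_0$), one gets $\sum_k p(k)\Tr[\gamma_k\sigma_k^P]\le p\,p_g(\sigma_0^P,\sigma_1^P)+(1-p)\,p_g(\sigma_2^P,\sigma_3^P)$. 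Because each $\ket{\psi_k}$ is a product state, $\sigma_k^P$ is a tensor product of one marginal inherited from $U$ (carrying $a_k$) and one inherited from $V$ (carrying $b_k$); subadditivity of trace distance under tensor products, $\TD(\rho_1\otimes\rho_2,\tau_1\otimes\tau_2)\le\TD(\rho_1,\tau_1)+\TD(\rho_2,\tau_2)$, together with data processing, lets me peel off the fixed input overlaps $p_g(a_0,a_1)$, $p_g(a_2,a_3)$ and isolate the four reduced single–share guessing probabilities $(r_0,r_1)$ for Bob and $(s_0,s_1)$ for Alice, assembling into the affine objective $f$.

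The coupling constraints are then exactly the uncertainty relation. Since $V$ is linear and $b_2,b_3$ are the prescribed superpositions of $b_0,b_1$, the images $\beta_j:=Vb_j$ satisfy $\beta_2=\cos\tfrac{\theta}{2}\beta_0+\sin\tfrac{\theta}{2}e^{i\phi}\beta_1$ and $\beta_3=\cos\tfrac{\omega}{2}\beta_0+\sin\tfrac{\omega}{2}e^{i\phi'}\beta_1$, so $(\beta_0,\beta_1,\beta_2,\beta_3)$ meet the hypotheses of Lemma \ref{lem:main-text-UR}. Applying Corollary \ref{cor:guessing-probability-UR} with the two outputs $B_0$ and $A_1$ of $V$ playing the roles of the $A$– and $B'$–systems, once in each ordering, yields precisely the two inequalities of \eqref{eq:thm-4-sdp}. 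As these hold for \emph{every} choice of $U,V$, relaxing to an optimization over all $(\mbf r,\mbf s)\in[0,1]^4$ obeying the two constraints gives a valid upper bound on $\Pr_{\LOSQC}[\cS]$.

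Convexity is quick. The function $f$ is affine in its vector argument, so $\min\{f(\mbf r),f(\mbf s)\}$ is concave and we maximize a concave objective. Each constraint reads $r_1\le\tfrac12\bigl[\,|z_1|\sqrt{1-(2s_0-1)^2}+|z_2|(2r_0-1)+1\,\bigr]$ (and its $\mbf r\leftrightarrow\mbf s$ mirror); since $s_0\mapsto\sqrt{1-(2s_0-1)^2}$ is concave on $[0,1]$, the right–hand side is concave in $(r_0,s_0)$, so $r_1-\text{RHS}$ is convex and the sublevel set it defines is convex; intersecting with the box gives a convex feasible region. Rewriting $\sqrt{1-(2s_0-1)^2}\ge w$ as the rotated second–order–cone constraint $(2s_0-1)^2+w^2\le 1$ exhibits the program as an SDP.

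The delicate part, and the main obstacle, is the equality characterization. For sufficiency, when $\langle a_0|a_1\rangle=0=\langle a_2|a_3\rangle$ the input factors give $p_g(a_0,a_1)=p_g(a_2,a_3)=1$, and I would exhibit a feasible $(\mbf r,\mbf s)$ that drives the objective to unity by saturating the uncertainty–relation constraints; this is achievable precisely when the geometric coefficients $z_1,z_2$ permit it, which I expect to reduce to the relation $\cos\theta=1+\cos\omega$ between the Bloch angles. For necessity, if the optimum equals unity then every inequality in the chain must be tight: the POVM–subset and tensor–subadditivity steps force the cross terms to vanish (hence $\langle a_0|a_1\rangle=\langle a_2|a_3\rangle=0$), while saturation of Corollary \ref{cor:guessing-probability-UR}—which inherits the equality conditions of Lemma \ref{lem:main-text-UR} and of the Fuchs–van de Graaf bound—pins the remaining guessing probabilities to the boundary and forces the same trigonometric identity. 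Carefully tracking these simultaneous equality conditions, and verifying that $\theta=\pm\arccos(1+\cos\omega)$ is both forced and realizable by an explicit $V$, is where the real effort lies; the convex–analytic and uncertainty–relation inputs are essentially bookkeeping by comparison.
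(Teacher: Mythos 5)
Your derivation of the bound matches the paper's proof essentially step for step: the pairwise (POVM-subset) decomposition of the four-outcome discrimination, tensor-product subadditivity of the trace distance (Lemma \ref{Lem:prod-norm}) plus data processing to peel off the fixed overlaps $p_g(a_0,a_1)$ and $p_g(a_2,a_3)$, and the replacement of the optimization over the splitting isometries by the two uncertainty-relation constraints of Corollary \ref{cor:guessing-probability-UR}, which hold for every choice of $V$. The convexity argument is likewise the same, and your second-order-cone reformulation of the square-root constraint is a useful addition.

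The one place you overcomplicate matters is the equality characterization. The claim is only that the \emph{optimal value of the convex program} equals unity under the stated conditions, not that $\Pr_{\LOSQC}[\cS]$ does, so no saturation analysis of the chain of inequalities (tightness of the POVM-subset step, Fuchs--van de Graaf equality conditions, realizability by an explicit $V$) is needed or even relevant. One simply observes that the objective equals $1$ iff $f(p,\cS,\mbf{r})=f(p,\cS,\mbf{s})=2$, which forces $p_g(a_0,a_1)=p_g(a_2,a_3)=1$ (the orthogonality conditions) and $r_0=r_1=s_0=s_1=1$; feasibility of the all-ones point against the two constraints then reduces to $1\le\tfrac12[\vert z_2\vert+1]$, i.e.\ $\vert z_2\vert=1$, which is the stated trigonometric relation between $\theta$ and $\omega$. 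Your sufficiency direction is fine; your necessity direction as planned would be chasing equality conditions that the statement does not assert.
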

\noindent We remark that Proposition \ref{prop:SDP-version-of-Thm-4} may be used as yet another method for proving Corollary \ref{cor:LOSQC-BB84} although we omit this.

There are two key ideas in establishing Proposition \ref{prop:SDP-version-of-Thm-4} as well as the other results in this section. The first is captured in \eqref{eq:LOSQC-to-broadcasting-measure}. The second is that the uncertainty relation (Corollary \ref{cor:guessing-probability-UR}) implies constraints on the guessing probabilities independent of Alice and Bob's choices of broadcasting channels, so we may replace maximizing over the isometries by maximizing the guessing probabilities allowed according to the uncertainty relation as may be seen in \eqref{eq:thm-4-sdp}. We refer the reader to the appendices for more information. 

The major strength of these semidefinite program bounds is not in Proposition \ref{prop:SDP-version-of-Thm-4}, but in the more direct analysis of discriminating product state ensembles $\cS \coloneq \{\ket{a_{k}}^{A}\ket{b_{k}}^{B}\}$ that are not of the form in Theorem \ref{thm:discrimination-cond} but are limited by the inability to broadcast orthogonality. This results in obtaining stronger, i.e. smaller-valued, upper bounds on the optimal distinguishing probability under LOSQC, $\Pr_{\LOSQC}[\cS]$.\footnote{Smaller-valued upper bounds are stronger because, in QPV, LOSQC represents the attack model of dishonest provers, so if the probability of success is lower, then the protocol is more secure against dishonest provers.} We will in particular establish a qutrit classical-quantum product state that outperforms the BB84 set in terms of amplifying the error (Theorem \ref{thm:Overlapping-BB84}) and, to the best of our knowledge, the first proof of error bounds on quantum-quantum product states (Theorem \ref{thm:qq-qc-separation}). We note that, as shown in \ref{app:ob-cloning-games-and-MoE}, neither of these cases can be analyzed using Monogamy-of-Entanglement games.

We begin by showing there exist sets of globally orthogonal classical-quantum product states that result in provably stronger security than using BB84 states. In particular, we consider the following set of states:
\begin{align}\label{eq:overlapping-BB84-states}
		\cS_{OBB} = \begin{Bmatrix}
			\ket{\psi_{1}} = \ket{0}\ket{1} \\ 
            \ket{\psi_{2}} = \ket{0}\ket{2}   &
			\ket{\psi_{3}} = \ket{1}\ket{1+2}  \\
            \ket{\psi_{4}} = \ket{1}\ket{1-2}   &
			\ket{\psi_{5}} = \ket{0}\ket{0}  \\
            \ket{\psi_{6}} = \ket{2}\ket{0+1}  & \ket{\psi_{7}} = \ket{2}\ket{0-1}   
		\end{Bmatrix} \ .
\end{align}
It is not hard to see that the subsets $\cS'_{1} \coloneq \{\ket{\psi_{1}},\ket{\psi_{2}},\ket{\psi_{3}},\ket{\psi_{4}}\}$ and $\cS'_{2} \coloneq \{\ket{\psi_{1}},\ket{\psi_{5}},\ket{\psi_{6}},\ket{\psi_{7}}\}$ are both equivalent to $\cS_{\BB}$ up to local unitaries. However, note that $\cS_{1}',\cS_{2}'$ contain overlapping states on Bob's side, which is why we label them OBB (overlapping BB84 states). It follows that using the optimal strategy for each subset, given in the proof of Corollary \ref{cor:LOSQC-BB84}, may do quite poorly on the other set of states. In other words, one expects the optimal approximate broadcasting of $\cS_{O\BB}$ to be lower than $\cS_{\BB}$. Using our methodology, we are indeed able to prove this.
\begin{theorem}\label{thm:Overlapping-BB84}
    Consider the set of globally orthogonal product qutrit states $\cS_{O\BB}$ in \eqref{eq:overlapping-BB84-states} where $\ket{\psi_{1}}$ occurs with probability $1/4$ and the rest occur with probability $1/8$. Then 
\begin{align}\label{eq:LOSQC-c-separation-strong}
    \Pr_{\text{LOSQC}}[\cS_{OBB}] \leq 0.603554  \ .
\end{align}
\end{theorem}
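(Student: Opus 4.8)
The plan is to exploit that Alice's marginals $\ket{0},\ket{1},\ket{2}$ form an orthonormal set, so her input is effectively classical and, exactly as in the reduction behind Corollary~\ref{cor:LOSQC-claims-from-ortho-broadcast}, may be measured and copied without loss. The forwarded value $a\in\{0,1,2\}$ then plays the role of common post-information $\theta$, and by \eqref{eq:LOSQC-to-broadcasting-measure} upper bounding $\Pr_{\LOSQC}[\cS_{OBB}]$ reduces to bounding the broadcasting measure of the ensemble obtained when Bob's state $\ket{b_k}$ is sent through his broadcasting isometry $V^{B\to B_0 A_1}$. Writing $\ket{\beta_j}=V\ket{j}$, linearity fixes $\ket{\beta_{m\pm n}}=\tfrac{1}{\sqrt2}(\ket{\beta_m}\pm\ket{\beta_n})$, so the induced problem splits into three post-information classes: $\theta=0$ (total weight $1/2$) asks each party to discriminate the triple $\{\beta_1,\beta_2,\beta_0\}$ with conditional priors $(1/2,1/4,1/4)$; $\theta=1$ (weight $1/4$) asks to discriminate $\{\beta_{1+2},\beta_{1-2}\}$; and $\theta=2$ (weight $1/4$) asks to discriminate $\{\beta_{0+1},\beta_{0-1}\}$. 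The quantity to bound is then $\min_{P\in\{A,B\}}\big[\tfrac12\, g_P^{(0)}+\tfrac14\, g_P^{(1)}+\tfrac14\, g_P^{(2)}\big]$, where $g_P^{(\theta)}$ is party $P$'s guessing probability within class $\theta$.

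Since $|\Theta|=3$, Proposition~\ref{Prop:classical-broadcasting-equal-2} does not apply and I cannot collapse to the classical case; instead I would drive the bound with the uncertainty relation. The two conjugate classes are precisely the two overlapping BB84 sub-games. Matching $\{\beta_{1+2},\beta_{1-2}\}$ to Lemma~\ref{lem:main-text-UR} with $\ket{\alpha_0}=\ket{\beta_1}$, $\ket{\alpha_1}=\ket{\beta_2}$ yields $z_1=1$ and $z_2=0$ (the same values produced in Corollary~\ref{cor:LOSQC-BB84}), so Corollary~\ref{cor:guessing-probability-UR} bounds one party's $g^{(1)}$ in terms of the \emph{other} party's ability to discriminate the pair $\{\beta_1,\beta_2\}$; symmetrically, $g^{(2)}$ is controlled by discrimination of $\{\beta_0,\beta_1\}$. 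These four crosswise constraints are the direct analogue of the constraints in Proposition~\ref{prop:SDP-version-of-Thm-4}, and, because Alice's states are orthogonal (so the hypothesis $\langle a_0|a_1\rangle>0$ of Theorem~\ref{thm:discrimination-cond} is violated at its boundary), this SDP route rather than Theorem~\ref{thm:discrimination-cond} itself is the correct tool.

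The crux, and the feature that makes $\cS_{OBB}$ strictly harder to broadcast than plain BB84, is the coupling through the $\theta=0$ triple caused by the shared marginal $\ket{\beta_1}$. Any measurement that discriminates the whole triple $\{\beta_0,\beta_1,\beta_2\}$ well on a given side necessarily discriminates both pairs $\{\beta_1,\beta_2\}$ and $\{\beta_0,\beta_1\}$ well on that side, and the uncertainty relation then simultaneously caps the opposite party's conjugate guessing at both $\theta=1$ and $\theta=2$. I would encode this either by keeping the three-outcome discrimination of $\theta=0$ as an explicit discrimination SDP on the reduced states, or by lower bounding each pairwise guessing probability by a function of $g_P^{(0)}$ obtained by coarse-graining the optimal three-outcome POVM into the two relevant binary tests (converting via the Holevo--Helstrom relation used throughout). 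Assembling everything gives a single convex program: maximize $\min_{P}\big[\tfrac12\, g_P^{(0)}+\tfrac14\, g_P^{(1)}+\tfrac14\, g_P^{(2)}\big]$ over the guessing-probability variables subject to the four uncertainty-relation constraints together with the triple-to-pairwise constraints, whose numerical optimum furnishes the bound $0.603554$.

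I expect the obstacle to be two-fold. First, the triple-to-pairwise relation must be valid for \emph{every} isometry $V$ yet tight enough that the program's optimum sits strictly below the decoupled BB84 value $\tfrac12+\tfrac1{2\sqrt2}$; a relation that is too loose would let the optimizer treat the two sub-games independently and wash out the overlap penalty that the theorem exploits. Second, I must check that replacing the isometry optimization by the scalar guessing-probability program---licensed by \eqref{eq:LOSQC-to-broadcasting-measure} and Corollary~\ref{cor:guessing-probability-UR}---retains enough of the global consistency linking the two marginals of the common $\ket{\beta_j}$ to keep the certified value at $0.603554$ rather than a looser number. The final step is then a finite numerical certification of the resulting convex optimization.
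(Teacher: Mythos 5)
Your overall architecture matches the paper's: reduce to Bob's broadcasting isometry $V$ (Alice's classical input is copied), introduce the four pairwise guessing probabilities $p_g^P(V\ket{1},V\ket{2})$, $p_g^P(V\ket{1+2},V\ket{1-2})$, $p_g^P(V\ket{0},V\ket{1})$, $p_g^P(V\ket{0+1},V\ket{0-1})$, constrain them crosswise between the two parties via Corollary~\ref{cor:guessing-probability-UR} (with $z_1=1$, $z_2=0$), and certify the value numerically. However, you leave open precisely the step on which the bound hinges: how the three-outcome discrimination of $\{\beta_0,\beta_1,\beta_2\}$ in your $\theta=0$ class is controlled by the pairwise quantities that the uncertainty relation constrains. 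Your option (b) — lower-bounding the pairwise guessing probabilities by a function of $g_P^{(0)}$ — points in an awkward direction and is never made concrete, and you yourself flag that an overly loose triple-to-pairwise relation would wash out the bound. As stated, the proof is not complete.

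The paper closes this gap with a simple probability-splitting device that you do not use: since $\ket{\psi_1}$ has prior $1/4=1/8+1/8$, the objective is split into two groups of four states, $\{\psi_1,\psi_2,\psi_3,\psi_4\}$ and $\{\psi_1,\psi_5,\psi_6,\psi_7\}$, each carrying weight $1/8$ per state. Within each group, the POVM constraint $\gamma_j\leq\mbbm{1}-\gamma_i$ reduces the sum to two binary discriminations (the step displayed in \eqref{eq:pairwise-decomposition}), which in your language yields the exact relation
\begin{equation}
\tfrac{1}{2}\,g_P^{(0)} \;\leq\; \tfrac{1}{4}\,p_g^P(\beta_1,\beta_2) + \tfrac{1}{4}\,p_g^P(\beta_1,\beta_0),
\end{equation}
so the triple never needs to be treated as a three-outcome problem at all. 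After discarding the trivial $\alpha$-side contributions ($p_g^P(\alpha_i,\alpha_{i+1})=1/2$ since Alice's states in each compared pair coincide) and splitting tensor products via Lemma~\ref{Lem:prod-norm}, one arrives at exactly the four-variable-per-party SDP you describe, whose optimum is $2+\sqrt{2}$, giving $\frac{1}{4}+\frac{\sqrt{2}}{4}\approx 0.603554$. If you adopt this splitting in place of your unspecified triple-to-pairwise constraint, your argument becomes the paper's proof.
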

We first stress that the ability to establish such results follows from our ability to work with the geometry of the initial states using Lemma \ref{lem:main-text-UR}. Moreover, we note that this result is quite strong in the following sense. By Corollary \ref{cor:LOSQC-BB84}, $\Pr_{\LOSQC}[\cS_{OBB}] < \Pr_{\LOSQC}[\cS_{\BB}]$, but $\cS_{O\BB}$ requires more resources to implement (namely, a qutrit quantum system and the ability to prepare certain superposition states). To place the two ensembles on an equal footing, we could 
measure the error `per state' in comparison to global success probability $\Pr[\cS]$:
\begin{align}
    \Delta_{\LOSQC}(\cS) &= \left(\Pr[\cS]-\Pr_{\LOSQC}[\cS]\right) /|\cS| \ . \label{eq:LOSQC-error-per-state}
\end{align}
A direct calculation will find $\Delta_{\LOSQC}(\cS_{\BB}) < 0.03662$ whereas $\Delta_{\LOSQC}(\cS_{O\BB}) > 0.05663$. Thus, by increasing to a qutrit (resp.~trit) space on Bob's (resp.~Alice's) side, we may increase the error per state, which may have practical relevance in cryptographic schemes that rely on the impossibility of orthogonality broadcasting.

Intuitively, we should be able to further amplify the result of Theorem \ref{thm:Overlapping-BB84} by giving \textit{both} Alice and Bob non-commuting states from a globally orthogonal set. One would expect under certain conditions this would amplify the result as then Alice will also need to implement approximate orthogonal broadcasting. This is motivated not only from a foundational perspective, but also the possibility of building secret sharing schemes \cite{Cleve-1999a} out of product states. 

It appears that our uncertainty relation, Lemma \ref{lem:main-text-UR} as well as its generalization, is not strong enough to place constraints to prove that making Alice's states from $\cS_{O\BB}$ in \eqref{eq:overlapping-BB84-states} quantum as well will increase the error probability specifically (See \ref{app:LOSQC-error-bounds} for further discussion), but we are able to construct a different example such that there is a non-trivial increase in the probability of error in distinguishing the states by having both parties receive quantum inputs. In particular, we consider the globally orthogonal classical-quantum product states 
\begin{align}\label{eq:cq-states}
	\cS_{cq} = \begin{Bmatrix}
		\ket{\psi_{1}} = \ket{1}\ket{1} & \ket{\psi_{2}} = \ket{1}\ket{0+2} \\ 
        \ket{\psi_{3}} = \ket{1}\ket{0-2}   &
		\ket{\psi_{4}} = \ket{0}\ket{0+1}  \\ 
        \ket{\psi_{5}} = \ket{0}\ket{0-1}  & \ket{\psi_{6}} = \ket{2}\ket{1+2}  \\
        \ket{\psi_{7}} = \ket{2}\ket{1-2}  &
	\end{Bmatrix} 
\end{align}
and the set of globally orthogonal quantum-quantum states
\begin{align}\label{eq:qq-states}
	\cS_{qq} = \begin{Bmatrix}
		\ket{\psi_{1}} = \ket{1}\ket{1} & \ket{\phi_{2}} = \ket{1-0}\ket{2} \\ 
        \ket{\psi_{3}} = \ket{0}\ket{0+1}   &
		\ket{\psi_{4}} = \ket{0}\ket{0-1}  \\ 
        \ket{\psi_{5}} = \ket{2}\ket{1+2}  & \ket{\psi_{6}} = \ket{2}\ket{1-2}  \\
        \ket{\phi_{7}} = \ket{1+2}\ket{0}  &
	\end{Bmatrix}  \ .
\end{align}
One may see that $\cS_{qq}$ is a globally orthogonal product set obtained from $\cS_{cq}$ by altering which party has the coherence for two of the states, i.e. $\ket{\psi_{2}} \to \ket{\phi_{2}}$, $\ket{\psi_{7}} \to \ket{\phi_{7}}$, which will then suffer from both parties needing to approximately broadcast (see Fig.~\ref{fig:tiling} for visual comparison).

\begin{figure}
\includegraphics[width=8cm]{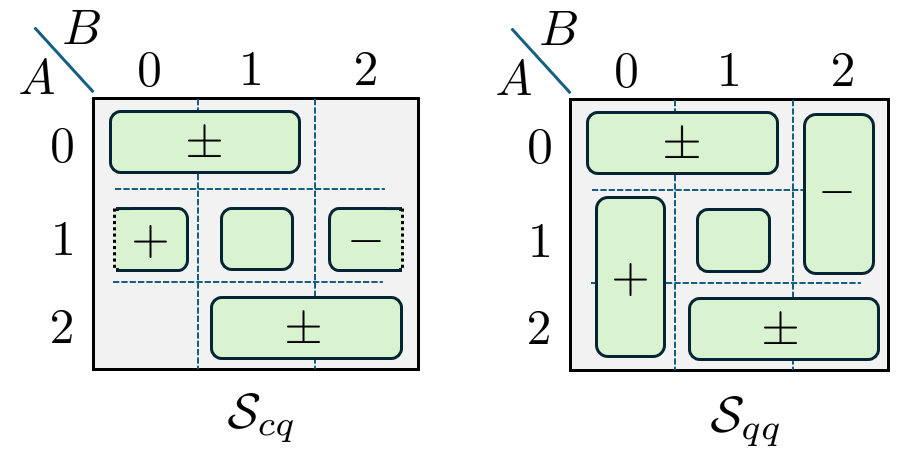}
\caption{Depiction of the sets $\cS_{cq}$ and $\cS_{qq}$ (Eqs.~\eqref{eq:cq-states} and \eqref{eq:qq-states} respectively) in terms of ``tilings" \cite{Bennett-1999b,Childs-2013a}. A state $\ket{\psi}^{AB}$ is depicted on a ``tile" $(i,j)$ if $\bra{i}^{A}\bra{j}^{B}\ket{\psi} \neq 0$, i.e. if the state has `support' on the joint computational basis state. The $\pm$ symbol denotes that two superposition states lay on the tile. The dashed tile in the left image means that Bob's state has support on both $\ket{0}$ and $\ket{2}$.}
\label{fig:tiling}
\end{figure}

The choice of classical states in $\cS_{cq}$ allows us to establish that, independent of the distribution on the states, 
$$ \Pr_{\LOSQC}[\cS_{cq}] \geq \Pr_{\text{LOSCC}}[\cS_{cq}] \geq \frac{1}{2}\left(1+ \frac{1}{\sqrt{2}}\right)  \ . $$
On the other hand, we are able to show using our semidefinite program method that the success probability of distinguishing $\cS_{qq}$ is strictly lower. While intuitive, to the best of our knowledge, this is the first proof of such an example.
\begin{theorem}\label{thm:qq-qc-separation}
    Consider $\cS_{cq}$ and $\cS_{qq}$ where in both cases $\ket{\psi_{1}}$ occurs with probability $1/4$ and the rest occur with probability $1/8$. Then
    \begin{align*}
        \Pr_{\LOSQC}[\cS_{qq}] \leq 0.7805 < \frac{1}{2}\left(1+\frac{1}{\sqrt{2}}\right) \leq \Pr_{\LOSQC}[\cS_{cq}] \ .
    \end{align*}
    That is, forcing both parties to approximately broadcast increases the probability of error.
\end{theorem}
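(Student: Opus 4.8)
The plan is to prove the two displayed bounds separately and then read off the strict separation as the numerical comparison $0.7805 < \tfrac{1}{2}(1+\tfrac{1}{\sqrt{2}}) \approx 0.8536$. The lower bound on $\cS_{cq}$ I would obtain from an explicit attack exploiting that one input is classical, while the upper bound on $\cS_{qq}$ I would obtain from the semidefinite-programming relaxation built on the uncertainty relation, exactly as in Proposition \ref{prop:SDP-version-of-Thm-4}.

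For the lower bound, first note that in $\cS_{cq}$ Alice's register is always a computational basis state ($\ket{0},\ket{1},\ket{2}$), so Alice's input is classical. Applying Corollary \ref{cor:LOSQC-claims-from-ortho-broadcast} with the roles of Alice and Bob exchanged, it suffices to lower bound the post-information quantity $P_{\text{p-i}}[\cS_\Theta]$ for the induced ensemble, where Alice's value plays the role of $\theta$. Grouping the states by $\theta$, Bob must discriminate $\{\ket{0+1},\ket{0-1}\}$, $\{\ket{1},\ket{0+2},\ket{0-2}\}$, and $\{\ket{1+2},\ket{1-2}\}$, and within each group the states are mutually orthogonal. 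Since $\Pr_{\LOSQC}\ge \Pr_{\text{LOSCC}}$ trivially, it is enough to exhibit a feasible LOSCC strategy, i.e. a fixed POVM on Bob's qutrit together with a post-processing rule, and to verify slice by slice that the conditional guessing probability is at least the Breidbart value $\tfrac{1}{2}(1+\tfrac{1}{\sqrt{2}})$ appearing in Corollary \ref{cor:LOSQC-BB84}. Because this guarantee holds conditioned on each value of Alice's classical register, it holds for every prior, which gives the claimed distribution-independent bound.

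For the upper bound I would start from \eqref{eq:LOSQC-to-broadcasting-measure}, which bounds $\Pr_{\LOSQC}[\cS_{qq}]$ by the guessing performance of the \emph{worse} of the two parties over their broadcast shares $\sigma_k^{P_0 P_1}$. The decisive new feature relative to $\cS_{cq}$ is that, after the substitutions $\ket{\psi_2}\to\ket{\phi_2}$ and $\ket{\psi_7}\to\ket{\phi_7}$, coherent superpositions sit on \emph{both} sides: $\{\ket{0+1},\ket{0-1}\}$ and $\{\ket{1+2},\ket{1-2}\}$ on Bob, and the coherent states $\ket{1-0}$ (from $\phi_2$) and $\ket{1+2}$ (from $\phi_7$) on Alice. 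I would isolate the four-state BB84-type subconfigurations of the form \eqref{min_ensemble} embedded in $\cS_{qq}$ and apply the guessing-probability uncertainty relation of Corollary \ref{cor:guessing-probability-UR} to each, once for Alice's reduced shares and once for Bob's. This produces linear constraints coupling Alice's guessing variables to Bob's (the cross terms $r_1\leftrightarrow s_0$ of \eqref{eq:thm-4-sdp}), since the same isometries $U,V$ govern both sides' reduced states. Assembling these constraints over all relevant configurations yields a convex optimization in the guessing-probability variables that upper bounds the right-hand side of \eqref{eq:LOSQC-to-broadcasting-measure}, and solving it numerically gives $0.7805$.

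The main obstacle is setting up and combining the uncertainty constraints for both parties simultaneously. In the classical-quantum case only one side carried coherence, so the relation constrained a single party while the other was free; here it must bite on both sides at once, and the two constraint families are coupled through the shared isometries. The delicate steps are (i) choosing which four-state subconfigurations to feed into Corollary \ref{cor:guessing-probability-UR} so that the resulting program is a genuine relaxation (an upper bound) of the true $\min_{P}$ objective, and (ii) ensuring the relaxation is tight enough that its optimum falls strictly below $\tfrac{1}{2}(1+\tfrac{1}{\sqrt{2}})$, which is precisely what certifies the separation. Once the program is solved and the value $0.7805$ is obtained, combining it with the lower bound on $\cS_{cq}$ gives the full chain of inequalities and hence the conclusion that forcing both parties to approximately broadcast strictly increases the error.
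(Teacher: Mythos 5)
Your overall architecture (an explicit LOSCC attack for the lower bound on $\cS_{cq}$, and an uncertainty-relation SDP relaxation of \eqref{eq:LOSQC-to-broadcasting-measure} for the upper bound on $\cS_{qq}$) matches the paper, and the lower-bound half is essentially the paper's own proof: one exhibits a fixed four-outcome Breidbart-type POVM on Bob's qutrit together with a lookup table so that every state is identified with probability exactly $\cos^2(\pi/8)=\tfrac{1}{2}(1+\tfrac{1}{\sqrt{2}})$, independently of the prior. You still owe the explicit POVM, but that part of the plan is sound.

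The gap is in the upper bound. You attribute the improvement to applying Corollary \ref{cor:guessing-probability-UR} to \emph{Alice's} broadcast isometry as well as Bob's, producing coupled constraints on both parties' guessing variables. The paper shows this does not work: because Alice's register carries only isolated superposition states (e.g.\ $\ket{1-0}$) rather than an antipodal pair over a fixed anchor basis, the uncertainty relation applied to Alice's isometry yields a constraint of the form $D_{\tr}\le \tfrac{1}{2} + 2F(\cdot,\cdot)$, which is non-trivial only when Alice's local guessing probability exceeds roughly $0.984$ and is therefore vacuous given the remaining constraints (this is discussed explicitly at the end of \ref{app:LOSQC-error-bounds}). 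The ingredient that actually produces $0.7805$ is elementary data processing on Alice's \emph{raw input states}: in the pairwise decomposition \eqref{eq:pairwise-decomposition}, the pairs whose Alice components are $(\ket{1},\ket{1-0})$ and $(\ket{1},\ket{1+2})$ are non-orthogonal, so $p_{g}^{P}(\alpha_i,\alpha_j)\le p_{g}(\ket{a_i},\ket{a_j}) = \tfrac{1}{2}\bigl(1+\tfrac{1}{\sqrt{2}}\bigr)$ for \emph{any} isometry, while the uncertainty-relation SDP is run on Bob's side only. If you instead bound those two Alice terms trivially by $1$ --- which is all your vacuous Alice-side uncertainty constraints would buy you --- the relaxation evaluates to exactly $\tfrac{1}{2}\bigl(1+\tfrac{1}{\sqrt{2}}\bigr)\approx 0.8536$ and fails to certify the strict separation. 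So as written your program would not close the argument; you need to replace the Alice-side uncertainty constraints with the data-processing bound on her non-orthogonal inputs (and, as a minor point, route the computation through the locally-unitarily-equivalent set $\wt{\cS}_{qq}$ so that Bob's states coincide with those of $\cS_{O\BB}$ and the Bob-side SDP can be reused verbatim).
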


%\paragraph{Correlation Constraints under Signalling} Finally, we note that the ability to combine our uncertainty relation to derive semidefinite programming bounds on vectors of guessing probabilities in the LOSQC setting (e.g.~Proposition \ref{prop:SDP-version-of-Thm-4} and the derivations of Theorems \ref{thm:Overlapping-BB84} and \ref{thm:qq-qc-separation}) may be of interest in the framework of optimization programs capturing correlations. In particular, it is well-known that non-signalling correlations are quite useful for deriving linear program bounds in various settings \cite{Ito-2010a,Brandao-2013a,Fawzi-2023a}. However, a key technical issue in LOSQC is the fact that there is signaling (the simultaneous communication), which is what makes it distinct from the framework of non-local games \cite{Ito-2010a,Palazuelos-2016a}. Roughly speaking, our uncertainty relation is constraining the strength of the correlations achievable via signaling at the expense of the linear program becoming a semidefinite program.

\section{Conclusion and Outlook}
In this paper, we have introduced the study of orthogonality broadcasting. We have established a strong characterization of its properties in low dimensions, related it to the cryptographic task of quantum position verification, and established new methods for error bounds on quantum position verification protocols when there is no pre-shared entanglement. In doing so, we have introduced a new uncertainty relation that uses the geometric relation of the initial states to be broadcasted and shown both its utility and its limitations. 

We note that the ability to combine our uncertainty relation to derive semidefinite programming bounds on vectors of guessing probabilities in the LOSQC setting (e.g.~Proposition \ref{prop:SDP-version-of-Thm-4} and the derivations of Theorems \ref{thm:Overlapping-BB84} and \ref{thm:qq-qc-separation}) may be of interest in the framework of optimization programs capturing correlations. In particular, it is well-known that non-signalling correlations are quite useful for deriving linear program bounds in various settings \cite{Ito-2010a,Brandao-2013a,Fawzi-2023a}. However, a key technical issue in LOSQC is the fact that there is signaling (the simultaneous communication), which is what makes it distinct from the framework of non-local games \cite{Ito-2010a,Palazuelos-2016a}. Roughly speaking, our uncertainty relation is constraining the strength of the correlations achievable via signaling at the expense of the linear program becoming a semidefinite program.

Finally, we highlight two natural future directions based on this work. First, the study of (approximate) orthogonality broadcasting seems deeply related to approximate symmetric quantum cloning and the notion of symmetric extendability. A more in-depth study of the relation between these may be of both fundamental and practical value. Second, in Section \ref{sec:SDP-Bounds}, it was highlighted that the SDP bounds are in effect coming from the uncertainty relation bounding the correlations achievable through signaling and that the uncertainty relation can be loose under certain situations. It could be of value to develop more general methods for constraining the achievable correlations under signaling through broadcast channels.

\section*{Acknowledgments}
\noindent We thank Harry Buhrman for fruitful initial discussions in the early stages of this work. This research was conducted when R.A. and P.VL. were affiliated with CWI Amsterdam and QuSoft. R.A. was supported by the Dutch Research Council (NWO/OCW), as part of the Quantum Software Consortium programme (project number 024.003.037). The majority of this work was performed while P.VL.~was at CWI. P.VL.~was supported by the Dutch Research Council (NWO/OCW), as part of the NWO Gravitation Programme Networks (project number 024.002.003). The majority of this work was performed while I.G.~was at University of Illinois at Urbana-Champaign. E.C. and I.G.~were supported by the U.S. Department of Energy Office of Science National Quantum Information Science Research Centers. 

\newcommand{\newblock}{}
\bibliographystyle{apsrev4-2}
\bibliography{QPV.bib}{}

\appendix

\section{Monogamy-of-Entanglement Games, Quantum Position Verification, and Orthogonality Broadcasting}\label{app:ob-cloning-games-and-MoE}
In this section, we clarify the relation between monogamy-of-entanglement (MoE) games, quantum position verification (QPV), and orthogonality broadcasting. In particular, we show why MoE games are both a less direct reduction for proving security of QPV in the no-PE model and at times cannot establish security while the orthogonality broadcasting methodology can.

\subsection*{A Review on Monogamy-of-Entanglement (MoE) Games} 
We begin with the description of a MoE game (see Fig.~\ref{fig:MoE-game} for depiction), the physical idea it operationally captures, and the primary tool used to analyze MoE games \cite[Lemma 2]{Tomamichel-2013a}. For clarity, we describe the procedure with references to the subsequent formal definitions.

As depicted in Fig.~\ref{fig:MoE-game}, in a MoE game $G$ (Definition \ref{def:MoE-game}), Alice's measurements $\{\cM^{\theta} = \{F^{\theta}_{x}\}\}_{\theta}$ are publicly known and fix the dimension of the quantum system she is to receive. Bob and Charlie prepare a quantum state $\rho_{ABC}$ according to their strategy $\cS$ (Definition \ref{def:MoE-strategy}). They then forward the $A$ system to Alice. Alice draws $\theta$ from $\Theta$ uniformly at random and applies the measurement $\cM^{\theta} = \{F^{\theta}_{x}\}$ to the $A$ system, obtaining outcome $x_{A} \in \cX$. Bob and Charlie therefore share the conditional state 
\begin{align}
	\rho^{BC}_{x_{A} \vert \theta} \coloneq \frac{\Tr_{A}[F^{\theta}_{x_{A}} \otimes \mbbm{1}^{B} \otimes \mbbm{1}^{C} \rho^{ABC}]}{\Tr[F^{\theta}_{x_{A}}\rho^{A}]} \ .
\end{align} 
Alice then announces $\theta$ to Bob and Charlie who then apply their corresponding measurements $\{P_{x}^{\theta}\}_{x \in \cX}$, $\{Q_{x}^{\theta}\}_{x \in \cX}$ as specified by their strategy $\cS$ (Definition \ref{def:MoE-strategy}). Bob and Charlie's measurement outcomes are $x_{B},x_{C}$ respectively. Bob and Charlie `win' whenever $x_{A} = x_{B} = x_{C}$ as by Definition \ref{def:MoE-winning-probability}.
\begin{figure}[t]
\centering
\includegraphics[width=0.9\textwidth]{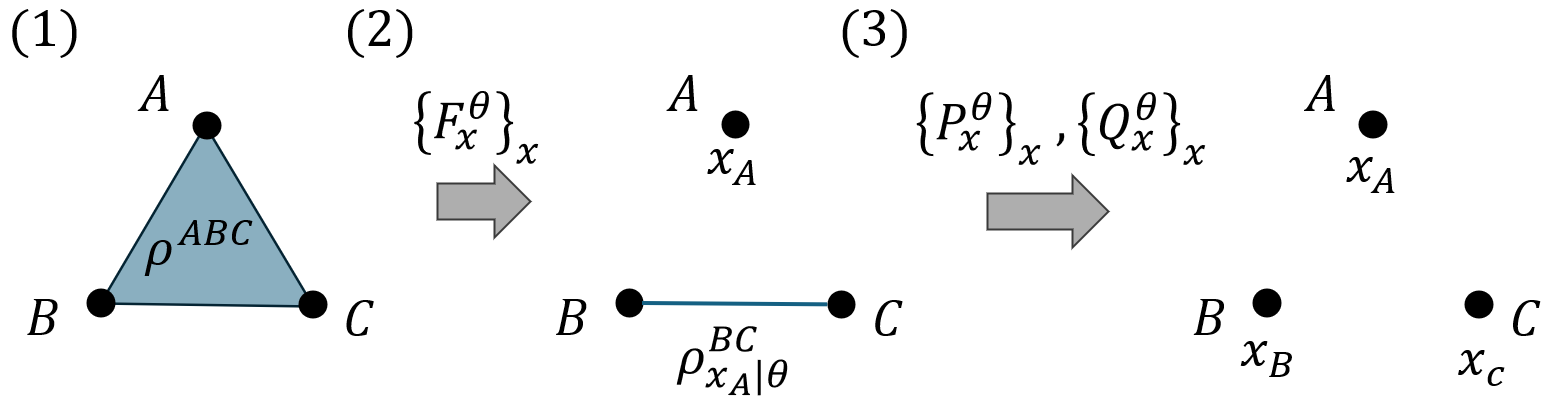}
\caption{\footnotesize Depiction of a Monogamy-of-Entanglement game. In the first step, Alice, Bob, and Charlie share a joint quantum state $\rho^{ABC}$ that Bob and Charlie chose. In Step 2, Alice chooses $\theta$ uniformly randomly and measures the $A$ system with POVM $\{F^{\theta}_{x}\}_{x}$ to obtain outcome $x_{A}$. Bob and Charlie then share a conditional state $\rho^{BC}_{x_{A}\vert \theta}$. In step 3, Alice has told Bob and Charlie the value of $\theta$, so they perform measurements $\{P^{\theta}_{x}\}_{x}$, $\{Q^{\theta}_{x}\}_{x}$ to their respective portions of the conditional state obtaining outcomes $x_{B},x_{C}$. Bob and Charlie win the game when $x_{A} = x_{B} = x_{C}$.
}
\label{fig:MoE-game}
\end{figure}

\begin{definition}\label{def:MoE-game} A monogamy-of-entanglement (MoE) game $G$ is defined by a finite-dimensional Hilbert space $A = \mbb{C}^{d}$ and a set of $\vert \Theta \vert$ POVMs on $A$, $\cM^{\theta} = \{F^{\theta}_{x}\}_{x \in \cX}$. $\theta \in \Theta$ indexes the choice of POVM and $x \in \cX$ indexes the measurement outcome. $\Theta,\cX$ are finite alphabets. 
\end{definition}

\begin{definition}\label{def:MoE-strategy}
\sloppy A strategy $\cS$ for a MoE game $G$ is specified by a tuple
$$\cS = \left(\rho^{ABC}, \{P_{x}^{\theta}\}_{x \in \cX,\theta \in \Theta}, \{Q_{x}^{\theta}\}_{x \in \cX,\theta \in \Theta}\right) \ , $$ where $\rho^{ABC}$ is quantum state, $B,C$ are arbitrary finite-dimensional Hilbert spaces, and $\{P_{x}^{\theta}\}_{x \in \cX}$ (resp.~$\{Q_{x}^{\theta}\}_{x \in \cX}$) is a POVM on $B$ (resp.~$C$) for every $\theta \in \Theta$.  
\end{definition}

\begin{definition}\label{def:MoE-winning-probability}
The winning probability of a MoE game $G$ with a strategy $\cS$ is defined as
\begin{align}\label{eq:MoE-strat-win-prob}
	p_{\text{win}}(G,\cS) \coloneq \sum_{\theta \in \Theta} \Tr[\Pi^{\theta} \rho^{ABC}], \quad \text{where} \quad \Pi^{\theta} \coloneq \sum_{x} F_{x}^{\theta} \otimes P^{\theta}_{x} \otimes Q^{\theta}_{x} \ .
\end{align}
The optimal winning probability is $p_{\text{win}}(G) \coloneq \sup_{\cS} p_{\text{win}}(G,\cS)$.
\end{definition}

The intuition behind an MoE game is as follows. Imagine Alice and Bob share a maximally entangled state $\dyad{\Psi^{+}}^{AB}$ and Alice performs projective measurements, i.e. $\cM^{\theta} = \{U_{\theta}\dyad{x}U_{\theta}^{\dagger}\}_{x \in \cX}$ where $U_{\theta}$ is some unitary for every $\theta \in \Theta$. A standard calculation using the transpose trick \cite{Wilde-2011a} will verify that Bob's conditional state $\rho^{B}_{x \vert \theta} = U^{\Trans}\dyad{x}\ol{U}$ where $\ol{X}$ denotes the entry-wise conjugate of $X$ and $\Trans$ is the matrix transpose. As $\{\rho^{B}_{x \vert \theta}\}_{x \in \cX}$ is a set of mutually orthogonal states, this means in this scenario if Alice measures with $\cM^{\theta}$ and tells Bob $\theta$, Bob can always determine $x$. However, if Bob shares a maximally entangled state with Alice, then Charlie is completely independent, i.e. $\rho_{ABC} = \dyad{\Psi^{+}}^{AB} \otimes \rho^{C}$. This implies Charlie's conditional state will be completely independent of Alice's measurement and so it will be hard for \textit{both} Bob and Charlie to guess the value of $x$ correctly. `Monogamy-of-entanglement' generalizes the above to the idea that the more entangled Bob is with Alice, the less entangled Charlie is with Alice. An MoE game tests this idea operationally. Namely, it tests how entangled $\rho_{AB}$ is versus $\rho_{AC}$ by examining the winning probability, \eqref{eq:MoE-strat-win-prob}. If monogamy of entanglement applies, then there does not exist $\rho^{ABC}$ such that $\sum_{\theta} \Tr[\Pi^{\theta} \rho^{ABC}] = 1$ where $\Pi^{\theta}$ is defined in \eqref{eq:MoE-strat-win-prob}. Note that this does not directly capture anything about broadcasting or orthogonality broadcasting.

It is difficult to prove there does not exist $\rho^{ABC}$ such that $\sum_{\theta} \Tr[\Pi^{\theta} \rho^{ABC}] = 1$ as one needs to optimize over all possible $\{P^{\theta}_{x}\}$ and $\{Q^{\theta}_{x}\}$ in Definition \ref{def:MoE-strategy}. As such, MoE games are instead analyzed using the bound 
$\sum_{\theta} \Tr[\Pi^{\theta}\rho^{ABC}] \leq \Vert \sum_{\theta} \Pi^{\theta} \Vert$, where $\Vert \cdot \Vert$ is the operator norm and this follows from $\Tr[M\rho^{ABC}] \leq \Vert M \Vert$. This is still difficult due to optimizing over both measurements, so further tools are needed. In particular, \cite{Tomamichel-2013a} establishes and uses the following lemma.
\begin{lemma}\cite{Tomamichel-2013a}\label{lem:MoE-op-norm}
	Let $R_{0},R_{1},...,R_{n}$ be $n$ positive semidefinite operators. Let $\{\pi^{k}\}_{k \in [n]}$ be a set of $n$ permutations such that $\pi^{k_{1}}(i) \neq \pi^{k_{2}}(i)$ for all $i \in [n]$, $k_{1},k_{2} \in [n]$ such that $k_{1} \neq k_{2}$. Then the following inequality holds
	\begin{align}
		\Big \Vert \sum_{i \in [n]} R_{i} \Big \Vert \leq \sum_{k \in [n]} \max_{i \in [n]} \Big \Vert \sqrt{R_{i}} \sqrt{R_{\pi^{k}(i)}} \Big \Vert \ .
	\end{align}
\end{lemma}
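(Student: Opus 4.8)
The plan is to reduce the bound on $\Vert \sum_i R_i \Vert$ to a bound on the norm of a single block matrix whose blocks are the cross terms $\sqrt{R_i}\sqrt{R_j}$, and then to exploit the Latin-square structure hidden in the hypothesis on $\{\pi^k\}$ to split that block matrix into $n$ pieces, each of which is a block-permutation and hence has norm equal to a single maximum.

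First I would vectorize. Introducing an ancilla $\mbb{C}^n$ with orthonormal basis $\{\ket{i}\}$, define the operator $A \coloneq \sum_{i \in [n]} \sqrt{R_i} \otimes \ket{i}$ from the underlying space $\cH$ into $\cH \otimes \mbb{C}^n$. Because the $\ket{i}$ are orthonormal and each $\sqrt{R_i}$ is Hermitian, $A^\dagger A = \sum_i R_i$, so that $\Vert \sum_i R_i \Vert = \Vert A^\dagger A \Vert = \Vert A A^\dagger \Vert$ by the standard identity $\Vert A^\dagger A\Vert = \Vert AA^\dagger\Vert$. The operator $B \coloneq AA^\dagger = \sum_{i,j} \sqrt{R_i}\sqrt{R_j} \otimes \ket{i}\bra{j}$ is precisely the $n\times n$ block matrix whose $(i,j)$ block is $\sqrt{R_i}\sqrt{R_j}$, so it remains to bound $\Vert B\Vert$.

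Next I would invoke the hypothesis on the permutations. The condition $\pi^{k_1}(i)\neq\pi^{k_2}(i)$ for $k_1\neq k_2$ says that for each fixed $i$ the values $\{\pi^k(i)\}_{k\in[n]}$ are distinct, hence exhaust $[n]$; equivalently the array $(\pi^k(i))_{k,i}$ is a Latin square, so for every pair $(i,j)$ there is exactly one $k$ with $\pi^k(i)=j$. Setting $B_k \coloneq \sum_{i} \sqrt{R_i}\sqrt{R_{\pi^k(i)}}\otimes\ket{i}\bra{\pi^k(i)}$, this ``exactly one $k$'' property yields the exact decomposition $B=\sum_{k\in[n]}B_k$ with each block appearing once, and the triangle inequality gives $\Vert B\Vert\leq\sum_k\Vert B_k\Vert$. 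Finally I would bound each $\Vert B_k\Vert$: since $\pi^k$ is a bijection, $B_k$ has one nonzero block per block-row and per block-column, so computing $B_kB_k^\dagger$ the cross terms vanish by $\langle \pi^k(i)|\pi^k(i')\rangle=\delta_{ii'}$, leaving the block-diagonal operator $\sum_i \sqrt{R_i}\sqrt{R_{\pi^k(i)}}\sqrt{R_{\pi^k(i)}}\sqrt{R_i}\otimes\ket{i}\bra{i}$. A block-diagonal operator has norm equal to the maximum block norm, and the $i$-th block is $M_iM_i^\dagger$ with $M_i=\sqrt{R_i}\sqrt{R_{\pi^k(i)}}$, so $\Vert B_k\Vert=\sqrt{\Vert B_kB_k^\dagger\Vert}=\max_i\Vert\sqrt{R_i}\sqrt{R_{\pi^k(i)}}\Vert$. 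Summing over $k$ proves the claim.

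I expect the main obstacle to be purely combinatorial bookkeeping rather than any analytic estimate: the distinctness hypothesis must be used twice and in two different guises. It is needed first to guarantee that $\sum_k B_k$ reconstructs $B$ with each cross term counted exactly once (no over- or under-counting), and again, through the bijectivity of each individual $\pi^k$, to force $B_kB_k^\dagger$ to be block-diagonal so that its norm collapses to a single maximum. Keeping the Latin-square decomposition honest is the crux; the rest is the vectorization trick and the elementary facts $\Vert A^\dagger A\Vert=\Vert AA^\dagger\Vert$ and $\Vert X\Vert^2=\Vert XX^\dagger\Vert$.
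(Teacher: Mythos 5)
Your proof is correct. The paper itself gives no proof of this lemma---it is quoted verbatim from \cite{Tomamichel-2013a}---and your argument (vectorize via $A=\sum_i\sqrt{R_i}\otimes\ket{i}$, use $\Vert A^\dagger A\Vert=\Vert AA^\dagger\Vert$, decompose the block Gram matrix into $n$ block-permutation pieces via the Latin-square property, and evaluate each piece's norm as a maximum) is exactly the standard proof given in that reference, so there is nothing to add.
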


\subsection*{Reduction of Specific QPV Protocols to Monogamy-of-Entanglement Games}
\begin{figure}[b]
\centering
\includegraphics[width=\textwidth]{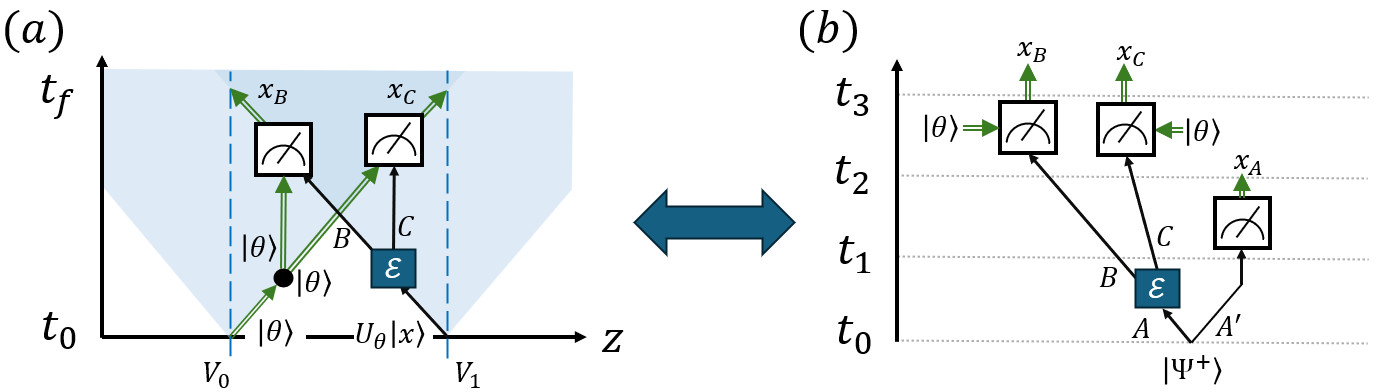}
\caption{\footnotesize Depiction of the reduction of a certain class of QPV protocols to monogamy-of-entanglement games. (a) Depicts the spacetime diagram of the relevant QPV protocols where one dishonest prover receives $\ket{\theta}$ to copy and forward and the other party receives $U_{\theta}\ket{x}$ which they attempt to broadcast with $\cE^{A \to BC}$. (b) Depicts a mathematically equivalent setup to the QPV protocol. At the initial time, a maximally entangled state is shared. At $t_{1}$, $\cE^{A \to BC}$ has been applied. At $t_{2}$, a random $\theta \in \Theta$ has been chosen and the measurement $\{F^{\theta}_{x} \coloneq U_{x}^{\Trans}\dyad{x}\ol{U}_{x}\}_{x}$ has been applied to the $A'$ system. At $t_{3}$, measurements $\{P^{\theta}_{x}\}$, $\{Q^{\theta}_{x}\}$ have been applied according to $\theta$. Note that $t_{1}$, $t_{2}$, and $t_{3}$ respectively correspond to (1),(2), and (3) in Fig.~\ref{fig:MoE-game}. In other words, it depicts a specific strategy of a MoE game.
}
\label{fig:QPV-to-MoE}
\end{figure}

We now explain how the reduction of a specific instance of QPV to MoE works as it will help to clarify how MoE and orthogonality broadcasting compare. Certain systems are labeled differently than in the main text to make the conversion to an MoE game clearer. Consider a QPV protocol where $V_{0}$ sends $\theta \in \Theta$ and $V_{1}$ sends $\rho^{A}_{x \vert \theta} \coloneq U_{\theta}\dyad{x}U_{\theta}^{\dagger}$ where $\{U_{\theta}\}_{\theta \in \Theta}$ is a set of unitaries and $x,\theta$ are drawn uniformly from their sets (See Fig.~\ref{fig:QPV-to-MoE} for depiction). The dishonest prover that receives $\rho_{x \vert \theta}$, Bob, must attempt to broadcast the orthogonality using some broadcast channel $\cE^{A \to BC}$. The dishonest prover that receives $\theta$, Charlie, simply copies and forwards the value to Charlie. Then, upon receiving $\theta$ and their respective quantum system, Bob and Charlie perform local measurements conditioned on $\theta \in \Theta$ to obtain outcomes $x_{B},x_{C}$ and send these to the verifiers closer to them.\footnote{Recalling Fig.~\ref{Fig:orthogonality_broadcasting}, this shows this specific type of QPV protocol is exactly orthogonality broadcasting.}

However, using the `transpose trick' \cite{Wilde-2011a}, one may verify $\Tr_{A'}[U^{\Trans}_{\theta}\dyad{x}\ol{U}_{\theta} \otimes \mbbm{1}_{A} \dyad{\Psi^{+}}] = U_{\theta}\dyad{x}U_{\theta}^{\dagger}$ where $\ket{\Psi^{+}}_{A'A}$ is the maximally entangled state and the transpose is defined in the computational basis. It follows that the QPV protocol is \textit{mathematically} equivalent to the following procedure (depicted in Fig.~\ref{fig:QPV-to-MoE}):
\begin{enumerate}
 \item $V_{0}$ sends nothing and $V_{1}$ sends the $A$ system of $\dyad{\Psi^{+}}^{A'A}$ to Charlie.
 \item Charlie broadcasts the $A$ system with $\cE^{A \to BC}$.
 \item At some point before Bob and Charlie decide to measure their local portions of $\rho^{ABC} \coloneq (\id^{A'} \otimes \cE^{A \to BC})(\dyad{\Psi^{+}})$, $\theta$, $V_{1}$ randomly draws $\theta$, measures the $A'$ system with $\{U^{\Trans}_{\theta}\dyad{x}\ol{U}_{\theta}\}_{x \in \cX}$, and provides the value of $\theta$ to both Bob and Charlie.
 \item Bob and Charlie perform their measurements as in the true protocol.
\end{enumerate}
Note this new procedure is an MoE game with some specification of the strategy, namely $\rho_{ABC}$. It follows that the optimal strategy for the MoE game defined by $\cM^{\theta} \coloneq \{U^{\Trans}_{\theta}\dyad{x}\ol{U}_{\theta}\}$ provides an upper bound on all possible QPV attacks in the no-PE model. Moreover, it just so happens that this is tight when $\{U_{\theta}\}_{\theta} = \{\sigma_{X},H\}$, which are the Pauli X and Hadamard unitaries, i.e. the bound is tight for BB84 QPV \cite{Tomamichel-2013a}.

\begin{remark}
\cite[Appendix A]{Poremba-2024a} introduces a ``$1\to 2$ cloning game" where $\cE_{A \to BC}$ is specified. As they show, this is like restricting the strategy of the MoE game to avoid the last reduction to MoE given above.
\end{remark}

\subsection*{Comparing Orthogonality Broadcasting and Monogamy-of-Entanglement Games}
With the sufficient background, we may now separate orthogonality broadcasting from MoE games. First, in the previous subsection, a footnote specified where the equivalence of this specific type of QPV protocol to orthogonality broadcasting occurs. This was prior to converting to an alternative protocol using a maximally entangled state. This alone shows it is at least more direct for the protocols above. Second, as orthogonality broadcasting considers a different aspect of quantum mechanics than MoE games, it can be applied to QPV protocols that do not satisfy the structure necessary for the reduction to an MoE game. Indeed, noting that both systems in \eqref{eq:qq-states} used in Theorem \ref{thm:qq-qc-separation} are quantum, it follows the MoE game reduction given above cannot be done at all for this ensemble. This already distinguishes the two methods. However, even if one of the inputs is classical, the reduction to MoE games is insufficient with current methods as the following example shows.

\begin{example}\label{ex:MoE-not-sufficient}
Note that \eqref{eq:overlapping-BB84-states} has one system always classical, but the quantum states partition (with respect to the other system's classical value) as $\{\ket{0},\ket{1},\ket{2}\}$,$\{\ket{0 \pm 1}\}$,$\{\ket{1 \pm 2}\}$. Therefore, this is an ensemble that cannot be generated in the form $\{\ket{\theta} \, (U_{\theta}\ket{x}\}_{\theta,x}$ as the reduction in the previous paragraph required. However, the situation is significantly worse--- if we tried to complete these bases, the measurements are
\begin{equation}
\begin{aligned}\label{eq:OBB84-MoE-game}
	\cM^{0} &= \{\dyad{0},\dyad{1},\dyad{2}\}\\
	 \cM^{1} &= \{\dyad{0},\dyad{1+2},\dyad{1-2}\} \\
	 \cM^{2} &= \{\dyad{0+1},\dyad{0-1},\dyad{2}\} \ .
\end{aligned}
\end{equation}
If we define $G_{O}$ to be the MoE game defined by the measurements in \eqref{eq:OBB84-MoE-game}, then the methods of \cite{Tomamichel-2013a} obtain trivial bounds. This is because the measurements pairwise contain a projective POVM element that is identical. For completeness, we show this. First, following the notation of \cite[Theorem 4]{Tomamichel-2013a},
\begin{align}
	c(G_{O}) = \max_{\substack{\theta,\theta' \in \Theta \\ \theta \neq \theta'}} \max_{x,x' \in \cX} \Big \Vert \sqrt{F^{\theta}_{x}} \sqrt{F^{\theta'}_{x'}} \Vert = \Vert \dyad{0} \dyad{0} \Vert = 1 \ ,
\end{align}
where we used $F^{0}_{0} = \dyad{0} = F^{1}_{0}$ by \eqref{eq:OBB84-MoE-game}. Then \cite[Theorem 4]{Tomamichel-2013a} provides the trivial upper bound of $1$. Moreover, the more in-depth analysis of \cite[Theorem 3]{Tomamichel-2013a} is also trivial as we now show. First,
\begin{align}
	p_{\text{win}}(G_{O}) = \frac{1}{3}\sum_{\theta} \Tr[\Pi^{\theta} \rho_{ABC}] \leq \frac{1}{3} \Vert \sum_{\theta} \Pi^{\theta} \Vert \leq& \frac{1}{3} \sum_{k} \max_{\theta} \Vert \Pi^{\theta}\Pi^{\pi^{k}(\theta)} \Vert \ ,
\end{align}
where $\Pi^{\theta}$ is defined as in \eqref{eq:MoE-strat-win-prob} and the second inequality is Lemma \ref{lem:MoE-op-norm}. In principle we would need to optimize over strategies, but we will just pick one that makes the bound trivial. Let both Bob and Charlie use the same measurements as Alice as given in \eqref{eq:OBB84-MoE-game}. Then $\Pi^{\theta} = \sum_{i \in [3]} {F^{\theta}_{i}}^{\otimes 3}$ for $\theta \in [3]$. Because each $\theta,\theta'$ share a rank-one projector, $\Vert \Pi^{0}\Pi^{1} \Vert = \Vert \Pi^{0}\Pi^{2} \Vert = \Vert \Pi^{1}\Pi^{2} \Vert = 1$. It follows with this choice $p_{\text{win}}(G_{O}) \leq \frac{1}{3} \sum_{k} \max_{\theta} \Vert \Pi^{\theta}\Pi^{\pi^{k}(\theta)} \Vert  = \frac{1}{3} \cdot 3 = 1$. Thus the bound remains trivial. This shows MoE games, at least using the standard tool, (Lemma \ref{lem:MoE-op-norm}) cannot establish security for this ensemble.
\end{example}

\section{Necessity of Non-Separable Communication}
In this section we establish Theorem \ref{Thm:no-sep-comm}. 

\begin{proof}[Proof of Theorem \ref{Thm:no-sep-comm}]
Consider the eight orthogonal product states stated in the theorem, which we label
\begin{equation}
\begin{aligned}
    \ket{\varphi_1}=\ket{1}^A\ket{1}^B &\quad
    \ket{\varphi_2}=\ket{1}^A\ket{2}^B\notag\\
    \ket{\varphi_{3}}=\ket{2}^A\ket{+_{23}}^B &\quad
    \ket{\varphi_{4}}=\ket{2}^A\ket{-_{23}}^B\notag\\
    \ket{\varphi_{5}}=\ket{3}^A\ket{+_{24}}^B &\quad
    \ket{\varphi_{6}}=\ket{3}^A\ket{-_{24}}^B\notag\\
    \ket{\varphi_{7}}=\ket{4}^A\ket{\wt{+}_{34}}^B &\quad
    \ket{\varphi_{8}}=\ket{4}^A\ket{\wt{-}_{34}}^B.
\end{aligned}
\end{equation}
where we remind the reader $\ket{\pm_{mn}}=\frac{1}{\sqrt{2}}(\ket{m}\pm\ket{n}$ and $\ket{\wt{\pm}_{mn}}=\frac{1}{\sqrt{2}}(\ket{m}\pm i\ket{n}$. It is clear that Alice's action in any LOSQC discrimination protocol should involve her measuring in the computational basis and distributing the outcome information to Bob.  It is therefore Bob's job to broadcast the orthogonality of his portion of the state.  More precisely, he must perform a channel $\ket{b_x}^B\mapsto \rho_x^{A_2B_1}$ such that $\{\rho_x^{A_2},\rho_{x+1}^{A_2}\}$ and $\{\rho_x^{B_1},\rho_{x+1}^{B_1}\}$ form sets of orthogonal operators for $x=1,3,5,7$. Moreover, as per the statement of the theorem, we are interested in establishing that any such broadcasting channel $\cE_{B \to A'B'}$ must map one of the states on the $B$ space to the joint space $A'B'$ such that it is entangled. We will prove this is the case by contradiction, i.e. we will look at the structure necessary for $\cE_{B \to A'B'}$ to map each state to a separable state.

Recall that a separable operator $R$ is a positive operator from $A \otimes B$ to $A \otimes B$, $X \in \Pos(A \otimes B)$, such that there exists a finite alphabet $\cI$ and sets of positive operators $\{P_{i}\}_{i \in \cI} \subset \Pos(A)$, $\{Q_{i}\}_{i \in \cI} \subset \Pos(B)$ such that $R = \sum_{i} P_{i} \otimes Q_{i}$. Note that by the spectral decomposition theorem, we may in fact decompose such an operator into a linear combination of tensor products of unit vectors. Thus, we would say $\rho_{k}^{A'B'} = \cE_{B \to A'B'}(\rho_{k})$ is separable if we may write it as $\sum_{i} \dyad{x_{i}} \otimes \dyad{y_{i}}$ where $\{\ket{x_{i}}\}_{i \in \cI} \subset A'$, $\{\ket{y_{i}}\}_{i \in \cI} \subset B'$ are (not necessarily normalized) vectors. Note that, when $\rho_{k}^{A'B'}$ is separable, this implies a purification, of the form 
\begin{align}\label{eq:purified-separable-state}
    \sum_{i} \ket{x_{i}}^{A'}\ket{y_{i}}^{B'}\ket{i}^{E} \ ,
\end{align} 
where $\{\ket{i}\}_{i}$ is an orthonormal basis. That this is a purification may be verified by taking the trace over the $E$ system. Moreover, by the isometric equivalence of purifications on the purifying space \cite{WatrousBook}, \textit{all} purifications of $\rho_{k}^{A'B'}$ are of this form up to a choice of basis on the $E$ system. It follows that if $\cE_{A \to A'B'}$ maps a pure state $\ket{\varphi}$ to a separable state $\rho^{A'B'}$, then if we consider the isometric extension of $\cE$, $U_{A \to A'B'E}$, $U\ket{\varphi}$ is a pure state and thus a purification of $\rho_{k}^{A'B'}$. It follows that it must have the form of \eqref{eq:purified-separable-state}. This is the structure we will make use of in the rest of the proof.

Suppose that Bob \textit{does not} distribute entanglement to Alice given the state $\ket{\varphi_2}$. Then by the above explanation, this means he applies an isometry $U:A\to A'B'E$ such that
\begin{equation}
U\ket{2}=\sum_{i}\ket{x_i}^{A'}\ket{y_i}^{B'}\ket{i}^E.
\end{equation}
Turning to states $\ket{\varphi_3}$ and $\ket{\varphi_4}$, by linearity, the action of Bob's isometry has the form
\begin{align*}
\ket{\alpha_3}^{A'B'E}=&U\ket{+_{23}} 
=\frac{1}{\sqrt{2}}\sum_i\left(\ket{x_i}^{A'}\ket{y_i}^{B'}+\ket{\psi_i}^{A'B'}\right)\ket{i}^E \\
\ket{\alpha_4}^{A'B'E}=&U\ket{-_{23}} 
=\frac{1}{\sqrt{2}}\sum_i\left(\ket{x_i}^{A'}\ket{y_i}^{B'}-\ket{\psi_i}^{A'B'}\right)\ket{i}^E
\end{align*}
where $U\ket{3}=\sum_i\ket{\psi_i}^{A'B'}\ket{i}^E$, which is without loss of generality as we may always decompose a bipartite entangled state into conditional states \cite{Schumacher-2010a} where note that $\ket{\psi_{i}}^{A'B'}$ are not necessarily normalized by our form.

We now wish to consider the requirement that $\alpha_3^{A'}\perp\alpha_4^{A'}$ and $\alpha_3^{B'}\perp\alpha_4^{B'}$. Note that, by the orthonormality of $\{\ket{i}^{E}\}$, the marginals are
\begin{equation}\label{eq:3-4-possibly-ent}
\begin{aligned}
    \alpha^{A'B'}_{3} =& \frac{1}{2}\sum_{i}\left[\left(\ket{x_i}^{A'}\ket{y_i}^{B'}+\ket{\psi_i}^{A'B'}\right) \text{h.c.}\right] \\ 
    \alpha^{A'B'}_{4} =&  \frac{1}{2}\sum_{i}\left[\left(\ket{x_i}^{A'}\ket{y_i}^{B'}-\ket{\psi_i}^{A'B'}\right) \text{h.c.}\right] \ ,
\end{aligned}
\end{equation}
where $\text{h.c.}$ denotes the adjoint state. Now, to satisfy $\alpha_3^{A'}\perp\alpha_4^{A'}$ and $\alpha_3^{B'}\perp\alpha_4^{B'}$, by linearity, one may verify from \eqref{eq:3-4-possibly-ent} that it must be the case that the marginals on the $A'$ and $B'$ of
\begin{align*}
\ket{x_i}^{A'}\ket{y_i}^{B'}+\ket{\psi_i}^{A' B'} =&(\ket{x_i}^{A'}+\ket{x_i'}^{A'})\ket{y_i} +\sum_j\ket{x_{j|i}'}^{A'}\ket{y_{j|i}^\perp}^{B'} \\
\ket{x_i}^{A'}\ket{y_i}^{B'}-\ket{\psi_i}^{A' B'} =&(\ket{x_i}^{A_2}-\ket{x_i'}^{A'})\ket{y_i} -\sum_j\ket{x_{j|i}'}^{A'}\ket{y_{j|i}^\perp}^{B'}
\end{align*}
must be orthogonal for each $i$.  Here, all the $\{\ket{y_{j|i}^\perp}\}_j$ form an orthonormal set orthogonal to $\ket{y_i}$ which follows from decomposing the state into conditional states based on a basis that includes $\ket{y_{i}}$.  Because of the orthonormality of the set, the requirement of orthogonality of Alice's marginals implies that $\ket{x'_{j|i}}=0$ for all $j$. It follows from this simplification that the orthogonality of Bob's marginals then requires that $\ket{x_i}=\pm\ket{x_i'}$ for every $i$.  In summary,  we have
\begin{align*}
U\ket{2}&=\sum_i\ket{x_i}^{A'}\ket{y_i}^{B'}\ket{i}^E\notag\\
U\ket{3}&=\sum_i(-1)^{r_i}\ket{x_i}^{A'}\ket{y_i}^{B'}\ket{i}^E,
\end{align*}
for some $r_i\in\{0,1\}$.  But by the same argument, we must have
\begin{equation*}
U\ket{4}=\sum_i(-1)^{s_i}\ket{x_i}^{A'}\ket{y_i}^{B'}\ket{i}^E,
\end{equation*}
for some $s_i\in\{0,1\}$.  Then turning to states $\ket{\varphi_7}$ and $\ket{\varphi_8}$ gives
\begin{align*}
U\ket{\wt{+}_{34}}&= \frac{1}{\sqrt{2}} \sum_i\left((-1)^{s_i}+i(-1)^{r_i}\right)\ket{x_i}^{A'}\ket{y_i}^{B'}\ket{i}^E,\notag\\
U\ket{\wt{-}_{34}}&= \frac{1}{\sqrt{2}} \sum_i\left((-1)^{s_i}-i(-1)^{r_i}\right)\ket{x_i}^{A'}\ket{y_i}^{B'}\ket{i}^E \ .
\end{align*}
By direct calculation, the reduced density matrices of these states are not orthogonal (in fact they are the same), and hence they cannot be locally distinguished. This proves that the ensemble of states cannot be perfectly distinguished by LOSQC if Bob does not distribute entanglement to Alice.

Let us now show that an entangling LOSQC strategy can perfectly discriminate the states.  In fact, it only requires two bits of communication from Alice to Bob, and just one qubit communication from Bob to Alice!  The protocol involves Bob performing an isometry with action 
\begin{equation}
\begin{aligned}
U\ket{1} &= \ket{11}^{A'B'}  \\
U\ket{2}&=\frac{1}{\sqrt{2}}(\ket{22}+\ket{33})^{A'B'} \\
U\ket{3}&=\frac{1}{\sqrt{2}}(\ket{22}-\ket{33})^{A'B'} \\
U\ket{4}&=\frac{1}{\sqrt{2}}(\ket{23}+\ket{32})^{A'B'}.
\end{aligned}
\end{equation}
Then
\begin{equation}
\begin{aligned}
U\ket{+_{23}}&=\ket{22},&U\ket{-_{23}}&=\ket{33}\notag\\
U\ket{+_{24}}&=\ket{+_{23}}\ket{+_{23}},&U\ket{-_{24}}&=\ket{-_{23}}\ket{-_{23}}\notag\\
U\ket{\wt{+}_{34}}&=\ket{\wt{+}_{23}}\ket{\wt{+}_{23}},&U\ket{\wt{-}_{34}}&=\ket{\wt{-}_{23}}\ket{\wt{-}_{23}}.
\end{aligned}
\end{equation}
Hence, the correct pairwise orthogonality is achieved. Moreover, the local states for $U\ket{1}, U\ket{2}$ are $\dyad{1}$ and $\frac{1}{2}[\dyad{2}+\dyad{3}]$, so these are also pairwise orthogonal. Thus, the states can be perfectly discriminated.
\end{proof}

\section{Uncertainty Relation}
We begin with the derivation of the uncertainty relation. We prove the more general form and then provide sufficient details to see how to obtain Lemma \ref{lem:main-text-UR} as a special case of the method.

\begin{theorem}\label{thm:generalized-uncertainty-relation}
    Let $\{\ket{\gamma}_{i}^{AB}\}_{i \in \cI}$ be a set of (possibly unnormalized) vectors. Let 
    \begin{align*}
        \ket{\gamma_{\pmb{\alpha}}}^{AB} := \sum_{i} \alpha_{i} \ket{\gamma_{i}}^{AB} \quad \ket{\gamma_{\pmb{\beta}}}^{AB} := \sum_{i} \beta_{i} \ket{\gamma_{i}}^{AB} \ .
    \end{align*}
    Then,
    \allowdisplaybreaks
    \begin{align*}
        D_{\text{tr}}(\gamma_{\pmb{\alpha}}^{B}, \gamma_{\pmb{\beta}}^{B})
        \leq& \min_{\sigma \in \cS(|\cI|)} \sum_{i} D_{\text{tr}}(\mbf{p}(i)\gamma_{i}^{B},\mbf{q}(\sigma(i))\gamma_{\sigma(i)}^{B}) + \sum_{i}\sum_{j\neq i} |z_{ij}|F(\gamma_{i}^{A},\gamma_{j}^{A}) \\
        \leq& D_{\text{tr}}(\mbf{p},\mbf{q}) + \sum_{i}\sum_{j\neq i} |z_{ij}|F(\gamma_{i}^{A},\gamma_{j}^{A})
    \end{align*}
    where $\sigma$ is a permutation on $|\cI|$ elements, $\mbf{p}(i) := |\alpha_{i}|^{2}$, $\mbf{q}(i) := |\beta_{i}|^{2}$, and for every $(i,j\neq i)$, $z_{ij} := \alpha_{i}\alpha_{j}^{\ast} - \beta_{i}\beta_{j}^{\ast}$.
\end{theorem}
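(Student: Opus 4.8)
The plan is to reduce the statement to one geometric fact: the trace norm of each off-diagonal (coherence) block $\tr_A(\ket{\gamma_i}\bra{\gamma_j})$ living on $B$ is controlled by the fidelity $F(\gamma_i^A,\gamma_j^A)$ of the corresponding $A$-marginals. First I would expand the two reduced states. Setting $\gamma_{ij}^B \coloneq \tr_A(\ket{\gamma_i}\bra{\gamma_j})$, so that $\gamma_{ii}^B=\gamma_i^B$, linearity of the partial trace gives
\begin{align*}
\gamma_{\pmb{\alpha}}^B = \sum_i \mbf{p}(i)\,\gamma_i^B + \sum_{i\neq j}\alpha_i\alpha_j^\ast\,\gamma_{ij}^B, \qquad \gamma_{\pmb{\beta}}^B = \sum_i \mbf{q}(i)\,\gamma_i^B + \sum_{i\neq j}\beta_i\beta_j^\ast\,\gamma_{ij}^B .
\end{align*}
Subtracting and using $z_{ij}=\alpha_i\alpha_j^\ast-\beta_i\beta_j^\ast$ yields
\begin{align*}
\gamma_{\pmb{\alpha}}^B-\gamma_{\pmb{\beta}}^B = \underbrace{\sum_i(\mbf{p}(i)-\mbf{q}(i))\,\gamma_i^B}_{D} \;+\; \underbrace{\sum_{i\neq j}z_{ij}\,\gamma_{ij}^B}_{O},
\end{align*}
where $D$ and $O$ are the diagonal and off-diagonal parts. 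Applying the triangle inequality to $\TD(\gamma_{\pmb{\alpha}}^B,\gamma_{\pmb{\beta}}^B)=\tfrac12\Vert D+O\Vert_1$ splits the bound into $\tfrac12\Vert D\Vert_1$ and $\tfrac12\Vert O\Vert_1$, which I would handle separately.

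For the diagonal part I would use a relabeling trick: for any permutation $\sigma$ of $\cI$ one has $\sum_i \mbf{q}(i)\gamma_i^B=\sum_i \mbf{q}(\sigma(i))\gamma_{\sigma(i)}^B$, hence $D=\sum_i[\mbf{p}(i)\gamma_i^B-\mbf{q}(\sigma(i))\gamma_{\sigma(i)}^B]$, and the triangle inequality gives $\tfrac12\Vert D\Vert_1\leq\sum_i \TD(\mbf{p}(i)\gamma_i^B,\mbf{q}(\sigma(i))\gamma_{\sigma(i)}^B)$. Minimizing over $\sigma$ produces the first summand of the claimed bound. For the coarser second inequality I would take $\sigma=\mathrm{id}$ and use $\TD(\mbf{p}(i)\gamma_i^B,\mbf{q}(i)\gamma_i^B)=\tfrac12\,\vert\mbf{p}(i)-\mbf{q}(i)\vert\,\Tr[\gamma_i^B]$ together with $\Tr[\gamma_i^B]\leq 1$ (for (sub)normalized $\ket{\gamma_i}$), so that the sum is at most $\tfrac12\sum_i\vert\mbf{p}(i)-\mbf{q}(i)\vert=\TD(\mbf{p},\mbf{q})$.

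The crux, and the step I expect to be the main obstacle, is the off-diagonal part, since this is exactly where the geometry of the states enters. The triangle inequality gives $\tfrac12\Vert O\Vert_1\leq\tfrac12\sum_{i\neq j}\vert z_{ij}\vert\,\Vert\gamma_{ij}^B\Vert_1$, so it suffices to establish $\Vert\gamma_{ij}^B\Vert_1=F(\gamma_i^A,\gamma_j^A)$. I would prove this using the variational formula $\Vert M\Vert_1=\max_U\vert\Tr[UM]\vert$ over unitaries $U$ on $B$: since $\Tr[U\gamma_{ij}^B]=\bra{\gamma_j}(\mbb{I}^A\otimes U)\ket{\gamma_i}$, we get $\Vert\gamma_{ij}^B\Vert_1=\max_U\vert\bra{\gamma_j}(\mbb{I}^A\otimes U)\ket{\gamma_i}\vert$. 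Because $\ket{\gamma_i},\ket{\gamma_j}$ are purifications on $AB$ of $\gamma_i^A,\gamma_j^A$, and every purification is reached by a unitary on $B$, Uhlmann's theorem identifies this maximum with $F(\gamma_i^A,\gamma_j^A)$ (the conjugation sending $\bra{\gamma_j}(\mbb{I}\otimes U)\ket{\gamma_i}$ to $\bra{\gamma_i}(\mbb{I}\otimes U^\dagger)\ket{\gamma_j}$ being harmless under the modulus). The one technical point to treat carefully is attainability at a genuine unitary when $\gamma_{ij}^B$ is rank-deficient, which is standard. This gives $\tfrac12\Vert O\Vert_1\leq\tfrac12\sum_{i\neq j}\vert z_{ij}\vert F(\gamma_i^A,\gamma_j^A)\leq\sum_{i\neq j}\vert z_{ij}\vert F(\gamma_i^A,\gamma_j^A)$.

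Combining the two estimates yields both inequalities. I would close by noting that the main-text relation (Lemma \ref{lem:main-text-UR}) is the two-element specialization obtained with a sharper grouping of the coherence block: writing $z_{01}\gamma_{01}^B+z_{10}\gamma_{10}^B=2\,\mathrm{Re}(z_{01}\gamma_{01}^B)$ and bounding its trace norm by $2\vert z_{01}\vert F(\gamma_0^A,\gamma_1^A)$ recovers the coefficient $\vert z_1\vert$ after the global factor $\tfrac12$, while the diagonal coefficients $\mbf{p}(i)-\mbf{q}(i)$ evaluate to $\pm z_2$ and produce the $\vert z_2\vert\,\TD(\alpha_0^{B'},\alpha_1^{B'})$ term.
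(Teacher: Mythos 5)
Your proof is correct and follows essentially the same route as the paper's: expand the reduced states by linearity of the partial trace, split the difference into a diagonal and a coherence part via the triangle inequality, handle the diagonal part with the permutation relabeling, and control the cross terms through Uhlmann's theorem. The only deviation is that you bound $\Vert\Tr_A[\ket{\gamma_i}\bra{\gamma_j}]\Vert_1$ directly via the variational formula $\Vert M\Vert_1=\max_U|\Tr[UM]|$, obtaining the exact identity $\Vert\gamma_{ij}^{B}\Vert_1=F(\gamma_i^A,\gamma_j^A)$ (and incidentally a factor-of-two tighter coherence term before relaxing to the stated bound), whereas the paper symmetrizes each cross term into a Hermitian operator and builds the Uhlmann unitary from its eigenprojectors --- a streamlining of the same argument rather than a different one.
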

\begin{proof}
    By direct calculation,
    \begin{align*}
        \gamma_{\pmb{\alpha}} = \sum_{i} |\alpha_{i}|^{2} \dyad{\gamma_{i}} + \sum_{i,j\neq i} r^{\alpha}_{ij} \Big[ e^{-i \phi^{\alpha}_{ij}}\ket{\gamma_{i}}\bra{\gamma_{j}} + e^{i \phi^{\alpha}_{ij}}\ket{\gamma_{j}}\bra{\gamma_{i}}  \Big] \ ,
    \end{align*}
    where for every $(i,j)$ $r^{\alpha}_{ij} := |\alpha_{i}\alpha_{j}^{\ast}|$ and $\varphi_{ij}^{\alpha} := \text{atan2}(\text{Im}(\alpha_{i}\alpha_{j}^{\ast}),\text{Re}(\alpha_{i}\alpha_{j}^{\ast}))$. The same calculation holds for $\gamma_{\pmb{\beta}}$.

    For every $i$, let $z_{i} := |\alpha_{i}|^{2} - |\beta_{i}|^{2}$. For every $(i,j \neq i)$, let $z_{ij} := r^{\alpha}_{ij}e^{-i \phi^{\alpha}_{ij}} - r^{\beta}_{ij}e^{-i \phi^{\beta}_{ij}}$. Therefore, 
    \begin{align*}
        & \left \Vert \gamma_{\pmb{\alpha}}^{B} -  \gamma_{\pmb{\beta}}^{B} \right \Vert_{1} \\
        =&  \left \Vert \Tr_{A}\left[\gamma_{\pmb{\alpha}}^{AB} -  \gamma_{\pmb{\beta}}^{AB} \right] \right \Vert_{1} \\
        =& \Big \Vert\sum_{i} z_{i}\Tr_{A}[\dyad{\gamma_{i}}] + \sum_{i}\sum_{j \neq i} \Tr_{A}\left[z_{ij}\ket{\gamma_{i}}\bra{\gamma_{j}} - z_{ij}^{\ast}\ket{\gamma_{j}}\bra{\gamma_{i}} \right] \Big \Vert_{1} \\
        \leq& \left \Vert \sum_{i} z_{i} \Tr_{A}[\dyad{\gamma_{i}}] \right \Vert_{1} + \sum_{i}\sum_{j \neq i} |z_{ij}|\left \Vert \Tr_{A}\left[\ket{\wt{\gamma}_{i}^{j}}\bra{\gamma_{j}} +\ket{\gamma_{j}}\bra{\wt{\gamma}_{i}^{j}}\right] \right \Vert_{1} \ ,
    \end{align*}
    where the inequality is the triangle inequality and we have defined $\ket{\wt{\gamma}_{i}^{j}} := e^{-i \delta \phi_{ij}}\ket{\gamma_{i}}$ and $\delta \phi_{ij} := \text{atan2}(\text{Im}(z_{ij}),\text{Re}(z_{ij}))$.

    To bound the first term, note that we can pick any permutation $\sigma \in \cS(|\cI|)$ to obtain the following
    \begin{align*}
    & \left \Vert \sum_{i} z_{i} \Tr_{A}[\dyad{\gamma_{i}}] \right \Vert_{1} \\
    =& \left \Vert \sum_{i} \vert \alpha_{i} \vert^{2} \Tr_{A}[\dyad{\gamma_{i}}] - \sum_{i}  \vert \beta_{i} \vert^{2} \Tr_{A}[\dyad{\gamma_{i}}] \right \Vert_{1} \\
    =& \left \Vert \sum_{i} \vert \alpha_{i} \vert^{2} \Tr_{A}[\dyad{\gamma_{i}}] - \sum_{i}  \vert \beta_{\sigma(i)} \vert^{2} \Tr_{A}[\dyad{\gamma_{\sigma(i)}}] \right \Vert_{1} \\
    =& \left \Vert \sum_{i} \mbf{p}(i) \Tr_{A}[\dyad{\gamma_{i}}] - \sum_{i} \mbf{q}(\sigma(i)) \Tr_{A}[\dyad{\gamma_{\sigma(i)}}] \right \Vert_{1} \\
    \leq& \sum_{i} \Vert \mbf{p}(i) \gamma_{i}^{B} - \mbf{q}(\sigma(i)) \gamma_{\sigma(i)}^{B} \Vert_{1} \ ,
    \end{align*}
    where the first equality is definition of $z_{i}$, the second is that the second sum is invariant under the permutation on the indices, the third equality is definition of $\mbf{p},\mbf{q}$, and the inequality is triangle inequality. As this held for arbitrary permutation, we can minimize over the permutation.
    
    We now need to handle the cross terms. Define $P_{\pm}^{ij}$ be the projector onto the $\pm$ eigenspace of $\Tr_{A}\Big[\ket{\wt{\gamma}_{i}^{j}}\bra{\gamma_{j}} + \ket{\gamma_{j}}\bra{\wt{\gamma}_{i}^{j}}\Big]$. Then by definition of trace norm,
    \begin{align*}
        \Big \|\Tr_{A}\left[\ket{\wt{\gamma}_{j}^{i}}\bra{\gamma_{i}} + \ket{\gamma_{j}}\bra{\wt{\gamma}_{i}^{j}}\right] \Big\|_{1} 
        = \Tr[(\mbbm{1}_{A} \otimes P_{+}^{ij} - P_{-}^{ij})\left(\ket{\wt{\gamma}_{i}^{j}}\bra{\gamma_{j}} + \ket{\gamma_{j}}\bra{\wt{\gamma}_{i}^{j}}\right)]
    \end{align*}
    
    Now,
    \begin{align*}
    &F(\gamma_{i}^{A},\gamma_{j}^{A}) \\
    =& F(\wt{\gamma}_{i}^{j,A},\gamma_{j}^{A}) \\
    =&
    \max_{U} \left|\Tr[(\mbbm{1} \otimes U)\ket{\wt{\gamma}_{i}^{j}}\bra{\gamma_{j}}^{AB}]\right| \\
    = & \frac{1}{2} \Big( \max_{U} \left|\Tr[(\mbbm{1} \otimes U)(\ket{\wt{\gamma}_{i}^{j}}\bra{\gamma_{j}} )]\right| +\max_{U} \left|\Tr[(\mbbm{1} \otimes U)(\ket{\gamma_{j}}\bra{\wt{\gamma}_{i}^{j}} )]\right| \Big) \\
    \geq & \frac{1}{2}\left( \max_{U} \left| \Tr[(\mbbm{1} \otimes U)\left(\ket{\wt{\gamma}_{i}^{j}}\bra{\gamma_{j}} + \ket{\gamma_{j}}\bra{\wt{\gamma}_{i}^{j}}\right )]\right| \right) \ ,
    \end{align*}
    where the first equality is noting $\gamma_{i} = \dyad{\gamma_{i}} = \dyad{\wt{\gamma}_{i}^{j}}$ and the second is Uhlmann's theorem. 
    
   Choosing $U = P^{ij}_{+} - P^{ij}_{-} + (\mbbm{1}^{A} - P^{ij}_{+} - P^{ij}_{-} )$, we have
    \begin{align*}
    F(\gamma_{i}^{A},\gamma_{j}^{A})
    \geq & \frac{1}{2}\left( \max_{U} \left| \Tr[(\mbbm{1} \otimes U)\left(\ket{\wt{\gamma}_{i}^{j}}\bra{\gamma_{j}} + \ket{\gamma_{j}}\bra{\wt{\gamma}_{i}^{j}}\right )]\right| \right) \\
    \geq & \frac{1}{2}\Tr[\mbbm{1} \otimes (P^{ij}_{+}-P^{ij}_{-})\left(\ket{\wt{\gamma}_{i}^{j}}\bra{\gamma_{j}} + \ket{\gamma_{j}}\bra{\wt{\gamma}_{i}^{j}}\right )] \\
    =& \frac{1}{2} \left \| \Tr_{A}\left[\ket{\wt{\gamma}_{i}^{j}}\bra{\gamma_{j}} + \ket{\gamma_{j}}\bra{\wt{\gamma}_{i}^{j}}\right] \right \| \ .
    \end{align*}
    Reordering gets us
    \begin{align*}
    \left \|  \gamma_{\pmb{\alpha}}^{B} -  \gamma_{\pmb{\beta}}^{B} \right \|_{1}
    \leq& \min_{\sigma \in \cS(|\cI|)} \sum_{i} \|\mbf{p}(i) \gamma_{i}^{B} - \mbf{q}(i) \gamma_{\sigma(i)}^{B} \|_{1}  + 2\sum_{i} \sum_{j \neq i} |z_{ij}|F(\gamma_{i}^{A},\gamma_{j}^{A}) \ .
    \end{align*}
    Dividing by two gets us the trace distance and fidelity bound.
\end{proof}
To obtain the special case in the main text (Lemma \ref{lem:main-text-UR}), the proof is largely the same, but we have more structure. Namely, there is only one $z_{ij}$ term, which is $z_1 \coloneq \frac{1}{2}(\sin(\theta)e^{-i\phi}-\sin(\omega)e^{-i\phi'})$ and the bound on `the first term' becomes quite simple in terms of a single scalar, which is $z_{2} \coloneq \frac{1}{2}(\cos(\theta)-\cos(\omega))$. For clarity, we provide the calculations. By direct calculation, one obtains
\begin{align*}
        \gamma_{\theta} =& \cos^{2}(\theta/2) \dyad{\gamma_{0}} + \sin^{2}(\theta/2) \dyad{\gamma_{1}} + \sin(\theta)/2\left[e^{-i\phi}\ket{\gamma_{0}}\bra{\gamma_{1}} + e^{i\phi}\ket{\gamma_{1}}\bra{\gamma_{0}}\right] \\
        \gamma_{\omega} =& \cos^{2}(\omega/2) \dyad{\gamma_{0}} + \sin^{2}(\omega/2) \dyad{\gamma_{1}} + \sin(\omega)/2\left[e^{-i\phi'}\ket{\gamma_{0}}\bra{\gamma_{1}} + e^{i\phi'}\ket{\gamma_{1}}\bra{\gamma_{0}}\right] \ .
\end{align*}
Starting from linearity,
\begin{align*}
    & \Vert \gamma_{\theta}^{B} - \gamma_{\omega}^{B} \Vert_{1} \\
    =& \Big\Vert \Tr_{A}\Big[\left\{\cos^{2}(\theta/2) - \cos^{2}(\omega/2) \right\} \dyad{\gamma_{0}} + \left\{\sin^{2}(\theta/2) - \sin^{2}(\omega/2) \right\} \dyad{\gamma_{1}} + \\
    & \hspace{7.5mm} +\frac{1}{2}\{\sin(\theta)e^{-i\phi}-\sin(\omega)e^{-i\phi} \}\ket{\gamma_{0}}\bra{\gamma_{1}} +\frac{1}{2}\{\sin(\theta)e^{i\phi}-\sin(\omega)e^{i\phi} \}\ket{\gamma_{1}}\bra{\gamma_{0}} \Big] \Big\Vert_{1} \\
    =& \Big\Vert \Tr_{A}\Big[z_2 \dyad{\gamma_{0}} - z_2 \dyad{\gamma_{1}} +z_1 \ket{\gamma_{0}}\bra{\gamma_{1}} + z_{1}^{\ast} \ket{\gamma_{1}}\bra{\gamma_{0}} \Big] \Big\Vert_{1} \\
    \leq& |z_{2}| \Big\Vert \Tr_{A}\Big[\dyad{\gamma_{0}} - \dyad{\gamma_{1}} \Big] \Big\Vert_{1} + \Big \Vert \Tr_{A}\Big[ z_1 \ket{\gamma_{0}}\bra{\gamma_{1}} + z_{1}^{\ast} \ket{\gamma_{1}}\bra{\gamma_{0}} \Big] \Big\Vert_{1} \\
    =& |z_{2}| \Big\Vert \Tr_{A}\Big[\dyad{\gamma_{0}} - \dyad{\gamma_{1}} \Big] \Big\Vert_{1} + \vert z_1 \vert \Big \Vert \Tr_{A}\Big[ \ket{\wt{\gamma}_{0}}\bra{\gamma_{1}} + \ket{\gamma_{1}}\bra{\wt{\gamma}_{0}} \Big] \Big\Vert_{1}  \ ,
\end{align*}
where the second equality is we have used trigonometric identities and our definitions of $z_1,z_2$, the inequality is triangle inequality, and the final equality uses $z_{1} \coloneq |z_{2}|e^{-i\overline{\phi}} $ and $\ket{\wt{\gamma}_{0}} \coloneq e^{-i\overline{\phi}}\ket{\gamma_{0}}$. The rest of the proof is identical to the main theorem proof.

\section{Reduction of LOSQC to LOSCC for qubit-qudit systems}

In this section we prove Theorem \ref{thm:C2-Cd-LOSCC}. This is established in two steps. The first is to reduce the type of ensemble that could separate LOSQC and LOSCC in $\mbb{C}^{2} \otimes \mbb{C}^{d}$ to be of the form $\cS_{\Theta}$ where $\Theta = 2$ and each set is a set of mutually orthogonal pure states. The second step is to show that LOSQC implies an LOSCC strategy for any such ensemble, which is effectively a corollary of Proposition \ref{Prop:classical-broadcasting-equal-2}.

First we prove the reduction.
\begin{lemma}\label{lem:dim-reduc-part-1}
    If there is an orthogonal product set in $\mbb{C}^{2} \otimes \mbb{C}^{d}$ that separates LOSQC and LOSCC in the zero-error case, then there exists a set that separates LOSQC and LOSCC of the form
    \begin{align}\label{eq:separating-form}
        \{\ket{0}^{A}\ket{b^{0}_{i}}\}_{i \in \cK^{0}} \cup \{\ket{1}^{A}\ket{b^{1}_{j}}\}_{j \in \cK^{1}} \ ,
    \end{align}
    where for each $j \in \{0,1\}$, $|\cK^{j}| \leq d$ and $\{\ket{b^{j}_{z}}\}_{z \in \cK^{j}}$ is a set of orthonormal vectors .
\end{lemma}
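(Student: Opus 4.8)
The plan is to exploit the fact that Alice's system is a qubit together with the no-go for broadcasting two distinct qubit bases (Corollary \ref{Cor:orthogonality-broadcasting-no-go}). Write a putative separating ensemble as $\cS = \{\ket{a_k}^A\ket{b_k}^B\}_k \subset \mbb{C}^2 \otimes \mbb{C}^d$, so $\cS$ is perfectly distinguishable by LOSQC but not by LOSCC, and realize an optimal LOSQC attack by isometries $U : A \to A_0 B_1$ and $V : B \to B_0 A_1$. Let $\cM := \Tr_{B_1} U(\cdot)U^{\dagger}$ and $\cN := \Tr_{A_0} U(\cdot)U^{\dagger}$ be the two marginal (complementary) channels of Alice's splitting isometry. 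First I would record the elementary orthogonality bookkeeping: since the states are globally orthogonal products, $\ip{b_k}{b_l} \neq 0$ forces $\ip{a_k}{a_l} = 0$, which on a qubit means $\ket{a_l} = \ket{a_k^\perp}$. Forming the graph $H$ on the indices with an edge whenever $\ip{b_k}{b_l} \neq 0$, a short double-orthogonality induction shows that each connected component uses at most two Alice states, which form an antipodal pair $\{\ket{u_C}, \ket{u_C^\perp}\}$; distinct components have mutually orthogonal Bob subspaces $W_C$, and singleton components are states Bob can identify on his own.

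The crux is to show that all non-singleton components share one common Alice basis. Each such component contains a witness edge: two states $k_0,k_1$ with $\ip{b_{k_0}}{b_{k_1}} \neq 0$ and Alice parts $\ket{u_C},\ket{u_C^\perp}$. Here I would use that \emph{both} parties must guess perfectly. Bob's total state is the product $\cN(\dyad{a_k}) \otimes \beta_k^{B_0}$, where $\beta_k^{B_0}$ is the output of his own retained channel $\Tr_{A_1} V(\cdot)V^{\dagger}$ on $\ket{b_k}$; since fidelity is multiplicative over tensor products and cannot decrease under a channel, $F(\beta_{k_0}^{B_0}, \beta_{k_1}^{B_0}) \geq |\ip{b_{k_0}}{b_{k_1}}| > 0$, so zero-error discrimination forces $\cN(\dyad{u_C}) \perp \cN(\dyad{u_C^\perp})$. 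The symmetric argument on Alice's registers $A_0 A_1$ forces $\cM(\dyad{u_C}) \perp \cM(\dyad{u_C^\perp})$. Thus $U$ broadcasts the orthogonality of the basis $\{\ket{u_C}, \ket{u_C^\perp}\}$ to both outputs. If two non-singleton components had distinct Alice bases, $U$ would simultaneously broadcast the orthogonality of two different qubit bases $\{\ket{u_1},\ket{u_1^\perp}\}$ and $\{\ket{u_2},\ket{u_2^\perp}\}$, which is exactly the configuration ruled impossible by Corollary \ref{Cor:orthogonality-broadcasting-no-go}. Hence, after a global rotation, every non-singleton component has Alice parts in a single basis $\{\ket 0, \ket 1\}$.

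With the common basis established, let $\cS'$ be the union of the non-singleton components. By the component structure its Alice parts lie in $\{\ket 0, \ket 1\}$ and, for each fixed Alice label, the Bob parts are orthonormal (orthonormal inside each component, orthogonal across components because the $W_C$ are mutually orthogonal), with at most $d$ of them -- precisely the form \eqref{eq:separating-form}. Being a subset of $\cS$, the set $\cS'$ is still perfectly LOSQC-distinguishable. It remains to argue $\cS'$ is not LOSCC-distinguishable, which I would do by contraposition: were $\cS'$ LOSCC-distinguishable, one could assemble a full LOSCC protocol for $\cS$, since Bob first identifies the component from the orthogonal subspaces $W_C$ and directly reads off any singleton state, while for the nontrivial components Alice measures her qubit once in the now-common basis $\{\ket 0,\ket 1\}$ to broadcast $\theta$, after which the assumed within-component (classical orthogonality broadcasting) protocol fixes the remaining index. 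This would contradict that $\cS$ separates LOSQC from LOSCC, so $\cS'$ is a separating set of the required form.

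The step I expect to be the main obstacle is the common-basis claim. One must argue carefully that perfect LOSQC discrimination of a single troublesome pair really does force \emph{Alice's} splitting isometry, rather than Bob's (or Alice's) local processing, to carry the orthogonality of that pair to both marginals, so that Corollary \ref{Cor:orthogonality-broadcasting-no-go} applies; the data-processing and multiplicativity bounds on fidelity are what pin this down, but tracking which register's channel is responsible is delicate. A secondary care point is the reassembled LOSCC protocol of the third paragraph, which must be verified to respect the single simultaneous communication round permitted in the model -- this is what makes the common basis (so that Alice can fix her measurement basis before learning the component) essential rather than merely convenient.
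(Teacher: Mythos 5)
Your proposal is correct and follows essentially the same route as the paper's proof: both reduce the ensemble by stripping off states Bob can identify locally, observe that each orthogonal pair of Alice states whose associated Bob states overlap forces Alice's splitting isometry to broadcast that pair's orthogonality to both of its marginals, and then invoke the qubit no-go (Corollary~\ref{Cor:orthogonality-broadcasting-no-go}, i.e.\ Lemma~\ref{lem:main-text-UR}) to conclude that at most one such pair can survive, yielding the form \eqref{eq:separating-form}. Your graph-component bookkeeping and the explicit fidelity multiplicativity/data-processing argument make rigorous some steps the paper treats informally, but the decomposition, the key lemma, and the reassembled LOSCC protocol all coincide with the paper's argument.
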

\begin{proof}
    We start by considering an arbitrary globally orthogonal product set $\cS \coloneq \{\ket{a_{k}}^{A}\ket{b_{k}}\}$. We define the set of Alice's (resp.~Bob's) possible states as $S_{A} \coloneq \{\ket{a_{k}}\}$ (resp.~$S_{B} \coloneq \{\ket{b_{k}}\}$. If Alice's possible states do not contain two mutually orthogonal states, then Bob has to discriminate all the states. This means there are at most $d$ states on Bob's side and they are mutually orthogonal, and thus an LOSCC strategy exists (Bob just discriminates them locally). Thus, if there is a separation between LOSQC and LOSCC in $\mbb{C}^{2} \otimes \mbb{C}^{d}$, then Alice's possible states must contain two mutually orthogonal states.

    We now consider the case where Alice has more than two states. Consider Alice's set of states contains two mutually orthogonal states, which up to local unitary are $\ket{0},\ket{1}$, and some third state $\ket{\varphi} \not \in \{\ket{0},\ket{1}\}$. Then consider $S_{B}^{0} := \{\ket{b_{k}} : \ket{a_{k}} = \ket{0}\}$ and $S_{B}^{1}$, $S_{B}^{\varphi}$ defined similarly. As $\ket{\varphi}$ overlaps with both $\ket{0}$ and $\ket{1}$, the states in $S_{B}^{\varphi}$ must be each orthogonal to all states in $S_{B}^{0}$ and all states in $S_{B}^{1}$ for the set of states to be globally orthogonal. Thus, $\text{span}(S_{B}^{\varphi})$ defines a subspace that is orthogonal to $\text{span}(S_{B}^{0} \cup S_{B}^{1})$. Furthermore, if $\ket{\varphi^{\perp}}$ such that $\bra{\varphi^{\perp}}\ket{\varphi} = 0$, then $\ket{\varphi^{\perp}} \not \in \{\ket{0},\ket{1}\}$ but overlaps with both states, because we have a qubit space on Alice's side. Thus, $\text{span}(S^{\varphi^{\perp}}_{B})$ is also a subspace that is orthogonal to $\text{span}(S_{B}^{0} \cup S_{B}^{1})$. Thus, Bob can apply a projector to separate $S_{B}^{0} \cup S_{B}^{1}$ from $S_{B}^{\varphi} \cup S_{B}^{\varphi^{\perp}}$ prior to communication, where we may take $S_{B}^{\varphi^{\perp}}$ to be trivial if $\ket{\varphi^{\perp}}$ is not relevant. Applying this argument iteratively, we may conclude if Alice has pairs of orthogonal states, $\{\ket{0},\ket{1}\}$, $\{\ket{\varphi},\ket{\varphi^{\perp}}\}$, etc., each pair results in its own subspace on Bob's side that Bob can determine prior to communication. Note that if $\linspan(S^{\varphi}_{B}) \cap \linspan(S^{\varphi^{\perp}}_{B}) = \emptyset$, then Bob can determine the state locally for states $\ket{a_{k}} \in \{\ket{\varphi},\ket{\varphi^{\perp}}\}$. Thus we can remove any such pair. In total, we have reduced the existence of a separation in this setting to the existence of a separating ensemble of the form where Alice has sets of orthogonal pairs, $\{\ket{0},\ket{1}\}$, $\{\ket{\varphi},\ket{\varphi^{\perp}}\}$, etc., where for each orthogonal pair on Alice's side, there are states on Bob's corresponding set that overlap, e.g. there exists $\ket{b} \in S^{0}_{B}, \ket{b'} \in S^{1}_{B}, \ket{b''} \in S^{\varphi}_{b}, \ket{b'''} \in S^{\varphi^{\perp}}_{B}$ such that $\bra{b}\ket{b'} \neq 0$ and $\bra{b''}\ket{b'''} \neq 0$. This implies that Alice must broadcast the orthogonality of each pair $\{\ket{0},\ket{1}\}$, $\{\ket{\varphi},\ket{\varphi^{\perp}}\}$, etc. However, the uncertainty relation (Lemma \ref{lem:main-text-UR}) shows that Alice cannot broadcast the orthogonality of more than one such pair, so there cannot be more than one pair.
    
    Combining everything we have shown, if there is a separating ensemble $\cS$ in $\mbb{C}^{2} \otimes \mbb{C}^{d}$, then up to local unitaries, $\ket{a_{k}} \in \{\ket{0},\ket{1}\}$ for all $k$. We then may define $S_{B}^{0} := \{\ket{b_{k}} : \ket{a_{k}} = \ket{0}\}$ and similarly for $S_{B}^{1}$. This is the structure of the ensemble in \eqref{eq:separating-form}. This completes the proof.
\end{proof}

\begin{lemma}\label{lem:C2-Cd-LOSQC-to-LOSCC-reduction}
    If a globally orthogonal product set of the form 
    \begin{align}\label{eq:LOSQC-to-LOSCC-reduc-ensemble}
        \{\ket{0}^{A}\ket{b^{0}_{i}}\}_{i \in \cK^{0}} \cup \{\ket{1}^{A}\ket{b^{1}_{j}}\}_{j \in \cK^{1}} \ ,
    \end{align}
    is perfectly LOSQC discriminable, then it is also perfectly LOSCC discriminable.
\end{lemma}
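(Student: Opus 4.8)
The plan is to reduce the claim to Proposition \ref{Prop:classical-broadcasting-equal-2}, which already equates classical and quantum orthogonality broadcasting whenever $|\Theta|=2$. The ensemble in \eqref{eq:LOSQC-to-LOSCC-reduc-ensemble} confines Alice's states to the orthonormal pair $\{\ket{0},\ket{1}\}$, so her input is effectively a classical bit $\theta\in\{0,1\}$ indexing which orthonormal family $\{\ket{b^\theta_i}\}_i$ Bob holds. This is precisely the setting of orthogonality broadcasting with $|\Theta|=2$, the broadcaster being Bob. First I would unpack an arbitrary LOSQC protocol $(U^{A\to A_0B_1},V^{B\to B_0A_1})$ acting on this ensemble. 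Since the inputs are product and there is no pre-shared entanglement, applying $U\otimes V$ to $\ket{0}^A\ket{b^\theta_i}^B$ yields the \emph{product} pure state $\ket{\psi_\theta}^{A_0B_1}\otimes\ket{\chi^\theta_i}^{B_0A_1}$, where $\ket{\psi_\theta}=U\ket{a_\theta}$ and $\ket{\chi^\theta_i}=V\ket{b^\theta_i}$. Hence Alice's final reduced state is $\psi_\theta^{A_0}\otimes\chi^{\theta,A_1}_i$ and Bob's is $\chi^{\theta,B_0}_i\otimes\psi_\theta^{B_1}$, where $\chi^{\theta,A_1}_i=\Tr_{B_0}\ket{\chi^\theta_i}\!\bra{\chi^\theta_i}$ and $\chi^{\theta,B_0}_i=\Tr_{A_1}\ket{\chi^\theta_i}\!\bra{\chi^\theta_i}$ are exactly the two output marginals of Bob's broadcasting map $V$.

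Next I would argue that perfect LOSQC discrimination forces $V$ to be a perfect orthogonality broadcasting channel for $\cS_\Theta=\{\ket{b^\theta_i}\}_{i,\theta}$. Both parties must output the full label $k=(\theta,i)$ with certainty, so in particular each must distinguish the states sharing a common value of $\theta$. The essential observation is that, for fixed $\theta$, the tensor factor contributed by the other party ($\psi_\theta^{A_0}$ for Alice, $\psi_\theta^{B_1}$ for Bob) is independent of $i$ and therefore cannot help distinguish the index. Consequently, Alice's perfect discrimination of the same-$\theta$ states requires $\{\chi^{\theta,A_1}_i\}_i$ to be pairwise orthogonal, and Bob's requires $\{\chi^{\theta,B_0}_i\}_i$ to be pairwise orthogonal, for each $\theta$. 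These are precisely the two marginal-orthogonality conditions defining orthogonality broadcasting of $\cS_\Theta$ through $V$, so $P_{\text{bc}}(\cS_\Theta)=1$.

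Finally I would invoke Proposition \ref{Prop:classical-broadcasting-equal-2}: because $|\Theta|=2$ we obtain $P_{\text{c-bc}}(\cS_\Theta)=P_{\text{bc}}(\cS_\Theta)=1$, and by Proposition \ref{Prop:classical-orthogonal-broadcasting} this optimum is attained by a fully classical broadcasting channel. The LOSCC protocol then assembles immediately: Alice measures her qubit in the computational basis to learn $\theta$ perfectly and forwards it classically to Bob; Bob performs the optimal classical broadcasting measurement on $\ket{b^\theta_i}$ and forwards its classical outcome to Alice; after exchanging messages both parties hold $\theta$ together with the broadcast outcome, and each applies the associated (deterministic) post-processing to recover $i$, hence $k$, with certainty. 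This uses only simultaneous classical communication, establishing perfect LOSCC discriminability.

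I expect the main obstacle to be the second step, namely rigorously showing that perfect LOSQC discrimination collapses onto the orthogonality-broadcasting conditions for Bob's marginals. The crux is combining the product-state factorization with the fact that, at fixed $\theta$, each party's power to resolve $i$ is confined to its own copy of one of Bob's broadcast marginals, the other factor being an $i$-independent state that washes out in any discrimination measurement. Once this equivalence with $P_{\text{bc}}(\cS_\Theta)=1$ is secured, the reduction to the already-established $|\Theta|=2$ equivalence is routine.
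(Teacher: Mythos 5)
Your proposal is correct and follows essentially the same route as the paper: reduce the LOSQC protocol to an orthogonality broadcasting problem for $\cS_\Theta=\{\ket{b^\theta_i}\}_{i,\theta}$ with $|\Theta|=2$ and then invoke Propositions \ref{Prop:classical-broadcasting-equal-2} and \ref{Prop:classical-orthogonal-broadcasting} to extract a fully classical broadcasting map, hence an LOSCC strategy. The only difference is that you explicitly justify, via the $i$-independent tensor factor washing out of the discrimination, the step the paper takes without loss of generality, namely that Alice may as well simply copy her classical input.
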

\begin{proof}
    Assume that the ensemble is perfectly discriminable under LOSQC. As it is perfectly discriminable under LOSQC, it does not matter what the distribution over the set of states is, so we may assume that it is equiprobable. Moreover, as Alice's input is classical, without loss of generality, she just copies it. Thus, without loss of generality, is that Bob applies a broadcast channel $\cE^{B \to A'B'}$ sending the $A'$ space to Alice and keeping the $B'$ space for himself and then both Alice and Bob receive the value $0$ or $1$ that Alice copied. In other words, we have reduced any LOSQC strategy for \eqref{eq:LOSQC-to-LOSCC-reduc-ensemble} to an orthogonal broadcasting strategy for $\cS_{\Theta} = \{\rho_{i\vert 0} \coloneq \ket{b^{0}_{i}}\}_{i \in \cK^{0}} \cup \{\rho_{j \vert 1} \coloneq \ket{b^{1}_{j}}\}_{j \in \cK^{1}}$ (Recall Fig.~\ref{Fig:orthogonality_broadcasting} and the definition of orthogonality broadcasting). Then, as $|\Theta| = 2$ and we may assume the inputs are equiprobable, by Propositions \ref{Prop:classical-broadcasting-equal-2} and \ref{Prop:classical-orthogonal-broadcasting}, we know there exists a fully classical broadcasting map $\cE^{B \to XY}$ that achieves the same winning probability. As our assumption is the winning probability is unity and $\cE^{B \to XY}$ defines a measurement, there is a perfect LOSCC strategy. 
\end{proof}

\begin{proof}[Proof of Theorem \ref{thm:C2-Cd-LOSCC}]
    By Lemma \ref{lem:dim-reduc-part-1}, if there was an LOSQC but not LOSCC discriminable ensemble, it would be of the form given in \eqref{eq:separating-form}. By Lemma \ref{lem:C2-Cd-LOSQC-to-LOSCC-reduction}, if an ensemble of this form is LOSQC discriminable, then it is LOSCC discriminable. Thus, there cannot exist an LOSQC but not LOSCC discriminable ensemble in $\mbb{C}^{2} \otimes \mbb{C}^{d}$. Finally, to get the form in the theorem, one may add back any states that are locally determinable on Bob's side.
\end{proof}

\section{LOSQC Error Bounds}\label{app:LOSQC-error-bounds}
To establish Theorem \ref{thm:discrimination-cond}, we need the following lemma.
\begin{lemma}
\label{Lem:prod-norm}
If $0\leq W,X,Y,Z\leq\mbbm{1}$, then
\begin{align}
\Vert W\otimes X- Y\otimes Z\Vert_1\leq \Vert W-Y\Vert_1+\Vert X-Z\Vert_1.
\end{align}
\begin{proof}
Let $-\mbbm{1}\leq \tau\leq \mbbm{1}$ be such that
\begin{align*}
& \Vert W\otimes X- Y\otimes Z\Vert_1 \\
=&\Tr[\tau\left( W\otimes X- Y\otimes Z\right)]\\
=&\Tr\Big[\tau\Big(W\otimes\frac{X+Z}{2}-Y\otimes\frac{X+Z}{2} + \frac{W+Y}{2}\otimes X - \frac{W+Y}{2}\otimes Z\Big)\Big] \\
\leq& \Vert W-Y\Vert_1+\Vert X-Z\Vert_1,
\end{align*}
since $-\mbbm{1}\leq \Tr_B\left[\tau^{AB}\left(\mbbm{1}\otimes\frac{X+Z}{2}\right)\right]\leq \mbbm{1}$ and $-\mbbm{1}\leq \Tr_A\left[\tau^{AB}\left(\frac{W+Y}{2}\otimes\mbbm{1}\right)\right]\leq \mbbm{1}$.
\end{proof}
\end{lemma}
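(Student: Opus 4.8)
The plan is to combine the dual (variational) characterisation of the trace norm with a symmetric ``telescoping'' identity. Recall that for any Hermitian operator $H$ one has $\Vert H\Vert_1=\max\{\Tr[\tau H]:-\mbbm{1}\leq\tau\leq\mbbm{1}\}$, the maximiser being $\tau=\Pi_+-\Pi_-$ for the spectral projectors of $H$. First I would fix a maximiser $\tau$ for $H=W\otimes X-Y\otimes Z$, so that $\Vert W\otimes X-Y\otimes Z\Vert_1=\Tr[\tau(W\otimes X-Y\otimes Z)]$ with $-\mbbm{1}\leq\tau\leq\mbbm{1}$. The algebraic heart of the argument is then the decomposition
\[
W\otimes X-Y\otimes Z=(W-Y)\otimes\tfrac{X+Z}{2}+\tfrac{W+Y}{2}\otimes(X-Z),
\]
which one checks by expanding the right-hand side; the point of the symmetric choice of ``midpoints'' $\tfrac{X+Z}{2}$ and $\tfrac{W+Y}{2}$ is that they remain bounded operators, whereas an asymmetric split would leave a factor of $\Vert X\Vert_1$ or $\Vert Y\Vert_1$ behind.

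Inserting this identity into $\Tr[\tau(\cdot)]$ and using linearity splits the quantity into a term isolating $W-Y$ on system $A$ and a term isolating $X-Z$ on system $B$. I would rewrite the first as $\Tr[(W-Y)\,\tau_A]$, where $\tau_A:=\Tr_B[\tau(\mbbm{1}_A\otimes\tfrac{X+Z}{2})]$ is the effective single-system operator obtained by contracting $\tau$ against $\tfrac{X+Z}{2}$ on $B$; symmetrically the second term equals $\Tr[(X-Z)\,\tau_B]$ with $\tau_B:=\Tr_A[\tau(\tfrac{W+Y}{2}\otimes\mbbm{1}_B)]$. Applying the dual characterisation a second time, now on each single system, yields $\Tr[(W-Y)\tau_A]\leq\Vert W-Y\Vert_1$ and $\Tr[(X-Z)\tau_B]\leq\Vert X-Z\Vert_1$ \emph{provided} $\tau_A$ and $\tau_B$ are dual-feasible, i.e. $-\mbbm{1}\leq\tau_A,\tau_B\leq\mbbm{1}$. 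Summing the two bounds produces exactly the claimed inequality.

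I expect the feasibility check $-\mbbm{1}\leq\tau_A\leq\mbbm{1}$ (and its analogue for $\tau_B$) to be the only delicate step, and it is the sole place the hypothesis $0\leq W,X,Y,Z\leq\mbbm{1}$ enters. To control $\tau_A$ I would test it against an arbitrary unit vector $\ket{a}$, using that $\dyad{a}\otimes\mbbm{1}_B$ commutes with $\mbbm{1}_A\otimes\tfrac{X+Z}{2}$ to obtain $\bra{a}\tau_A\ket{a}=\Tr[(\dyad{a}\otimes\tfrac{X+Z}{2})\,\tau]$. Since $0\leq\tfrac{X+Z}{2}\leq\mbbm{1}$, the operator $\dyad{a}\otimes\tfrac{X+Z}{2}$ is positive, so sandwiching the indefinite $\tau$ between $\pm\mbbm{1}$ bounds $\bra{a}\tau_A\ket{a}$ by $\pm\Tr[\dyad{a}\otimes\tfrac{X+Z}{2}]$; the upper bound $\tfrac{X+Z}{2}\leq\mbbm{1}$ is exactly what keeps this contraction inside the dual-feasible region. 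This is the crux of the proof: one must verify that passing the dual variable through a partial trace does not leave the ball $\{-\mbbm{1}\leq\,\cdot\,\leq\mbbm{1}\}$. Once this is in hand for both $\tau_A$ and $\tau_B$, the two single-system bounds combine additively and the lemma follows.
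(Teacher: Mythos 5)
Your proposal reproduces the paper's proof almost line for line: the same symmetric midpoint decomposition $W\otimes X-Y\otimes Z=(W-Y)\otimes\tfrac{X+Z}{2}+\tfrac{W+Y}{2}\otimes(X-Z)$, the same dual characterisation of the trace norm, and the same reduction to checking that the contracted dual variables $\tau_A,\tau_B$ stay in the unit ball $\{-\mbbm{1}\leq\cdot\leq\mbbm{1}\}$. The one place you go beyond the paper---which merely asserts this feasibility in its ``since'' clause---is your attempted verification, and that is exactly where your argument breaks. Your sandwich correctly yields $|\bra{a}\tau_A\ket{a}|\leq\Tr[\dyad{a}\otimes\tfrac{X+Z}{2}]$, but this trace equals $\Tr[\tfrac{X+Z}{2}]$, which the hypothesis $\tfrac{X+Z}{2}\leq\mbbm{1}$ bounds only by $d_B=\dim B$, not by $1$. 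You have conflated the operator bound $\tfrac{X+Z}{2}\leq\mbbm{1}$ with the trace bound $\Tr[\tfrac{X+Z}{2}]\leq 1$ that the step actually requires, even though you explicitly claim this is ``the sole place the hypothesis $0\leq W,X,Y,Z\leq\mbbm{1}$ enters.''

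No repair is possible under the stated hypotheses, because the inequality itself fails in that generality: take $W=X=Z=\mbbm{1}$ and $Y=0$, so that $\Vert W\otimes X-Y\otimes Z\Vert_1=d_Ad_B$ while $\Vert W-Y\Vert_1+\Vert X-Z\Vert_1=d_A$. (The same example, with $\tau=\mbbm{1}$, shows the feasibility claim in the paper's own proof is unjustified at this level of generality, since then $\Tr_B[\tau(\mbbm{1}\otimes\tfrac{X+Z}{2})]=d_B\,\mbbm{1}$.) The lemma is only ever applied in the paper to reduced density matrices, and under the additional normalisation $\Tr[\tfrac{X+Z}{2}]\leq 1$ and $\Tr[\tfrac{W+Y}{2}]\leq 1$---e.g.\ $W,X,Y,Z$ density operators---your sandwich closes verbatim: $|\bra{a}\tau_A\ket{a}|\leq\Vert\tau\Vert_\infty\Tr[\dyad{a}\otimes\tfrac{X+Z}{2}]\leq 1$, and symmetrically for $\tau_B$. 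So your route is the paper's route and is sound precisely in the regime where the lemma is used; as a proof of the lemma as literally stated, the crux step you yourself flagged is a genuine gap, and the honest fix is to strengthen the hypothesis to a trace normalisation rather than the semidefinite ordering alone.
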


\begin{proof}[Proof of Theorem \ref{thm:discrimination-cond}]
 We assume that Alice and Bob apply local isometries $U^{A\to AB'}$ and $V^{B \to A'B}$, respectively, on their given states. We define $\ket{\alpha_{k}} := U\ket{a_{k}}$ and $\ket{\beta_{k}} := V\ket{b_{k}}$. The simultaneous communication occurs and Alice holds systems $AA'$ while Bob holds systems $BB'$.  The four possible states after the isometries have the form
\begin{align*}
    U\otimes V\ket{\psi_0}^{AB}&=\ket{\alpha_0}^{AB'}\ket{\beta_0}^{A'B}\notag\\
     U\otimes V\ket{\psi_1}^{AB}&=\ket{\alpha_1}^{AB'}\ket{\beta_1}^{A'B}\notag\\
     U\otimes V\ket{\psi_\theta}^{AB}&=\ket{\alpha_2}^{AB'}\ket{\beta_\theta}^{A'B}\notag\\
     U\otimes V\ket{\psi_\omega}^{AB}&=\ket{\alpha_3}^{AB'}\ket{\beta_\omega}^{A'B},
\end{align*}
where $\ket{\beta_{\theta}}^{A'B}=\cos(\theta/2)\ket{\beta_0}+\sin(\theta/2) e^{i\phi}\ket{\beta_1}$ and $\ket{\beta_{\omega}}^{A'B}=\cos(\omega/2)\ket{\beta_0} + \cos(\omega/2)e^{i\phi'}\ket{\beta_1}$.

After the communication, POVMs $\{P_k\}^{AA'}$ and $\{Q_k\}^{BB'}$ are performed by Alice and Bob, respectively.  The overall (unnormalized) success probability is given by
\begin{align*}
    P_S:=& \sum_{k}\Tr\Big[\Big(P_k^{AA'}\otimes Q_k^{BB'})(U^{AB'}\otimes V^{BA'}) \psi_k^{AB}(U^{AB'}\otimes V^{BA'}\Big)^\dagger\Big].
\end{align*}
The completion relation demands that $\sum_k P_k=\mbbm{1}^{AA'}$ and $\sum_k Q_k=\mbbm{1}^{BB'}$. Suppose that 
\begin{align*}
    \Tr\left[(P_k^{AA'}\otimes Q_k^{BB'})\left(\alpha_k^{AB'}\otimes\beta_k^{A'B}\right)\right] \geq 1-\epsilon\qquad\forall k.
\end{align*}
From this we obtain constraints on Alice and Bob's POVM elements:
\begin{align}
    \Tr[P_k(\alpha_k^A\otimes\beta_k^{A'})]&\geq 1-\epsilon\label{Eq:outcome-1a}\\
    \Tr[Q_k(\alpha_k^{B'}\otimes\beta_k^{B})]& \geq 1-\epsilon.\label{Eq:outcome-1b}
\end{align}
Since $P_k\leq \mbbm{1}-P_j$ and $Q_k\leq \mbbm{1}-Q_{j}$ for all $j\not=k$, we use the previous equations to obtain
\begin{align}
        -\Tr[P_{k}(\alpha_j^A\otimes\beta_j^{A'})]>-\epsilon\label{Eq:outcome-2a}\\
    -\Tr[Q_{k}(\alpha_j^{B'}\otimes\beta_j^{B})]>-\epsilon.\label{Eq:outcome-2b}
\end{align}
 Adding Eqns. \eqref{Eq:outcome-2a}--\eqref{Eq:outcome-2b} to Eqns. \eqref{Eq:outcome-1a}--\eqref{Eq:outcome-1b} yields
\begin{align}
    1-2\epsilon&<\Tr\left[P_k\left(\alpha_k^A\otimes\beta_k^{A'}-\alpha_j^A\otimes\beta_j^{A'}\right)\right] \notag \\
    &<\frac{1}{2}\left\Vert \alpha_k^A\otimes\beta_k^{A'}-\alpha_j^A\otimes\beta_j^{A'}\right\Vert_1 \notag \\
    & \leq\sqrt{1-F(\alpha_k^{A},\alpha_j^{A})^2F(\beta_k^{A'},\beta_j^{A'})^2}\label{Eq:split-bound-1}
\end{align}
\begin{align}
     1-2\epsilon&<\Tr\left[Q_k\left(\alpha_k^{B'}\otimes\beta_k^{B}-\alpha_j^{B'}\otimes\beta_j^{B}\right)\right] \notag \\
     &<\frac{1}{2}\left\Vert \alpha_k^{B'}\otimes\beta_k^{B}-\alpha_j^{B'}\otimes\beta_j^{B}\right\Vert_1 \notag \\
     & \leq\sqrt{1-F(\alpha_k^{B'},\alpha_j^{B'})^2F(\beta_k^{B},\beta_j^{B})^2}.\label{Eq:split-bound-2}
\end{align}
This says that the isometries $U$ and $V$ must split the states $\ket{a_k}\ket{b_k}$ and $\ket{a_j}\ket{b_j}$ into parts that are (roughly) mutually orthogonal for both parties, for all pairs $j\not=k$.

Applying Eq.  \eqref{Eq:split-bound-1} on the first two states, $\ket{\alpha_{0}}\ket{\beta_{0}}, \ket{\alpha_{1}}\ket{\beta_{1}}$, yields
\begin{align*}
    1-2\epsilon&\leq\sqrt{1-F(\alpha_0^{A},\alpha_1^{A})^2F(\beta_0^{A'},\beta_1^{A'})^2}\\ 
    &\leq \sqrt{1-|\ip{\alpha_0}{\alpha_1}|^2F(\beta_0^{A'},\beta_1^{A'})^2}\\
    &=\sqrt{1-|\ip{a_0}{a_1}|^2F(\beta_0^{A'},\beta_1^{A'})^2} \ , 
\end{align*}
which under re-ordering means,
\begin{align}
   \Rightarrow\qquad F(\beta_0^{A'},\beta_1^{A'})&\leq \frac{2\sqrt{\epsilon(1-\epsilon)}}{|\ip{a_0}{a_1}|^2} =: \delta .
\end{align}
Applying \eqref{eq:TD-F-uncertainty-relation} of Lemma \ref{lem:main-text-UR} multiplied by two,
    \begin{align}
        & \frac{4|x|\sqrt{\ve(1-\ve)}}{|\langle a_{0} | a_{1} \rangle|^{2}} \notag \\
        >& \|\beta_{\theta}^{B} - \beta_{\omega}^{B} \|_{1} - |w|\|\beta_{0}^{B}-\beta_{1}^{B}\|_{1} \notag \\
        \geq& \|\alpha_{2}^{B'}\otimes \beta_{\theta}^{B} - \alpha_{3}^{B'}\otimes \beta_{\omega}^{B} \|_{1} - \|\alpha_{2}^{B'} - \alpha_{3}^{B'} \|_{1} - |w|\|\beta_{0}^{B} - \beta_{1}^{B}\|_{1} \label{eq:unif-error-proof-int-bound} \ ,
    \end{align}
    where the second inequality is by Lemma \ref{Lem:prod-norm}, $w = \frac{1}{2}(\cos(\theta)-\cos(\omega))$, and $x = \frac{1}{2}(\sin(\theta)e^{-i\phi}-\sin(\omega)e^{-i\phi'})$. Note that $\langle a_{2} | a_{3} \rangle = \langle \alpha_{2} | \alpha_{3} \rangle$ and $\|\alpha_{2}^{B'} - \alpha_{3}^{B'}\|_{1} \leq 2\sqrt{1-|\langle \alpha_{2} | \alpha_{3} \rangle |^{2}}$ by Fuchs-van de Graaf inequality, so $\|\alpha_{2}^{B'} - \alpha_{3}^{B'}\|_{1} \leq \sqrt{1-|\langle a_{2} | a_{3} \rangle|^{2}}$ and similarly $\|\beta_{0}^{B} - \beta_{1}^{B}\|_{1} \leq 2\sqrt{1- |\bra{b_{0}}\ket{b_{1}}|^{2}}$. Then the inequality given in \eqref{eq:unif-error-proof-int-bound} can be relaxed to
    \begin{align*}
         & \frac{4|x|\sqrt{\ve(1-\ve)}}{|\langle a_{0} | a_{1} \rangle|^{2}} + 2\left[\sqrt{1-|\langle a_{2} | a_{3} \rangle|^{2}} + |w|\sqrt{1- |\bra{b_{0}}\ket{b_{1}}|^{2}} \right]  \geq  \|\alpha_{2}^{B'}\otimes \beta_{+}^{B} - \alpha_{3}^{B'}\otimes \beta_{-}^{B} \|_{1} .
    \end{align*}
    We require $1-2\ve < \frac{1}{2}\|\alpha_{2}^{B'}\otimes \beta_{+}^{B} - \alpha_{3}^{B'}\otimes \beta_{-}^{B} \|_{1}$. So in total, we need
    \begin{align*}
       \frac{2|x|\sqrt{\ve(1-\ve)}}{|\langle a_{0} | a_{1} \rangle|^{2}} + \sqrt{1-|\langle a_{2} | a_{3} \rangle|^{2}} + |w|\sqrt{1-|\langle b_{0} | b_{1} \rangle|^{2}}
        > 1 - 2\ve .
    \end{align*}
    \end{proof}

\subsection{Semidefinite Program LOSQC Error Bounds}
The rest of the proofs follow roughly the same idea and tools. We therefore first summarize the proof idea and collect the relevant tools. First, as noted in the main text, one can always upper bound LOSQC success probability by what may be viewed as a generalization of the approximate orthogonality broadcasting measure in the main text, \eqref{Eq:broadcasting-defn}. To see this, consider Alice and Bob are supplied states $\cS = \{\ket{\psi_{k}} \coloneq \ket{a_{k}}^{A}\ket{b_{k}}^{B}\}_{k \in \cK}$ according to some distribution $p$ over $\cK$. For any fixed choices of broadcasting isometries\footnote{The choice of isometries is without loss of generality by considering the isometric extension of the broadcasting channels.} $U,V$, we define $\ket{\alpha_{k}}^{AB'} \coloneq U\ket{a_{k}}$, $\ket{\beta_{k}}^{A'B} \coloneq V\ket{b_{k}}$. Then for a fixed choice of $U,V$, the optimal strategy is
\begin{align}
    \Pr_{\text{LOSQC}}[\cS_{\Theta} \vert U,V]
        =& \max_{ \{\pi^{AA'}_{i}\},\{\tau^{BB'}_{i}\}} \sum_{i} p(i) \Tr[\pi^{\theta}_{i} \otimes \tau^{\theta}_{i}U\otimes V \psi_{i} (U \otimes V)^{\dagger}] \nonumber \\
        \leq & \min_{P \in \{A,B\}} \max_{\{\gamma_{i}^{PP'}\}} \sum_{i} p(i) \Tr[\gamma^{PP'}_{i} \alpha_{i} \otimes \beta_{i} ] \ , \label{eq:local-distinguisihing}
\end{align}
where each maximization is over choices of POVM and the inequality uses that the probability both parties is correct is limited by the worse of the two parties. 

Next, when states are sufficiently equiprobable for it to apply, one may upper bound $\sum_{i} \frac{1}{2} \Tr[\gamma^{PP'}_{i} \alpha_{i} \otimes \beta_{i} ]$ in terms of distinguishing states pairwise as is necessary to apply the uncertainty relation. For example,
\begin{align}
    & \sum_{i \in \{1,2,3,4\}} \frac{1}{2}\Tr[\gamma^{PP'}_{i} \alpha_{i} \otimes \beta_{i}] \nonumber \\
    \leq& \frac{1}{2}\Big\{\Tr[\gamma^{PP'}_{1}\alpha_{1} \otimes \beta_{1}] + \Tr[( \mbbm{1}-\gamma^{PP'}_{1})\alpha_{2} \otimes \beta_{2}] \Big\} \nonumber
    \\
    & + \frac{1}{2}\Big\{\Tr[\gamma^{PP'}_{3}\alpha_{3} \otimes \beta_{3}] + \Tr[( \mbbm{1}-\gamma^{PP'}_{4})\alpha_{4} \otimes \beta_{4}] \Big\} \nonumber \\
    \leq& \max_{0 \leq \pi \leq 1} \frac{1}{2}\Big\{\Tr[\pi \alpha_{1} \otimes \beta_{1}] + \Tr[( \mbbm{1}-\pi)\alpha_{2} \otimes \beta_{2}] \Big\} \nonumber \\
    & + \max_{0 \leq \tau \leq 1} \frac{1}{2}\Big\{\Tr[\tau \alpha_{3} \otimes \beta_{3}] + \Tr[( \mbbm{1}-\tau)\alpha_{4} \otimes \beta_{4}] \Big\} \nonumber \\
    =& 1/2(1+D_{\tr}(\alpha_1 \otimes \beta_1, \alpha_2 \otimes \alpha_2)) + 1/2(1+D_{\tr}(\alpha_3 \otimes \beta_3, \alpha_4 \otimes \alpha_4)) \nonumber \\
    =& \frac{1}{2} + \frac{1}{2}\Big[D_{\tr}(\alpha_1 \otimes \beta_1, \alpha_2 \otimes \alpha_2) + D_{\tr}(\alpha_3 \otimes \beta_3, \alpha_4 \otimes \alpha_4)\Big] \ , \label{eq:pairwise-decomposition}
\end{align}
where the equality is by the Holevo-Helstrom theorem and we have omitted the $P,P'$ superscripts for simplicity.

One may then split over the tensor product in the trace distance and convert to the guessing probability, i.e.
\begin{equation}
\begin{aligned} 
D_{\tr}(\alpha_{\theta} \otimes \beta_{\theta},\alpha_{\omega} \otimes \beta_{\omega}) 
\leq& D_{\Tr}(\alpha_{\theta},\alpha_{\omega}) + D_{\Tr}(\beta_{\theta},\beta_{\omega}) \\
=& 2[p_{g}(\alpha_{\theta},\alpha_{\omega}) + p_{g}(\beta_{\theta},\beta_{\omega}) - 1] \label{eq:prod-splitting}
\end{aligned}
\end{equation}
where the first follows from dividing Lemma \ref{Lem:prod-norm} by two and the second is by the Holevo-Helstrom theorem.

At this point the remaining idea is to treat the guessing probabilities as optimization variables. One may place bounds on these guessing probabilities in terms of the uncertainty relation for guessing probability, which are constraints that are independent of the choice of $U,V$. Thus, one reduces the bounds to optimizing over probabilities. This will then be a semidefinite program over these probabilities in disciplined convex programming (DCP) form \cite{Grant-2006a} and then may be solved using CVXPY \cite{Diamond-2016a,Agrawal-2018a}.

\begin{proof}[Proof of Proposition \ref{prop:SDP-version-of-Thm-4}]
Following the proof of Theorem \ref{thm:discrimination-cond}, without loss of generality,
    \begin{align*}
    U\otimes V\ket{\psi_0}^{AB}&=\ket{\alpha_0}^{AB'}\ket{\beta_0}^{A'B}\notag\\
     U\otimes V\ket{\psi_1}^{AB}&=\ket{\alpha_1}^{AB'}\ket{\beta_1}^{A'B}\notag\\
     U\otimes V\ket{\psi_\theta}^{AB}&=\ket{\alpha_2}^{AB'}\ket{\beta_\theta}^{A'B}\notag\\
     U\otimes V\ket{\psi_\omega}^{AB}&=\ket{\alpha_3}^{AB'}\ket{\beta_\omega}^{A'B},
\end{align*}
where $\ket{\beta_{\theta}}^{A'B}=\cos(\theta/2)\ket{\beta_0}+\sin(\theta/2) e^{i\phi}\ket{\beta_1}$ and $\ket{\beta_{\omega}}^{A'B}=\cos(\omega/2)\ket{\beta_0} + \cos(\omega/2)e^{i\phi'}\ket{\beta_1}$.

Then, for some a given choice of $U,V$, starting from \eqref{eq:local-distinguisihing},
\begin{align}
    & \min_{P \in \{A,B\}} \max_{\{\gamma_{i}^{PP'}\}} \sum_{i} p(i) \Tr[\gamma^{PP'}_{i} \alpha_{i} \otimes \beta_{i} ] \nonumber \\
    \leq& \frac{1}{2} + \min_{P \in \{A,B\}} \frac{p}{2} D_{\tr}(\alpha_{0} \otimes \beta_{0},\alpha_{1} \otimes \beta_{1}) + \frac{1-p}{2} D_{\tr}(\alpha_{\theta} \otimes \beta_{\theta},\alpha_{\omega} \otimes \beta_{\omega}) \ , \label{eq:LOSQC-SDP-bound-1}
\end{align}
where the inequality uses that $p(0) = p(1) = p/2$, $p(\theta) = p(\omega) = (1-p)/2$ and a similar argument to that in \eqref{eq:pairwise-decomposition}. Applying \eqref{eq:prod-splitting} and simplifying we obtain
\begin{align*}
    & \Pr_{\text{LOSQC}}[\cS_{\Theta} \vert U,V] \\
    \leq& \min_{P \in \{A,B\}} p \cdot [p_{g}(\alpha_{0},\alpha_{1}) + p_{g}(\beta_{0},\beta_{1})] + (1-p) [p_{g}(\alpha_{\theta},\alpha_{\omega}) + p_{g}(\beta_{\theta},\beta_{\omega})] - 1 \ . 
\end{align*}
Thus, we have
\begin{align*}
    \Pr_{\text{LOSQC}}[\cS_{\Theta}]
    \leq& \max_{U,V} \min_{P \in \{A,B\}} p \cdot [p_{g}(\alpha_{0},\alpha_{1}) + p_{g}(\beta_{0},\beta_{1})] \\
    & \hspace{2.3cm} + (1-p) [p_{g}(\alpha_{\theta},\alpha_{\omega}) + p_{g}(\beta_{\theta},\beta_{\omega})] - 1 \ . 
\end{align*}
By data-processing, $p_{g}(\alpha_{0},\alpha_{1}) \leq p_{g}(\psi_0, \psi_{1})$, $p_{g}(\alpha_{\theta},\alpha_{\omega}) \leq p_{g}(\psi_\theta , \psi_{\omega})$, which are determinable constants using Holevo-Helstrom. Moreover, we may apply Corollary \ref{cor:guessing-probability-UR} to $\beta_\theta, \beta_\omega$ in terms of $\beta_0 , \beta_1$ for any choice of $U,V$. Therefore, defining $r_0 \coloneq p_{g}(\beta_{0}^{A'},\beta_{1}^{A'})$, $r_1 \coloneq p_{g}(\beta_{0}^{B},\beta_{1}^{B})$ and similarly for $s_0, s_1$ in terms of Bob completes the derivation of the optimization program. 

To see the optimization problem is convex, note that the constraints are concave in the optimization variables and the objective function includes the pointwise minimization of linear functions, i.e. it is a concave objective function. As the problem is a maximization, we may conclude it is a convex optimization program. 

To see the unity conditions, note that $f(p,\cS,\mbf{x}) = 1$ if and only if $p_{g}(a_{0},a_{1}) = p_{g}(a_3 , a_4) = x_0 = x_1 = 1$. This proves the orthogonality conditions and proves $s_0, s_1, r_0, r_1$ must all be unity. Looking at the constraints, under everything being unity, one obtains $1 \leq \frac{1}{2}[\vert z_2 \vert + 1] \Leftrightarrow 1 = \vert z_2 \vert = \vert \cos(\theta) - \cos(\omega) \vert$, which is equivalent to the stated condition.
\end{proof}

\begin{proof}[Proof of Theorem \ref{thm:Overlapping-BB84}]
     As usual, we let $U \otimes V\ket{\psi_{i}}^{AB} = \ket{\alpha_{i}}^{AB'}\ket{\beta_{i}}^{A'B}$ for each $i$. We do not make use of the fact that one register is classical for the majority of the proof as it will simplify the presentation of the subsequent proof.
    
    Starting from \eqref{eq:local-distinguisihing},
    \begin{align*}
        & \min_{P \in \{A,B\}} \max_{\{\gamma_{i}^{PP'}\}} \sum_{i} p(i) \Tr[\gamma^{PP'}_{i} \alpha_{i} \otimes \beta_{i} ] \nonumber \\
        =& \min_{P \in \{A,B\}} \max_{\{\gamma_{i}^{PP'}\}} \Big[\frac{1}{8}\sum_{i \in \{1,2,3,4\}} \Tr[\gamma^{PP'}_{i} \alpha_{i} \otimes \beta_{i}] + \frac{1}{8}\sum_{i \in \{1,5,6,7\}} \Tr[\gamma^{PP'}_{i} \alpha_{i} \otimes \beta_{i}] \Big] \\
        =&  \frac{1}{4} \min_{P \in \{A,B\}} \max_{\{\gamma_{i}^{PP'}\}} \Big[\sum_{i \in \{1,2,3,4\}} \frac{1}{2}\Tr[\gamma^{PP'}_{i} \alpha_{i} \otimes \beta_{i}] + \sum_{i \in \{1,5,6,7\}} \frac{1}{2}\Tr[\gamma^{PP'}_{i} \alpha_{i} \otimes \beta_{i}] \Big] \ ,
    \end{align*}
    where we have used our assumption on the probabilities so that we may split distinguishing $\psi_{1}$ into each sum.
    
    Using the bound of the form \eqref{eq:pairwise-decomposition} in both cases, we obtain
    \begin{align}
        & \Pr_{\text{LOSQC}}[\cS_{\Theta} \vert U,V] \nonumber \\
        \leq& \frac{1}{4} + \frac{1}{8} \min_{P \in \{A,B\}} \Big[\sum_{i \in \{1,3,6\}} D_{\tr}(\alpha_{i} \otimes \beta_{i}, \alpha_{i+1} \otimes \beta_{i+1}) + D_{\tr}(\alpha_{1} \otimes \beta_{1}, \alpha_{5} \otimes \beta_{5}) \Big] \ ,
    \end{align}
    where we have distributed the $1/4$.

    Next, introducing $p_{g}^{P}$ to denote the guessing probability of party $P$ with access to their portions of the state, by \eqref{eq:prod-splitting},
    \begin{align*} 
        D_{\tr}(\alpha_{\theta} \otimes \beta_{\theta},\alpha_{\omega} \otimes \beta_{\omega})
        \leq& 2[p_{g}^{P}(\alpha_{\theta},\alpha_{\omega}) + p_{g}^{P}(\beta_{\theta},\beta_{\omega}) - 1] \ ,
    \end{align*}
    so we obtain
    \begin{align}
        \Pr_{\text{LOSQC}}[\cS_{\Theta} \vert U,V] 
        \leq& \frac{1}{4} + \frac{1}{4}\min_{P \in \{A,B\}} \Big[\sum_{i \in \{1,3,6\}} \{p_{g}^{P}(\alpha_{i},\alpha_{i+1}) + p_{g}^{P}(\beta_{i},\beta_{i+1})\} \nonumber \\
        &\hspace{2.5cm} + p_{g}^{P}(\alpha_{1},\alpha_{5}) + p_{g}^{P}(\beta_{1}, \beta_{5}) - 4\Big] \ . \label{eq:Bennett-1-c-LOSQC-bound-1}
    \end{align}

    At this point we note that, by construction, the $\alpha_{i}$ term is always the same as the one it is being compared to, so $p_{g}^{P}(\alpha_{i},\alpha_{i+1}) = 1/2 = p_{g}^{P}(\alpha_{1},\alpha_{5})$ for $i \in \{1,3,6\}$ a term is always $1/2$. This simplifies the objective function to
    \begin{align}
        \Pr_{\text{LOSQC}}[\cS_{\Theta} \vert U,V]
        \leq& \frac{1}{4} + \frac{1}{4}\min_{P \in \{A,B\}} \Big[\sum_{i \in \{1,3,6\}} \{p_{g}^{P}(\beta_{i},\beta_{i+1})\} + p_{g}^{P}(\beta_{1},\beta_{5}) - 2\Big] \ . \label{eq:Bennett-1-LOSQC-c-bound-2}
    \end{align}
    Then to bound the optimal strategy we would optimize over the choice of $U,V$. However the bound is now independent of $U$ and we may replace maximizing over $V$ via constraints that hold from Corollary \ref{cor:guessing-probability-UR}, which apply for any choice of $V$. We define $p_{0} \coloneq p_{g}^{P}(V\ket{1},V\ket{2})$, $p_{1} \coloneq p_{g}^{P}(V\ket{1+2},V\ket{1-2})$, $p_{2} \coloneq p_{g}^{P}(V\ket{1},V\ket{0})$, $p_{3} \coloneq p_{g}^{P}(V\ket{0+1},V\ket{0-1})$ for $P \in \{A,B\}$. Applying Lemma \ref{lem:main-text-UR} to \eqref{eq:Bennett-1-LOSQC-c-bound-2}, we obtain $4(\Pr_{LOSQC}[\cS] - 1/4)$ is upper bounded by
    \begin{equation}
        \begin{aligned}
            \max \; \; & \min \Big\{\sum_{i \in [3]} a_i , \sum_{i \in [3]} b_i \Big\} -2 \\
            \text{s.t.} \; \; & p_0 \leq \frac{1}{2}\sqrt{1-(2\ol{p}_{1}-1)^{2}} + \frac{1}{2} \quad p \in \{a,b\} \\
            & p_1 \leq \frac{1}{2}\sqrt{1-(2\ol{p}_{0}-1)^{2}} + \frac{1}{2} \quad p \in \{a,b\} \\
            & p_2 \leq \frac{1}{2}\sqrt{1-(2\ol{p}_{3}-1)^{2}} + \frac{1}{2} \quad p \in \{a,b\} \\
            & p_3 \leq \frac{1}{2}\sqrt{1-(2\ol{p}_{2}-1)^{2}} + \frac{1}{2} \quad p \in \{a,b\} \\
            & 0 \leq \mbf{a}, \mbf{b} \leq 1 \ ,
        \end{aligned}
    \end{equation}
    where $[3] = \{0,1,2,3\}$, $\ol{p}$ represents the opposite party, and in applying Lemma \ref{lem:main-text-UR} we have used $\ket{i \pm j} = \frac{1}{\sqrt{2}}\ket{i} \pm \frac{1}{\sqrt{2}}\ket{j}$ and $\ket{i} = \frac{1}{\sqrt{2}}[\ket{i+j} + \ket{i-j}]$ so that $z_{1} = 1$ and $z_2 = 0$ always. Finally, we directly solve this SDP using CVXPY.
\end{proof}
 
As presented in the main text, our method is general enough to apply to the case where both Alice and Bob receive non-commuting states. Here we present a lemma that will both allow us to establish the separation in Theorem \ref{thm:qq-qc-separation} as well as explain why we cannot decrease the success probability of Theorem \ref{thm:Overlapping-BB84} by adding coherence to Alice's side using our methodology.
\begin{lemma}\label{lem:qq-bound}
    Consider the set of globally orthogonal, fully quantum, product qutrit states
    \begin{align}\label{eq:alt-CS-qq}
		\wt{\cS}_{qq} = \begin{Bmatrix}
			\ket{\psi_{1}} = \ket{1}\ket{1} \\ 
            \ket{\psi_{2}} = \ket{1-2}\ket{2}   &
			\ket{\psi_{3}} = \ket{0}\ket{1+2}   \\
            \ket{\psi_{4}} = \ket{0}\ket{1-2}  &
			\ket{\psi_{5}} = \ket{0+1}\ket{0}   \\
            \ket{\psi_{6}} = \ket{2}\ket{0+1}   & \ket{\psi_{7}} = \ket{2}\ket{0-1}  
		\end{Bmatrix} \ ,
\end{align}
where $\ket{\psi_{1}}$ occurs with probability $1/4$ and the rest occur with probability $1/8$. Then 
\begin{align}\label{eq:LOSQC-separation-strong}
    \Pr_{\text{LOSQC}}[\wt{\cS}_{qq}] \leq 0.78033 \ .
\end{align}
\end{lemma}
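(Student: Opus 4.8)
The plan is to follow the semidefinite-programming methodology used for Theorem~\ref{thm:Overlapping-BB84}, now accounting for the coherence carried on Alice's side by $\ket{\psi_2}$ and $\ket{\psi_5}$. First I would fix arbitrary broadcasting isometries $U\colon A\to AB'$ and $V\colon B\to A'B$, set $\ket{\alpha_k}^{AB'}=U\ket{a_k}$ and $\ket{\beta_k}^{A'B}=V\ket{b_k}$, and invoke the reduction \eqref{eq:local-distinguisihing} so that $\Pr_{\LOSQC}[\wt{\cS}_{qq}\mid U,V]$ is bounded by the smaller of the two parties' guessing objectives. The key structural observation is that $\wt{\cS}_{qq}$ contains two overlapping $\BB$-type quadruples sharing $\ket{\psi_1}$: on Bob's $\{1,2\}$ subspace the set $\{\psi_1,\psi_2,\psi_3,\psi_4\}$ carries the Bob-states $\ket{1},\ket{2},\ket{1+2},\ket{1-2}$, and on Bob's $\{0,1\}$ subspace the set $\{\psi_5,\psi_1,\psi_6,\psi_7\}$ carries the Bob-states $\ket{0},\ket{1},\ket{0+1},\ket{0-1}$. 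Since $\ket{\psi_1}$ occurs with probability $1/4$ and the rest with $1/8$, I would split $\ket{\psi_1}$ evenly between the two quadruples, turning each into an equiprobable four-state subproblem.

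Within each quadruple I would pair the two superposition states together and the two basis states together --- for $\{1,2,3,4\}$ the pairs $(\psi_3,\psi_4)$ and $(\psi_1,\psi_2)$, for $\{1,5,6,7\}$ the pairs $(\psi_6,\psi_7)$ and $(\psi_1,\psi_5)$ --- and apply the pairwise bound \eqref{eq:pairwise-decomposition} followed by the product splitting \eqref{eq:prod-splitting}, so that each contribution becomes a sum of single-system guessing probabilities for each party. Several terms collapse: on $(\psi_3,\psi_4)$ and $(\psi_6,\psi_7)$ the two states agree on Alice's side (both $\ket{0}$, resp.\ both $\ket{2}$), forcing the Alice guessing probability to $1/2$, while on the basis pairs $(\psi_1,\psi_2)$ and $(\psi_1,\psi_5)$ the coherence of $\ket{\psi_2}$ and $\ket{\psi_5}$ caps the cross-broadcast contributions by data processing, $p_g(\alpha_1^{B'},\alpha_2^{B'})\le p_g(\ket{1},\ket{1-2})=\tfrac{1}{2}(1+\tfrac{1}{\sqrt{2}})$ and likewise for $(\psi_1,\psi_5)$. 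This capping is exactly the feature absent from the classical-quantum set $\cS_{cq}$ --- where Alice's forwarded information is perfectly recoverable --- and is what should drive the success probability strictly below $\tfrac{1}{2}(1+\tfrac{1}{\sqrt{2}})$. The genuinely free quantities, namely Bob distinguishing his superposition versus basis states and Alice distinguishing the corresponding states of Bob's broadcast, are then constrained by both orientations of the guessing-probability uncertainty relation (Corollary~\ref{cor:guessing-probability-UR}) with $z_1=1,z_2=0$, exactly as in the OBB derivation.

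Because every such constraint holds for all isometries $U,V$, I would then discard the maximization over $U,V$ in favour of a maximization over the vector of guessing probabilities subject to these uncertainty-relation constraints and data-processing caps, with the objective equal to the pointwise minimum over $P\in\{A,B\}$ of the corresponding affine expressions. The circular-arc constraints $p\le\tfrac{1}{2}\sqrt{1-(2\bar p-1)^2}+\tfrac{1}{2}$ are concave and the objective is a minimum of affine functions, so the program is convex and in disciplined-convex-programming form; solving it numerically with CVXPY should return the stated value $0.78033$. I expect the main obstacle to be the bookkeeping of the asymmetry between the two parties: unlike Theorem~\ref{thm:Overlapping-BB84}, Alice here also carries coherence, so one must decide pair by pair which guessing probabilities are genuine variables coupled across the parties by the uncertainty relation, which are capped data-processing constants arising from Alice's own (only partial) $\BB$-structure, and which are forced to $1/2$ --- and then verify that the assembled program is tight enough that its optimum lands strictly below $\tfrac{1}{2}(1+\tfrac{1}{\sqrt{2}})$, certifying the claimed separation.
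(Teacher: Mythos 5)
Your proposal follows the paper's proof essentially step for step: the same even splitting of $\ket{\psi_1}$ across the two quadruples, the same pairing $(\psi_1,\psi_2),(\psi_3,\psi_4),(\psi_1,\psi_5),(\psi_6,\psi_7)$, the same data-processing cap $p_g(\ket{1},\ket{1-2})=\tfrac{1}{2}(1+\tfrac{1}{\sqrt{2}})$ on the Alice-side terms, and the same Bob-side uncertainty-relation SDP (with $z_1=1$, $z_2=0$) solved numerically via CVXPY. The one interpretive slip is that the cap at $\tfrac{1}{2}(1+\tfrac{1}{\sqrt{2}})$ actually \emph{loosens} the bound relative to the OBB case (where those Alice-side terms equal $\tfrac{1}{2}$), so what keeps the optimum below $\tfrac{1}{2}(1+\tfrac{1}{\sqrt{2}})$ is the Bob-side uncertainty constraints rather than the cap itself; this does not affect the correctness of the argument.
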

\begin{proof}
    First, note that for all $i$, $\ket{b_{i}}$ in $\cS_{qq}$ is the same as $\ket{b_{i}}$ in $\cS_{O\BB}$ as given in \eqref{eq:overlapping-BB84-states}. This means the majority of the proof is identical to the previous. In particular, we may start at \eqref{eq:Bennett-1-c-LOSQC-bound-1}. However, we now make choices on how to bound the probabilities on Alice's side that are no longer identical states. In particular, we may use data-processing to bound the guessing probability by the original state, e.g.
    \begin{align*}
        p_{g}^{P}(W\ket{1},W\ket{1-2}) \leq& p_{g}(\ket{1},\ket{1-2}) \\
        =& \frac{1}{2}(1+ \frac{1}{2}\Vert \dyad{1} - \dyad{1-2} \Vert_{1})
        = \frac{1}{2}(1+ \frac{1}{\sqrt{2}}) \eqqcolon t \ .
    \end{align*}
    Then we obtain
    \begin{align}
        & \sum_{i \in \{1,3,6\}} \{p_{g}^{P}(\alpha_{i},\alpha_{i+1}) + p_{g}^{P}(\beta_{i},\beta_{i+1})\} + p_{g}^{P}(\alpha_{1},\alpha_{5}) + p_{g}^{P}(\beta_{1}, \beta_{5}) - 4\Big] \nonumber \\
        =& \; t + p_{g}^{P}(V\ket{1},V\ket{2}) + \frac{1}{2} + p_{g}^{P}(V\ket{1+2},V\ket{1-2}) 
         \nonumber \\
        &\hspace{5mm} + \frac{1}{2} + p_{g}^{P}(V\ket{0+1},V\ket{0-1}) + t + p_{g}^{P}(V\ket{0},V\ket{1}) - 4  \nonumber \\
        =& \; p_{g}^{P}(V\ket{1},V\ket{2}) + p_{g}^{P}(V\ket{1+2},V\ket{1-2})
         \nonumber \\
        &\hspace{5mm} +  p_{g}^{P}(V\ket{0},V\ket{1}) + p_{g}^{P}(V\ket{0+1},V\ket{0-1}) + \frac{1}{\sqrt{2}} - 2 \ . \label{eq:Bennett-1-LOSQC-bound-2}
    \end{align}

    By the same argument as in the previous proof, this ultimately results in an SDP the upper bounds $4(\Pr_{\LOSQC}[\cS_{qq}]-1/4)$:
    \begin{equation}
        \begin{aligned}
            \max \; \; & \min \Big\{\sum_{i \in [3]} a_i , \sum_{i \in [3]} b_i \Big\} + \frac{1}{\sqrt{2}}-2 \\
            \text{s.t.} \; \; & p_0 \leq \frac{1}{2}\sqrt{1-(2\ol{p}_{1}-1)^{2}} + \frac{1}{2} \quad p \in \{a,b\} \\
            & p_1 \leq \frac{1}{2}\sqrt{1-(2\ol{p}_{0}-1)^{2}} + \frac{1}{2} \quad p \in \{a,b\} \\
            & p_2 \leq \frac{1}{2}\sqrt{1-(2\ol{p}_{3}-1)^{2}} + \frac{1}{2} \quad p \in \{a,b\} \\
            & p_3 \leq \frac{1}{2}\sqrt{1-(2\ol{p}_{2}-1)^{2}} + \frac{1}{2} \quad p \in \{a,b\} \\
            & 0 \leq \mbf{a}, \mbf{b} \leq 1 \ ,
        \end{aligned}
    \end{equation}
    where $\ol{p}$ represents the opposite party, and in applying Lemma \ref{lem:main-text-UR} we have used $\ket{i \pm j} = \frac{1}{\sqrt{2}}\ket{i} \pm \frac{1}{\sqrt{2}}\ket{j}$ and $\ket{i} = \frac{1}{\sqrt{2}}[\ket{i+j} + \ket{i-j}]$ so that $z_{1} = 1$ and $z_2 = 0$ always. Finally, this is a semidefinite program which we solved using CVXPY.
\end{proof}

We now use Lemma \ref{lem:qq-bound} to prove Theorem \ref{thm:qq-qc-separation}.
\begin{proof}[Proof of Theorem \ref{thm:qq-qc-separation}]
    Note that $\cS_{qq}$ in \eqref{eq:qq-states} is equivalent to $\wt{\cS}_{qq}$ in \eqref{eq:alt-CS-qq} up to applying the local unitary on Alice's side defined by
    $$ U\ket{0} = \ket{2} \quad U\ket{1}=\ket{1} \quad U\ket{2} = \ket{0} \ . $$
    Thus, we know $\Pr_{\LOSQC}[\cS_{qq}] \leq 0.7805$ by Lemma \ref{lem:qq-bound}. It therefore suffices to prove $\Pr_{\LOSQC}[\cS_{cq}] \geq \frac{1}{2}\left(1 + \frac{1}{\sqrt{2}}\right)$ for $\cS_{cq}$ given in \eqref{eq:cq-states}. We do this by construction of an LOSCC strategy achieving this lower bound.
    
    Consider the POVM $\{\op{\varphi_y}{\varphi_y}\}_{y=1}^4$ defined by vectors
\begin{align}
    \ket{\varphi_1}&=\frac{1}{\sqrt{2}}[\cos(\pi/8)\ket{1}+\sin(\pi/8)(\ket{0}+\ket{2})],\notag\\
    \ket{\varphi_2}&=\frac{1}{\sqrt{2}}[\cos(\pi/8)\ket{1}+\sin(\pi/8)(\ket{0}-\ket{2})],\notag\\
    \ket{\varphi_3}&=\frac{1}{\sqrt{2}}[\sin(\pi/8)\ket{1}-\cos(\pi/8)(\ket{0}+\ket{2})],\notag\\
    \ket{\varphi_4}&=\frac{1}{\sqrt{2}}[\sin(\pi/8)\ket{1}-\cos(\pi/8)(\ket{0}-\ket{2})].
\end{align}
Our LOSCC protocol involves Alice measuring in the computational basis (obtaining outcome $x$) and Bob performing the above POVM (obtaining otucome $y$).  Their guessing strategy is given by the following table:
\begin{center}
\begin{tabular}{ |c|c|c| }  \hline
 Outcomes ($x,y$) & Guess & Prob. the guess is correct \\ 
 \hline
  $\{(1,1), (1,2)\}$ & $\ket{\psi_1}$ & $\cos^2(\pi/8)$\\
 \hline
 $\{(1,3)\}$ & $\ket{\psi_{2}}$ & $\cos^{2}(\pi/8)$ \\ \hline
 $\{(1,4)\}$ & $\ket{\psi_{3}}$ & $\cos^{2}(\pi/8)$ \\ \hline
 $\{(0,1),(0,2)\}$ & $\ket{\psi_{4}}$ & $\frac{1}{2}(\cos(\pi/8)+\sin(\pi/8))^{2}$ \\ \hline
 $\{(0,3),(0,4)\}$ & $\ket{\psi_{5}}$ & $\frac{1}{2}(\cos(\pi/8)+\sin(\pi/8))^{2}$ \\ \hline
 $\{(2,1),(2,2)\}$ & $\ket{\psi_{6}}$ & $\frac{1}{2}(\cos(\pi/8)+\sin(\pi/8))^{2}$ \\ \hline
 $\{(2,3),(2,4)\}$ & $\ket{\psi_{7}}$ & $\frac{1}{2}(\cos(\pi/8)+\sin(\pi/8))^{2}$ \\ \hline
\end{tabular}
\end{center}
Noting that $\tfrac{1}{2}(\cos(\pi/8)+\sin(\pi/8))^2=\cos^2(\pi/8)$, using this strategy, regardless of the priors, the success probability is $\cos^{2}(\pi/8) = \frac{1}{2}(1 + \frac{1}{\sqrt{2}})$.  This proves that $P_{\text{LOSQC}}(\mc{S}_{cq})\geq P_{\text{LOSCC}}(\mc{S}_{cq})\geq \frac{1}{2}(1+\tfrac{1}{\sqrt{2}}))$. This completes the proof.
\end{proof}

Finally, we explain why Lemma \ref{lem:qq-bound} shows that our methodology is too loose to amplify Theorem \ref{thm:Overlapping-BB84} by making both parties have to do approximate broadcasting. First, note that one may obtain $\wt{\cS}_{qq}$ from $\cS_{OBB}$ in \eqref{eq:overlapping-BB84-states} from first applying the unitary 
$$ U\ket{0} = \ket{1} \quad U\ket{1}=\ket{0} \quad U\ket{2} = \ket{2} $$
to Alice's side and then altering $U\ket{a_{2}} \to \ket{1-2}$, $U\ket{a_{5}} \to \ket{0+1}$. This is to say, up to a local unitary, we may view $\wt{S}_{qq}$ as a method for making $\cS_{OBB}$ coherent on both sides. Then, we see the reason this does not tighten the result in this scenario is that our bound on $p^{P}_{g}(W\ket{1},W\ket{1-2}) > 1/2$ where the latter value is what we could use in Theorem \ref{thm:Overlapping-BB84}. It is not clear how to improve this using the uncertainty relation. In particular, if one applies Theorem \ref{thm:generalized-uncertainty-relation} to these states in terms of $W\ket{0},W\ket{1},W\ket{2}$, a direct calculation shows
\begin{align*}
    D_{\tr}((W\ket{0})^{B},(W\ket{1})^{B})
    \leq& \frac{1}{2} + 2F((W\ket{0})^{A},(W\ket{1})^{A}) \\
    \leq& \frac{1}{2} + 2\sqrt{1-(2p_{g}^{A}(W\ket{0},W\ket{1})-1)^{2}}
    \ ,
\end{align*}
where we used Fuchs-van de Graaf inequality and Holevo-Helstrom theorem. A direct calculation will verify this constraint is non-trivial (i.e. strictly less than unity) only when $p_{g}^{A}(W\ket{0},W\ket{1}) \geq 0.984123$, but the other constraints already guarantee that it is smaller than this. This seems to be a common issue in trying to apply Theorem \ref{thm:generalized-uncertainty-relation} more generally.
\end{document}